\newcommand{\cH}{\mathcal{H}}
\newcommand{\ud}{\mathrm{d}}
\title{ Multiplicity of self-adjoint realisations \\ of the (2+1)-fermionic model \\ of Ter-Martirosyan--Skornyakov type }
\author{ Alessandro Michelangeli, \thanks{ \;\; Partially supported by the 2014-2017 MIUR-FIR grant ``\emph{Cond-Math: Condensed Matter and Mathematical Physics}'' code RBFR13WAET.}\\ SISSA -- International School for Advanced Studies \\
				Via Bonomea 265, 34136 Trieste (Italy) \\ e-mail: alemiche@sissa.it\\[2ex]\\
	Andrea Ottolini, \\ Department of Mathematics, Stanford University\\ 450 Serra Mall, Stanford CA 94305 \\ e-mail: ottolini@stanford.edu
	}
\begin{document}
\maketitle
\begin{center}\textbf{SISSA preprint 65/2016/MATE (2016)}
\end{center}
\bigskip\bigskip
\begin{abstract}
We reconstruct the whole family of self-adjoint Hamiltonians of Ter-Martirosyan--Skornyakov type for a system of two identical fermions coupled with a third particle of different nature through an interaction of zero range. We proceed through an operator-theoretic approach based on the self-adjoint extension theory of Kre{\u\i}n, Vi\v{s}ik, and Birman. We identify the explicit `Kre{\u\i}n-Vi\v{s}ik-Birman extension parameter' as an operator on the `space of charges' for this model (the `Kre{\u\i}n space') and we come to formulate a sharp conjecture on the dimensionality of its kernel. Based on our conjecture, for which we also discuss an amount of evidence, we explain the emergence of a multiplicity of extensions in a suitable regime of masses and we reproduce for the first time the previous partial constructions obtained by means of an alternative quadratic form approach.
\end{abstract}

\noindent
{\bf Keywords:} Point interactions. Singular perturbations of the Laplacian. Self-adjoint extensions. Kre{\u\i}n-Vi\v{s}ik-Birman theory. Ter-Martirosyan--Skornyakov operators. Fermionic models of zero-range interactions.


\section{Introduction}

The so-called \emph{singular perturbations} of the multi-variable Laplacian 
$\Delta\equiv \sum_{n=1}^N (\frac{1}{2m_j}\Delta_{x_n})$ on the Hilbert space $\cH_N:=\bigotimes_{n=1}^N L^2(\mathbb{R}^d,\ud x_n)$, with mass parameters $m_1,\dots,m_N>0$, are customarily referred to as self-adjoint extensions of the restrictions $\Delta\upharpoonright C^\infty_0(\mathbb{R}^{Nd}\setminus\Gamma_\mathscr{C})$: these are restrictions of $\Delta$ to smooth functions that are compactly supported away from the `singular manifold' $\Gamma_\mathscr{C}:=\bigcup_{(i,j)\in\mathscr{C}}\Gamma_{ij}$ associated with a given collection $\mathscr{C}$ of couples $(i,j)$ of distinct variables $i,j\in\{1,\dots,N\}$, where $\Gamma_{ij}:=\{x_i=x_j\}$ denotes the coincidence hyperplane between variable $x_i$ and $x_j$. In the applications the extension is taken in whole space $\cH_N$ or also, depending on the context, in a suitable Hilbert subspace of $\cH_N$ determined by the additional prescription of permutation symmetry or anti-symmetry on a sub-collections of variables.

In Quantum Mechanics singular perturbations of the (negative) Laplacian have the natural interpretation of quantum Hamiltonians for systems of particles subject to a two-body `\emph{point}' (or `\emph{contact}', or `\emph{zero range}') interaction between all pairs of the collection $\mathscr{C}$, since each such extension acts by construction as the free Hamiltonian on wave-functions supported away from the considered coincidence hyperplanes. (The self-adjoint Laplacian is the trivial extension and describes a system of $N$ non-interacting particles.)

In the concrete case $N=2$, $\mathscr{C}=\{(1,2)\}$, singular perturbations model a two-body system with point interaction; equivalently, factoring out the centre of mass and working in the Hilbert space $L^2(\mathbb{R}^d,\ud x)$ of the relative variable $x\equiv x_1-x_2$, they describe the motion of one particle subject to a point interaction supported at the origin. It is well-known \cite{albeverio-solvable} that the self-adjoint extensions in $L^2(\mathbb{R}^d)$ of the restriction of $\Delta$ to $C^\infty_0(\mathbb{R}^d\!\setminus\!\{0\})$ form a four-parameter family when $d=1$, a one-parameter family when $d=2,3$, and the trivial family, consisting only of the self-adjoint Laplacian, when $d\geqslant 4$.

We are going to focus more concretely on the three-dimensional setting: $d=3$. As opposite to $N=2$, where upon separating the centre of mass the deficiency indices are $(1,1)$, when $N\geqslant 3$ the operator $\Delta|_{C^\infty_0(\mathbb{R}^{3N}\setminus\Gamma_\mathscr{C})}$ even after the separation of the centre of mass has \emph{infinite} deficiency indices, and hence a much wider variety of extensions. Those of most stringent physical relevance are the singular perturbations of Ter-Martirosyan--Skornyakov type, named after the so-called Ter-Martirosyan--Skornyakov condition (known also as the Bethe-Peierls contact condition). Roughly speaking, this is an asymptotic condition that is required on the functions of the domain of the adjoint operator $(\Delta|_{C^\infty_0(\mathbb{R}^{3N}\setminus\Gamma_\mathscr{C})})^*$ when $|x_i-x_j|\to 0$, for each pair $(i,j)\in\mathscr{C}$ of particles subject to a contact interaction, in order to be physically meaningful wave-functions and thus to constitute a domain of (essential) self-adjointness for certain extensions of $\Delta|_{C^\infty_0(\mathbb{R}^{3N}\setminus\Gamma_\mathscr{C})}$. The form of the asymptotics is dictated by physical heuristics on the eigenvalue problem for the two-body Schr\"{o}dinger equation at low energy and with a potential supported on a very short range, and it depends on one real parameter for each two-body channel with interaction, essentially the two-body scattering length in that channel.

We refer to our recent work \cite{MO-2016} for a comprehensive discussion on point interactions realised as Ter-Martirosyan--Skornyakov Hamiltonians, including a survey of the previous vast literature. Intuitively, they are precisely those models that correspond to formal Hamiltonians
\begin{equation}
H\;=\;\sum_{j=1}^N(-{\textstyle\frac{1}{2m_j}}\Delta_{x_j})+\textrm{``}\!\!\sum_{(j,k)\in\mathscr{C}}\alpha_{ij}\,\delta(x_j-x_k)\textrm{\,''}\,.
\end{equation}

A well-known complication when $N\geqslant 3$, and $\mathscr{C}$ contains at least two pairs (that is, there is a point interaction in at least two distinct two-body channels), is that the restriction of the domain of $(\Delta|_{C^\infty_0(\mathbb{R}^{3N}\setminus\Gamma_\mathscr{C})})^*$ to those functions that satisfy the ``physical'' Ter-Martirosyan--Skornyakov condition may actually select a symmetric extension of $\Delta|_{C^\infty_0(\mathbb{R}^{3N}\setminus\Gamma_\mathscr{C})}$ that is \emph{not} essentially self-adjoint and in turn admits non-trivial self-adjoint extensions. (As stressed in \cite{MO-2016}, when instead $N=2$, then imposing a Ter-Martirosyan--Skornyakov condition does select a domain of self-adjointness.) This occurrence depends both on the partial permutation symmetry, of bosonic or fermionic type, which may be additionally required in the $N$-body Hilbert space, and on the mass parameters $m_1,\dots,m_N$ entering the definition of the multi-particle free Hamiltonian $-\Delta$.

Thus, in certain regimes of masses and of symmetries the sole specification, through Ter-Martirosyan--Skornya\-kov asymptotics, of the two-body scattering length of each two-body channels where a contact interaction is present is \emph{not enough} to identify unambiguously a well-posed multi-particle Hamiltonian, and a further prescription is needed.

This problem has been since long at the centre of many mathematical investigations, which we shall cite later in the course of our discussion, above all for systems of $N=N_1+N_2$ particles, $N_1$ identical bosons or fermions of one type and $N_2$ identical bosons or fermions of another type, with two-body zero interactions of zero range.

Within this general framework, in this work we continue our analysis, started in \cite{MO-2016}, of the prototypical $2+1$ fermionic case, the system of two identical fermions coupled by a contact interaction with a third particle of different nature.

In \cite{MO-2016} we surveyed the rich mathematical literature on this model and above all we proved all the rigorous steps through which a self-adjoint Hamiltonian of Ter-Martirosyan--Skornya\-kov type can be constructed based on the self-adjoint extension scheme of Kre{\u\i}n, Vi\v{s}ik, and Birman. We showed that those self-adjoint extensions of the `away-from-hyperplanes' free Hamiltonian, whose domain consists of functions satisfying the Ter-Martirosyan--Skornya\-kov asymptotics, are in one-to-one correspondence with the self-adjoint extensions of an explicit auxiliary operator $\mathcal{A}_{\lambda,\alpha}$ acting on the Hilbert space $H^{-1/2}(\mathbb{R}^3)$ equipped with a twisted, but equivalent scalar product -- the `\emph{space of charges}', following a successful nomenclature imported by an analogy with electrostatics \cite{dft-Nparticles-delta}. 
(Here $\alpha\in\mathbb{R}$ is the parameter imposed by the Ter-Martirosyan--Skornyakov asymptotics in each two-body channel and $\lambda>0$ is a sufficiently large constant.)
The identification of the correct auxiliary operator and of the correct space of charges was a fundamental point in our work \cite{MO-2016}, as opposite to an unfortunate and long-lasting misinterpretation of the previous literature, where instead it was believed that the Ter-Martirosyan--Skornya\-kov extensions were parametrised by suitable (and extensively studied) self-adjoint operators on $L^2(\mathbb{R}^3)$.

In turn, once the appropriate $\mathcal{A}_{\lambda,\alpha}$ is identified, one can investigate the possible occurrence of a multiplicity of self-adjoint Ter-Martirosyan--Skornya\-kov Hamiltonians, which is in fact the main goal of the present work, and for each such realisations one can study stability and spectral properties -- the stability of a distinguished realisation was recently proved in \cite{CDFMT-2012}, and later in \cite{Moser-Seiringer}, through the associated quadratic form.

More in detail, here are the results of our investigation. Our first main result is to make the auxiliary operator $\mathcal{A}_{\lambda,\alpha}$ acting on the space of charges completely explicit, as compared to the somewhat implicit characterisation we gave in \cite{MO-2016}.

We then make the structure of its self-adjoint extensions explicit, by combining symmetry arguments with general properties of the Kre{\u\i}n-Vi\v{s}ik-Birman theory. We classify such extensions through an extension formula based on the Friedrichs extension of $\mathcal{A}_{\lambda,\alpha}$ (it too being now determined explicitly) and additional singular charges $\Xi\in H^{-1/2}(\mathbb{R}^3)$ that have angular symmetry $\ell\geqslant 1$ and satisfy
\begin{equation}\label{eq:AstarXi=0_intro}
2\pi^2\sqrt{{\textstyle\frac{m(m+2)}{\:(m+1)^2}}\, p^2+\lambda}\;\widehat{\Xi}(p)+\int_{\mathbb{R}^3}\frac{\widehat{\Xi}(q)}{p^2+q^2+{\textstyle\frac{2}{m+1}}\, p\cdot q+\lambda}\,\ud q+\alpha\,\widehat{\Xi}(p)\;=\;0\,.
\end{equation}

Motivated by formal results in the physical literature, and by the findings of an alternative mathematical approach through quadratic forms, we then formulate a conjecture 
on the fact that there are two universal masses $m^{**}>m^{*}>0$ such that, for $m>m^*$ (which is the threshold for stable interactions, see e.g. \cite{CDFMT-2012})
the dimension of the space of solutions to \eqref{eq:AstarXi=0_intro} is always 0 when $\ell\neq 1$, it is 0 when $\ell=1$ and $m\geqslant m^{**}$, and it is instead 3 when $\ell=1$ and $m^*<m<m^{**}$. As we will show, $m^{**}$ represents the threshold mass above which one has a unique self-adjoint realisation of the Ter-Martirosyan--Skornyakov Hamiltonian with chosen mass $m$.

Our next result, assuming the validity of our conjecture, and based on the Kre{\u\i}n-Vi\v{s}ik-Birman extension scheme, is to reconstruct completely and for the first time the whole family of self-adjoint Hamiltonians of Ter-Martirosyan--Skornyakov type for the 2+1 fermionic model. This provides a \emph{complete explanation of the emergence of a multiplicity of self-adjoint realisations}.

We also show that through this operator-theoretic approach we reproduce previous partial constructions obtained by means of a quadratic form approach.

Remarkably, previous studies of the multiplicity of extensions, carried on with the same operator theoretic approach, had confused the correct space of charges and studied instead the solutions to \eqref{eq:AstarXi=0_intro} in the much smaller space $L^2(\mathbb{R}^3)$, instead of $H^{-1/2}(\mathbb{R}^3)$. This resulted in a threshold for the coexistence of a multiplicity of extensions which was lower than the value $m^{**}$ identified in the quadratic form approach, in the physical literature, and in our conjecture.

It then remains to prove our conjecture, which we find a highly non-trivial task, as reasonable and consistent such a conjecture appears to be. This is a goal to which we aim to devote our next investigation.

\section{Model set-up}

After removing the centre of mass, the free Hamiltonian of a three-dimensional system of two identical fermions of unit mass in relative positions $x_1,x_2$ with respect to a third particle of different species and with mass $m$ is the operator $-\Delta_{x_1}-\Delta_{x_2}-\frac{2}{m+1}\nabla_{x_1}\cdot\nabla_{x_2}$ acting on the Hilbert space
\begin{equation}
\cH\;=\;L^2_\mathrm{f}(\mathbb{R}^3\times\mathbb{R}^3,\ud x_1\ud x_2)\,,
\end{equation}
the subscript `f' standing for the fermionic sector of the $L^2$-space, i.e, the square-integrable functions that are anti-symmetric under exchange $x_1\leftrightarrow x_2$. The natural starting point is then the operator
\begin{equation}\label{eq:Hring_2+1}
\begin{split}
\mathring{H}\;&:=\;-\Delta_{x_1}-\Delta_{x_2}-\frac{2}{m+1}\nabla_{x_1}\cdot\nabla_{x_2} \\
\mathcal{D}(\mathring{H})\;&:=\;H^2_0((\mathbb{R}^3\times\mathbb{R}^3)\!\setminus\!(\Gamma_1\cup\Gamma_2))\cap\cH
\end{split}
\end{equation}
on $\cH$,
where 
\begin{equation}
\Gamma_j\;:=\;\{(x_1,x_2)\in\mathbb{R}^3\times\mathbb{R}^3\,|\,x_j=0\}\,,\qquad j=1,2,
\end{equation}
and 
\begin{equation}
H^2_0((\mathbb{R}^3\times\mathbb{R}^3)\!\setminus\!(\Gamma_1\cup\Gamma_2))\;=\;\overline{C^\infty_0((\mathbb{R}^3\times\mathbb{R}^3)\!\setminus\!(\Gamma_1\cup\Gamma_2))}^{\|\,\|_{H^2}}\,.
\end{equation}

$\mathring{H}$ is densely defined, closed, positive, and symmetric on $\cH$.\ As such, $\mathring{H}$ has equal deficiency indices (which are infinite, as stated in Proposition \ref{prop:D_Hdostar_2+1} below) and thus admits self-adjoint extensions, among which the Friedrichs extension $\mathring{H}_F$ is nothing but the self-adjoint negative Laplacian on $\cH$ with domain $\cH\cap H^2(\mathbb{R}^3\times\mathbb{R}^3,\ud x_1\ud x_2)$. Any other self-adjoint extension of $\mathring{H}$ has a natural interpretation of Hamiltonian of point interaction between each fermion and the third particle.

The following facts are known concerning the adjoint of $\mathring{H}$.

\begin{theorem}{Proposition}\label{prop:D_Hdostar_2+1}\emph{(\cite[Lemma 3 and Proposition 2]{MO-2016}.)}
Let $\lambda>0$. For $\xi\in H^{-1/2}(\mathbb{R}^3)$ define
\begin{equation}\label{eq:u_xi}
 \widehat{u}_\xi(p_1,p_2)\;:=\;\frac{\widehat{\xi}(p_1)-\widehat{\xi}(p_2)}{p_1^2+p_2^2+\mu\,p_1\cdot p_2+\lambda}\,,\qquad p_1,p_2\in\mathbb{R}^3\,,
\end{equation}
with
 \begin{equation}\label{eq:mu}
  \mu\;:=\;\frac{2}{m+1}\,.
 \end{equation}
\begin{itemize}
 \item[(i)] One has
 \begin{equation}\label{eq:ker_hring*_2+1}
 \begin{split}
 \!\!\ker (\mathring{H}^*+\lambda\mathbbm{1})\;&=\;\left\{ u_\xi\in L^2_\mathrm{f}(\mathbb{R}^3\!\times\!\mathbb{R}^3)\left|\!
 \begin{array}{c}
 \widehat{u}_\xi(p_1,p_2)\textrm{ given by }\eqref{eq:u_xi} \\
 \xi\in H^{-1/2}(\mathbb{R}^3)
 \end{array}\!\!\right.\right\}\,.
 \end{split}
 \end{equation}
 \item[(ii)] There exist constants $c_1,c_2>0$ such that for a generic $u_\xi\in\ker (\mathring{H}^*+\lambda\mathbbm{1})$ one has
 \begin{equation}\label{eq:uxi-equivalent-norms}
  c_1\|\xi\|_{H^{-1/2}(\mathbb{R}^3)}\leqslant\;\|u_\xi\|_{\cH}\;\leqslant c_2\|\xi\|_{H^{-1/2}(\mathbb{R}^3)}\,.
 \end{equation}
 \item[(iii)] The domain and the action of the adjoint of $\mathring{H}$ are given by
 \begin{equation}\label{eq:decompositionD_Hdostar_2+1}
  \mathcal{D}(\mathring{H}^*)\;=\;\left\{\!\!
  \begin{array}{c}
  g\in L^2_\mathrm{f}(\mathbb{R}^3\!\times\!\mathbb{R}^3)\quad\textrm{such that} \\
  \widehat{g}(p_1,p_2)=\displaystyle\widehat{f}(p_1,p_2)+\frac{\widehat{u}_\eta(p_1,p_2)}{p_1^2+p_2^2+\mu\,p_1\cdot p_2+\lambda}+\widehat{u}_\xi(p_1,p_2) \\
  \textrm{for}\quad f\in\mathcal{D}(\mathring{H})\,,\quad \eta,\xi\in H^{-1/2}(\mathbb{R}^3)
  \end{array}\!\!\right\} 
 \end{equation}
 and
 \begin{eqnarray}
  \!\!\!\!\!\!\!\!\!\!\!\!(\widehat{(\mathring{H}^*+\lambda) g)}\,(p_1,p_2)\!\!\!&=&\!\!\! (p_1^2+p_2^2+\mu\,p_1\cdot p_2+\lambda) \,\widehat{F}_\lambda(p_1,p_2) \label{eq:actionHdotstar_to_g-f_2+1}\,, \\
  \widehat{(\mathring{H}^* g)}(p_1,p_2)\!\!\!&=&\!\!\!(p_1^2+p_2^2+\mu\,p_1\cdot p_2) \widehat{g}(p_1,p_2)-(\widehat{\xi}(p_1)-\widehat{\xi}(p_2)), \label{eq:actionHdotstar_to_g-f_2+1bis}
 \end{eqnarray}
 where $u_\eta$ and $u_\xi$ are defined as in \eqref{eq:u_xi} above, and
\begin{equation}
\widehat{F}_\lambda(p_1,p_2)\;:=\;\widehat{f}(p_1,p_2)+\frac{\widehat{u}_\eta(p_1,p_2)}{p_1^2+p_2^2+\mu\,p_1\cdot p_2+\lambda}\,.
\end{equation}
\end{itemize}
\end{theorem}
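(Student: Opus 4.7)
The natural strategy is to work in Fourier variables, where $\mathring{H}+\lambda\mathbbm{1}$ acts as multiplication by the symbol $h(p_1,p_2):=p_1^2+p_2^2+\mu\,p_1\cdot p_2+\lambda$. Since $\mu=2/(m+1)\in(0,2)$, the associated quadratic form is positive definite and one has the two-sided bound $h(p_1,p_2)\asymp p_1^2+p_2^2+\lambda$ uniformly in $p_1,p_2$.

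For part (i), $u\in\cH$ lies in $\ker(\mathring{H}^*+\lambda\mathbbm{1})$ iff $\langle u,(\mathring{H}+\lambda\mathbbm{1})f\rangle_\cH=0$ for every $f\in\mathcal{D}(\mathring{H})$, which on the Fourier side is equivalent to $h\,\widehat{u}$ being the Fourier transform of a tempered distribution supported on $\Gamma_1\cup\Gamma_2$. A general such distribution reads $\sum_{|\alpha|\leqslant N}(\ii p_1)^\alpha\widehat{\eta}_\alpha(p_2)+\sum_{|\beta|\leqslant N}(\ii p_2)^\beta\widehat{\eta}'_\beta(p_1)$. The main delicate point, and the principal obstacle, is to rule out the contributions of order $|\alpha|,|\beta|\geqslant 1$: this can be done by exploiting $\widehat{u}\in L^2(\mathbb{R}^6)$ together with $\int_{\mathbb{R}^3}|p_1|^{2|\alpha|}\,h(p_1,p_2)^{-2}\,\ud p_1=+\infty$ whenever $|\alpha|\geqslant 1$, thereby forcing $\eta_\alpha=0=\eta'_\beta$ for $|\alpha|,|\beta|\geqslant 1$. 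The fermionic antisymmetry of $u$ then imposes $\widehat{\eta}'_0=-\widehat{\eta}_0$, and setting $\xi:=-\eta_0$ yields $h\widehat{u}=\widehat{\xi}(p_1)-\widehat{\xi}(p_2)$, i.e.\ $u=u_\xi$ as in \eqref{eq:u_xi}.

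For part (ii), I would expand
\[
\|u_\xi\|_\cH^2\;=\;\int_{\mathbb{R}^6}\frac{|\widehat{\xi}(p_1)|^2+|\widehat{\xi}(p_2)|^2-2\,\re\bigl(\,\overline{\widehat{\xi}(p_1)}\,\widehat{\xi}(p_2)\,\bigr)}{h(p_1,p_2)^2}\,\ud p_1\,\ud p_2
\]
and use the radial identity $\int_{\mathbb{R}^3}(p_1^2+p_2^2+\lambda)^{-2}\,\ud p_2\asymp(p_1^2+\lambda)^{-1/2}$ together with the bound on $h$ to recognise the diagonal contribution as comparable to $\int|\widehat{\xi}(p)|^2(p^2+\lambda)^{-1/2}\,\ud p\asymp\|\xi\|^2_{H^{-1/2}(\mathbb{R}^3)}$; the cross term is controlled by Cauchy--Schwarz and absorbed into the constants $c_1,c_2$.

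For part (iii), I would invoke the Kre{\u\i}n--Vi\v{s}ik--Birman decomposition of the adjoint domain: for a closed, densely defined, positive symmetric operator with Friedrichs extension $\mathring{H}_F$ and $-\lambda$ in its resolvent set,
\begin{equation*}
\mathcal{D}(\mathring{H}^*)\;=\;\mathcal{D}(\mathring{H})\,\dotplus\,(\mathring{H}_F+\lambda\mathbbm{1})^{-1}\mathcal{N}_{-\lambda}\,\dotplus\,\mathcal{N}_{-\lambda}\,,
\end{equation*}
with $\mathcal{N}_{-\lambda}:=\ker(\mathring{H}^*+\lambda\mathbbm{1})$. Since $(\mathring{H}_F+\lambda\mathbbm{1})^{-1}$ is Fourier multiplication by $1/h$ and $\mathcal{N}_{-\lambda}$ is described by part (i), this yields \eqref{eq:decompositionD_Hdostar_2+1}. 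Formula \eqref{eq:actionHdotstar_to_g-f_2+1} follows by applying $\mathring{H}^*+\lambda\mathbbm{1}$ componentwise: the first two summands lie in $\mathcal{D}(\mathring{H}_F)$ and give $h\widehat{F}_\lambda$, while the third is annihilated; formula \eqref{eq:actionHdotstar_to_g-f_2+1bis} is then obtained by subtracting $\lambda\widehat{g}$ and using $h\widehat{u}_\xi=\widehat{\xi}(p_1)-\widehat{\xi}(p_2)$.
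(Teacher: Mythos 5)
The paper does not actually prove this Proposition here: it imports it wholesale from \cite{MO-2016}, so your attempt is being measured against the argument given there. Your overall strategy is the same as the standard one: work on the Fourier side, characterise $\ker(\mathring{H}^*+\lambda\mathbbm{1})$ via the structure theorem for tempered distributions supported on $\Gamma_1\cup\Gamma_2$ and an $L^2$-integrability argument that kills the transversal derivatives of order $\geqslant 1$, and then obtain (iii) from the Kre{\u\i}n--Vi\v{s}ik--Birman decomposition $\mathcal{D}(\mathring{H}^*)=\mathcal{D}(\mathring{H})\dotplus(\mathring{H}_F+\lambda\mathbbm{1})^{-1}\mathcal{N}_{-\lambda}\dotplus\mathcal{N}_{-\lambda}$. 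Parts (i) and (iii) are essentially right as sketched; in (i) you should still be a little more careful in separating the pieces supported on $\Gamma_1$, on $\Gamma_2$, and on $\Gamma_1\cap\Gamma_2=\{0\}$ (the latter produces polynomial Fourier transforms that are excluded by the same divergence argument), and in noting that the two polynomial families cannot cancel each other at large $|p_1|$ for fixed $p_2$; but these are routine refinements.

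The genuine gap is the lower bound $c_1\|\xi\|_{H^{-1/2}}\leqslant\|u_\xi\|_{\cH}$ in (ii), which you dismiss with ``the cross term is controlled by Cauchy--Schwarz and absorbed into the constants.'' That cannot work as stated. Writing $\|u_\xi\|_\cH^2=\langle\xi,W_\lambda\xi\rangle$ with $W_\lambda=D-K$, where $D$ is multiplication by $2\pi^2(\nu p^2+\lambda)^{-1/2}$ and $K$ has kernel $2\,h(p,q)^{-2}$, one computes by completing the square that
\begin{equation*}
\int_{\mathbb{R}^3}\frac{\ud q}{(p^2+q^2+\mu\,p\cdot q+\lambda)^2}\;=\;\frac{\pi^2}{\sqrt{\nu p^2+\lambda}}\,,
\end{equation*}
so the row integral of the kernel of $K$ equals exactly the diagonal weight $2\pi^2(\nu p^2+\lambda)^{-1/2}$. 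Consequently the pointwise AM--GM/Schur estimate with unit weight gives $|\langle\xi,K\xi\rangle|\leqslant\langle\xi,D\xi\rangle$ with constant exactly $1$, and the only conclusion is $\|u_\xi\|_\cH^2\geqslant 0$ --- the cross term is \emph{a priori} exactly as large as the diagonal term, and nothing is ``absorbed.'' Establishing the coercivity requires a genuinely quantitative improvement, e.g.\ a weighted Schur test with a non-constant weight (or an equivalent argument) showing that the off-diagonal operator is \emph{strictly} dominated by the diagonal one on $H^{-1/2}$; this is precisely the content of the positivity and invertibility of $W_\lambda:H^{-1/2}(\mathbb{R}^3)\to H^{1/2}(\mathbb{R}^3)$ stated in Proposition \ref{prop:T-W}(i) and proved in \cite{MO-2016}. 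Without that step, part (ii) --- and with it the ``$\subseteq$'' inclusion in (i), since $\xi\in H^{-1/2}$ is exactly what the lower bound extracts from $u_\xi\in L^2$ --- remains unproven.
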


As a consequence, it can be shown that (almost) all functions in $\mathcal{D}(\mathring{H}^*)$ satisfy the following large-momentum asymptotics.

\begin{theorem}{Proposition}\label{prop:asymptotic_integral_2+1}\emph{(\cite[Lemma 4]{MO-2016}.)}
Let $g$ be an arbitrary function in $\mathcal{D}(\mathring{H}^*)$. For a fixed $\lambda>0$ consider the decomposition \eqref{eq:decompositionD_Hdostar_2+1} of  $\widehat{g}$ in terms of $\widehat{f}$, $\widehat{u}_\xi$, $\widehat{u}_\eta$. 
Then, as $R\to +\infty$,
\begin{equation}\label{eq:g_*_asymptotics_2+1}
\begin{split}
\int_{\substack{ \\ \,p_2\in\mathbb{R}^3 \\ \! |p_2|<R}}{\:\widehat{g}(p_1,p_2) \,\ud p_2}\;&=\;4\pi\widehat{\xi}(p_1) R+\big(-\widehat{(T_\lambda\,\xi)}(p_1)+\!\!\begin{array}{c}\frac{1}{2}\end{array}\!\!\!\widehat{(W_\lambda\,\eta)}(p_1)\big)+o(1)
\end{split}
\end{equation}
as a point-wise identity for almost every $p_1$, where
\begin{equation}\label{eq:Tlambda}
\widehat{(T_\lambda\,\xi)}(p)\;:=\;2\pi^2\sqrt{\nu p^2+\lambda}\;\widehat{\xi}(p)+\int_{\mathbb{R}^3}\frac{\widehat{\xi}(q)}{p^2+q^2+\mu p\cdot q+\lambda}\,\ud q\,,
\end{equation}
\begin{equation}\label{eq:Wlambda}
\widehat{(W_\lambda\,\xi)}(p)\;:=\;\frac{2\pi^2}{\sqrt{\nu p^2+\lambda}\,}\,\widehat{\xi}(p)-2\!\int_{\mathbb{R}^3}\frac{\widehat{\xi}(q)}{(p^2+q^2+\mu p\cdot q+\lambda)^2}\,\ud q\,,
\end{equation}
\begin{equation}\label{eq:nu}
\nu\;:=\;1-\frac{\:\mu^2}{4}\;=\;\frac{m(m+2)}{(m+1)^2}\,,
\end{equation}
and $\mu$ is given by \eqref{eq:mu}.
\end{theorem}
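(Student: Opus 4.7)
The plan is to split, by linearity, the integral $\int_{|p_2|<R}\widehat{g}(p_1,p_2)\,\ud p_2$ according to the decomposition \eqref{eq:decompositionD_Hdostar_2+1} into three pieces, corresponding respectively to $\widehat{f}$, $\widehat{u}_\eta/(p_1^2+p_2^2+\mu\,p_1\cdot p_2+\lambda)$, and $\widehat{u}_\xi$, and to analyse each separately as $R\to+\infty$. For the $\widehat{f}$-piece, since $f\in\mathcal{D}(\mathring{H})$ is an $H^2(\mathbb{R}^6)$-limit of smooth functions supported away from $\Gamma_1\cup\Gamma_2$, its trace on $\{x_2=0\}$ vanishes; moreover the embedding $H^2(\mathbb{R}^3)\hookrightarrow\mathcal{F}L^1(\mathbb{R}^3)$ combined with Fubini yields $\widehat{f}(p_1,\cdot)\in L^1(\mathbb{R}^3)$ for a.e.\ $p_1$, and the inverse partial Fourier transform evaluated at $x_2=0$ gives $\int_{\mathbb{R}^3}\widehat{f}(p_1,p_2)\,\ud p_2=0$ a.e.; hence the $\widehat{f}$-piece contributes $o(1)$.

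The core of the proof is the $\widehat{u}_\xi$-piece. Splitting the numerator,
\[
\int_{|p_2|<R}\widehat{u}_\xi(p_1,p_2)\,\ud p_2\;=\;\widehat{\xi}(p_1)\,I(R,p_1)\;-\;\int_{|p_2|<R}\frac{\widehat{\xi}(p_2)\,\ud p_2}{p_1^2+p_2^2+\mu\,p_1\cdot p_2+\lambda}\,,
\]
with $I(R,p_1):=\int_{|p_2|<R}(p_1^2+p_2^2+\mu\,p_1\cdot p_2+\lambda)^{-1}\ud p_2$. The second summand converges absolutely as $R\to+\infty$ to the integral appearing in \eqref{eq:Tlambda}: the resolvent kernel $(p_1^2+q^2+\mu\,p_1\cdot q+\lambda)^{-1}$ decays like $|q|^{-2}$ at infinity and pairs with $\widehat{\xi}\in H^{-1/2}(\mathbb{R}^3)$ via a weighted Cauchy--Schwarz estimate. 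The first summand is the main calculation. After completing the square $p_1^2+p_2^2+\mu\,p_1\cdot p_2+\lambda=|p_2+\tfrac{\mu}{2}p_1|^2+\nu p_1^2+\lambda$ and passing to spherical coordinates with polar axis along $p_1$, one arrives at
\[
I(R,p_1)\;=\;\frac{2\pi}{\mu|p_1|}\int_0^R r\,\log\frac{(r+\tfrac{\mu|p_1|}{2})^2+\nu p_1^2+\lambda}{(r-\tfrac{\mu|p_1|}{2})^2+\nu p_1^2+\lambda}\,\ud r\,,
\]
and an integration by parts against $\ud(r^2/2)$, followed by a careful expansion of the log-integrand, extracts the linearly divergent leading term $4\pi R$ together with the finite correction $-2\pi^2\sqrt{\nu p_1^2+\lambda}$, up to an $o(1)$ remainder; the case $p_1=0$ is handled separately by radial symmetry. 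Combining, the $\widehat{u}_\xi$-piece contributes $4\pi R\,\widehat{\xi}(p_1)-\widehat{(T_\lambda\,\xi)}(p_1)+o(1)$.

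The $\widehat{u}_\eta/(\cdots)$-piece is absolutely convergent, since its integrand decays like $|p_2|^{-4}$. Splitting the numerator $\widehat{\eta}(p_1)-\widehat{\eta}(p_2)$, the substitution $q=p_2+\tfrac{\mu}{2}p_1$ is legitimate here and, combined with the elementary identity $\int_{\mathbb{R}^3}(|q|^2+a^2)^{-2}\,\ud q=\pi^2/a$ applied with $a^2=\nu p_1^2+\lambda$, reduces the computation to
\[
\frac{\pi^2\,\widehat{\eta}(p_1)}{\sqrt{\nu p_1^2+\lambda}}\;-\;\int_{\mathbb{R}^3}\frac{\widehat{\eta}(q)\,\ud q}{(p_1^2+q^2+\mu\,p_1\cdot q+\lambda)^2}\,,
\]
which by \eqref{eq:Wlambda} is exactly $\tfrac{1}{2}\,\widehat{(W_\lambda\,\eta)}(p_1)$; the deviation from the truncated integral is $o(1)$ by dominated convergence. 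Summing the three contributions yields \eqref{eq:g_*_asymptotics_2+1}. The main obstacle is the asymptotic evaluation of $I(R,p_1)$: cleanly separating the linearly divergent $4\pi R$ from the finite correction $-2\pi^2\sqrt{\nu p_1^2+\lambda}$ in a non-absolutely-convergent logarithmic integral, while controlling the $o(1)$ remainder so as to obtain a pointwise a.e.\ identity in $p_1$, requires careful book-keeping of subleading terms; the other two pieces are comparatively routine once this integral asymptotics is in hand.
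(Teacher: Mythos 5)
Your proof is correct and follows what is essentially the intended argument: the paper does not reprove this Proposition but imports it from \cite[Lemma 4]{MO-2016}, and the computation there is precisely the three-way splitting you describe, with the vanishing trace of $f$ on $\Gamma_2$, the explicit asymptotics of $\int_{|p_2|<R}(p_1^2+p_2^2+\mu p_1\cdot p_2+\lambda)^{-1}\ud p_2=4\pi R-2\pi^2\sqrt{\nu p_1^2+\lambda}+o(1)$ after completing the square, and the identity $\int_{\mathbb{R}^3}(q^2+a^2)^{-2}\ud q=\pi^2/a$ producing the $\frac{1}{2}W_\lambda\eta$ term; I have checked that your constants are the right ones. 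The only point to correct is your claim that $\int_{\mathbb{R}^3}\widehat{\xi}(q)(p^2+q^2+\mu p\cdot q+\lambda)^{-1}\ud q$ converges absolutely for $\widehat{\xi}$ paired via weighted Cauchy--Schwarz with $\xi\in H^{-1/2}(\mathbb{R}^3)$: the weight $(1+q^2)^{1/2}(p^2+q^2+\mu p\cdot q+\lambda)^{-2}$ is \emph{not} integrable (it fails logarithmically), so the estimate only closes for $\xi\in H^{-1/2+\varepsilon}(\mathbb{R}^3)$, $\varepsilon>0$; for generic $\xi\in H^{-1/2}(\mathbb{R}^3)$ both sides of \eqref{eq:g_*_asymptotics_2+1} may be infinite, as the Remark following the Proposition makes explicit with the example $\widehat{\xi}_0(q)=\mathbf{1}_{\{|q|\geqslant 2\}}(|q|\ln|q|)^{-1}$. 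This does not affect the validity of the statement as a point-wise a.e.\ identity, but your write-up should either restrict to the dense set of charges in $H^{-1/2+\varepsilon}$ or acknowledge that the identity is to be read allowing the value $+\infty$ on both sides.
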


\begin{remark}{Remark}
Clearly,
\begin{equation}\label{eq:lambda-equiv-1}
(1+p_1^2+p_2^2)\;\sim\;(p_1^2+p_2^2+\mu p_1\cdot p_2+\lambda)
\end{equation}
(in the sense that each quantity controls the other from above and from below), because $\mu\in(0,2)$ (owing to \eqref{eq:mu} with $m> 0$) and $\lambda>0$. Since, for arbitrary $\varepsilon>0$ and $\xi\in H^{-\frac{1}{2}+\varepsilon}(\mathbb{R}^3)$,
\[
\begin{split}
\Big|\int_{\mathbb{R}^3}&\frac{\widehat{\xi}(q)}{p^2+q^2+\mu p\cdot q+\lambda}\,\ud q\Big|\;\leqslant\; \\
&\leqslant\;\|\xi\|_{H^{-\frac{1}{2}+\varepsilon}}\Big(\int_{\mathbb{R}^3}\frac{(q^2+1)^{\frac{1}{2}-\varepsilon}}{(p^2+q^2+\mu p\cdot q+\lambda)^2}\,\ud q\Big)^{1/2}\;<\;+\infty
\end{split}
\]
(owing to a Schwartz inequality in the first step and \eqref{eq:lambda-equiv-1} in the second one), we see that the integral in \eqref{eq:Tlambda} is finite for any $\xi\in H^{-\frac{1}{2}+\varepsilon}(\mathbb{R}^3)$, $\varepsilon>0$, while in general it diverges when $\xi\in H^{-\frac{1}{2}}(\mathbb{R}^3)$, as the example $\widehat{\xi}_0(q)\;:=\;\mathbf{1}_{\{|q|\geqslant 2\}}(|q|\ln|q|)^{-1}$ shows. A similar argument shows that the integral in \eqref{eq:Wlambda} is finite too at least for $\xi\in H^{-\frac{1}{2}}(\mathbb{R}^3)$. Summarising, $(\widehat{T_\lambda\,\xi})(p)$ is well-defined point-wise for almost every $p\in\mathbb{R}^3$ for $\xi\in H^{-\frac{1}{2}+\varepsilon}(\mathbb{R}^3)$, $\varepsilon>0$, whereas $(\widehat{W_\lambda\,\xi})(p)$ is so (at least) for $\xi\in H^{-1/2}(\mathbb{R}^3)$. Therefore, for a generic $g\in \mathcal{D}(\mathring{H}^*)$, and correspondingly for a generic charge $\xi\in H^{-\frac{1}{2}}(\mathbb{R}^3)$, the quantity in the l.h.s.~of \eqref{eq:g_*_asymptotics_2+1} is \emph{infinite} for every finite $R$ because the quantity $(\widehat{T_\lambda\,\xi})(p)$ is in general infinite when $\xi\in H^{-\frac{1}{2}}(\mathbb{R}^3)$. Instead, when additionally $\xi\in H^{-\frac{1}{2}+\varepsilon}(\mathbb{R}^3)$, with $\varepsilon>0$, the r.h.s.~of \eqref{eq:g_*_asymptotics_2+1} is finite (for almost every $p_1\in\mathbb{R}^3$): this case corresponds to a dense set of $g$'s in $\mathcal{D}(\mathring{H}^*)$, and for such $g$'s the quantity in the l.h.s.~of \eqref{eq:g_*_asymptotics_2+1} is \emph{finite} for finite $R$ and only diverges, linearly in $R$, as $R\to +\infty$.
\end{remark}

Each self-adjoint extension of $\mathring{H}$ is a self-adjoint restriction of $\mathring{H}^*$. By further imposing that in the asymptotics \eqref{eq:g_*_asymptotics_2+1} the constant-in-$R$ term be proportional to the coefficient of the linear-in-$R$ term, one selects a special class of restrictions of $\mathring{H}^*$ known as the Ter-Martirosyan--Skornyakov operators for this model. Explicitly, for $\alpha\in\mathbb{R}\cup\{\infty\}$ one considers the subspace $\mathcal{D}(\mathring{H}^{\textsc{tms}}_\alpha)\subset\mathcal{D}(\mathring{H}^*)$ consisting of all $g$'s such that
\begin{equation}\label{eq:g_tms_2+1}
\begin{split}
\int_{\substack{ \\ \,p_2\in\mathbb{R}^3 \\ \! |p_2|<R}}{\:\widehat{g}(p_1,p_2) \,\ud p_2}\;&=\;(4\pi R+\alpha)\,\widehat{\xi}(p_1)+o(1)\quad\textrm{as}\quad R\to +\infty\,,
\end{split}
\end{equation}
that is, those $g$'s whose charges $\eta$ and $\xi$ satisfy
\begin{equation}\label{eq:g_tms_2+1_part2}
\alpha\,\xi\;=\;-T_\lambda\xi +{\textstyle\frac{1}{2}}W_\lambda\eta\,,
\end{equation}
and then one defines
\begin{equation}
\mathring{H}^{\textsc{tms}}_\alpha\;:=\;\mathring{H}^*\upharpoonright\mathcal{D}(\mathring{H}^{\textsc{tms}}_\alpha)\,.
\end{equation}
So far \eqref{eq:g_tms_2+1} and \eqref{eq:g_tms_2+1_part2} are understood as point-wise identities. At least for $\mathring{H}^{\textsc{tms}}_\alpha$ operators defined by using sufficiently regular $\xi$'s and $\eta$'s it is not difficult to deduce from the general characterisation of $\mathring{H}^*$ given in Proposition \ref{prop:D_Hdostar_2+1} that
\begin{equation}\label{eq:TMS_as_symm_extns}
 \mathring{H}\;\subset\;\mathring{H}^{\textsc{tms}}_\alpha\;\subset\;(\mathring{H}^{\textsc{tms}}_\alpha)^*\;\subset\;\mathring{H}^*\,,
\end{equation}
that is, each $\mathring{H}^{\textsc{tms}}_\alpha$ is a \emph{symmetric} extension of $\mathring{H}$.

The choice \eqref{eq:g_tms_2+1}-\eqref{eq:g_tms_2+1_part2}, customarily referred to as the Ter-Martirosyan--Skornyakov condition, has a great physical relevance. The formal counterpart of \eqref{eq:g_tms_2+1} in position coordinates reads
\begin{equation}\label{eq:BPcontact}
g(x_1,x_2)\;=\;\widetilde{\xi}(x_1)\Big(\frac{1}{|x_2|}+\alpha\Big)+o(1)\qquad\textrm{ as}\quad x_2\to 0\,,
\end{equation}
and the short-scale asymptotics $(|x|^{-1}-(-\alpha^{-1})^{-1})$ in the relative coordinate $x$ is characteristic for wave-functions of a system of quantum particles subject to a two-body interaction of almost zero range and scattering length $(-\alpha)^{-1}$, a fact observed first by Bethe and Peierls in 1935 \cite{Bethe_Peierls-1935,Bethe_Peierls-1935-np} (whence the name of Bethe-Peierls contact condition for the asymptotics \eqref{eq:BPcontact}), and later exploited in momentum coordinates by Skornyakov and Ter-Martirosyan in 1956 \cite{TMS-1956}.

For this reason the interest towards point interaction Hamiltonians realised as self-adjoint extensions of $\mathring{H}$ is mainly focused on the physically relevant extensions of TMS type.

\section{Self-adjoint extension scheme for TMS operators}\label{sec:ext_scheme_for_H-TMS}

The primary issue concerning 2+1 fermionic TMS operators is their self-adjoint realisation, since this allows one to set up a well-posed and physically relevant model of point interaction for the 2+1 fermionic system.

In fact, as indicated in \eqref{eq:TMS_as_symm_extns}, for given $\alpha$ the TMS condition selects a priori only a symmetric extension $\mathring{H}^{\textsc{tms}}_\alpha$ of the away-from-hyperplanes free Hamiltonian $\mathring{H}$. For which values of the mass parameter $m$ is $\mathring{H}^{\textsc{tms}}_\alpha$ indeed self-adjoint or does it admit in turn self-adjoint extensions is the object of an active research activity for general $N+M$ fermionic models of point interactions.

Owing to the special structure of TMS operators, and ultimately to the positivity of the initial operator $\mathring{H}$, the issue of their self-adjointness is conveniently addressed to within the so-called Kre{\u\i}n-Vi\v{s}ik-Birman self-adjoint extension scheme for semi-bounded symmetric operators, a theory developed by  Kre{\u\i}n \cite{Krein-1947}, Vi\v{s}ik \cite{Vishik-1952}, and Birman \cite{Birman-1956} between the mid 1940's and the mid 1950's. For the present purposes we refer to the comprehensive discussion \cite{M-KVB2015}, as well as to the expository works \cite{Flamand-Cargese1965,Alonso-Simon-1980}.

The key point from the Kre{\u\i}n-Vi\v{s}ik-Birman theory is that the whole family of self-adjoint extensions of $\mathring{H}$ is one-to-one with self-adjoint operators on Hilbert subspaces of $\ker(\mathring{H}^*+\lambda\mathbbm{1})$, through an explicit, constructive correspondence between operators of the two classes. The goal is therefore to recognise the TMS condition \eqref{eq:g_tms_2+1_part2} as a self-adjointness condition for a suitable operator on $\ker(\mathring{H}^*+\lambda\mathbbm{1})$.

To this aim, the first step is to qualify the two maps $\xi\mapsto T_\lambda\xi$ and $\eta\mapsto W_\lambda\eta$ defined in \eqref{eq:Tlambda}-\eqref{eq:Wlambda} between convenient functional spaces. One has the following.

\begin{theorem}{Proposition}\label{prop:T-W} \emph{(\cite[Propositions 3 and 4, Corollary 2]{MO-2016} and Appendix \ref{appendix:bounds_ell_-1/2_3/2}.)}
\begin{itemize}
 \item[(i)] The expression \eqref{eq:Wlambda} defines a bounded, positive, and invertible linear operator $W_\lambda:H^{-1/2}(\mathbb{R}^3)\to H^{1/2}(\mathbb{R}^3)$, and for generic $u_\xi,u_\eta\in\ker (\mathring{H}^*+\lambda\mathbbm{1})$ one has
 \begin{equation}\label{eq:scalar_products}
 \langle u_\xi,u_\eta\rangle_{\cH}\;=\;\langle \xi,W_\lambda\eta\rangle_{H^{-\frac{1}{2}}(\mathbb{R}^3),H^{\frac{1}{2}}(\mathbb{R}^3)}\,.
 \end{equation}
 \item[(ii)] For each $s\geqslant 1$ the expression \eqref{eq:Tlambda} defines a densely defined and symmetric operator  $T_\lambda:\mathcal{D}(T_\lambda)\subset L^2(\mathbb{R}^3)\to L^2(\mathbb{R}^3)$ with domain $\mathcal{D}(T_\lambda):=H^s(\mathbb{R}^3)$.
 \item[(iv)] $T_\lambda$ commutes with the rotations in $\mathbb{R}^3$ and it is reduced with respect to the canonical decomposition
\begin{equation}\label{eq:L2_ell_decomposition}
L^2(\mathbb{R}^3)\;\cong\;\bigoplus_{\ell=0}^\infty L^2(\mathbb{R}^+,r^2\,\ud r)\otimes\mathrm{span}\{Y_{\ell,-\ell},\dots,Y_{\ell,\ell}\}\;\equiv\;\bigoplus_{\ell=0}^\infty L^2_\ell(\mathbb{R}^3)
\end{equation}
(where the $Y_{\ell,m}$'s are the spherical harmonics on $\mathbb{S}^2$). Thus, $T_\lambda$ leaves each $L^2_\ell(\mathbb{R}^3)$ invariant and decomposes as 
\begin{equation}\label{eq:T-Tell}
T_\lambda\;=\;\bigoplus_{\ell=0}^\infty \,T_\lambda^{(\ell)}\qquad T_\lambda^{(\ell)}\,\textrm{ densely defined and symmetric on } L^2_\ell(\mathbb{R}^3)\,.
\end{equation}
Each $T_\lambda^{(\ell)}$ acts non-trivially only on the first (radial) factor of $L^2_\ell(\mathbb{R}^3)\cong L^2(\mathbb{R}^+,r^2\,\ud r)\otimes\mathrm{span}\{Y_{\ell,-\ell},\dots,Y_{\ell,\ell}\}$, whereas it acts as the identity on the second (angular) factor.
\item[(v)] Analogously to \eqref{eq:L2_ell_decomposition}, let $H^s_\ell(\mathbb{R}^3)$ be the sector of $\ell$-th angular symmetry in $H^s(\mathbb{R}^3)$, that is,
\begin{equation}\label{eq:Hs_ell_decomposition}
\mathcal{F}\big(H^s_\ell(\mathbb{R}^3)\big)\;\cong\;L^2(\mathbb{R}^+,(1+\varrho^2)^s\varrho^2\ud \varrho)\otimes\mathrm{span}\{Y_{\ell,-\ell},\dots,Y_{\ell,\ell}\}\,,\quad\ell\in\mathbb{N}\,.
\end{equation}
In terms of the notation of \eqref{eq:L2_ell_decomposition}-\eqref{eq:T-Tell}-\eqref{eq:Hs_ell_decomposition}, one has
\begin{equation}\label{eq:T-ell-3/2-1/2_excluded}
\|T_\lambda\xi\|_{H^{s-1}}\;\lesssim\;\|\xi\|_{H^{s}}\qquad s\in\,\textstyle{(-\frac{1}{2},\frac{3}{2})}
\end{equation}
that is, $T_\lambda$ maps continuously $H^s(\mathbb{R}^3)$ into $H^{s-1}(\mathbb{R}^3)$ for any $s\in(-\frac{1}{2},\frac{3}{2})$, whereas
\begin{equation}\label{eq:T-ell-3/2-1/2_included}
\|T_\lambda\xi\|_{H^{s-1}}\;\lesssim\;\|\xi\|_{H^{s}}\qquad\forall\xi\in H^{s}_\ell(\mathbb{R}^3) \qquad 
\begin{cases}
\;s\in\,\textstyle{[-\frac{1}{2},\frac{3}{2}]} \\
\quad \ell\geqslant 1\,,
\end{cases}
\end{equation}
that is, $T_\lambda$ maps continuously $H^s_\ell(\mathbb{R}^3)$ into $H^{s-1}_\ell(\mathbb{R}^3)$ also for $s=-\frac{1}{2}$ and $s=\frac{3}{2}$, provided that $\ell\geqslant 1$.
\item[(vi)] Instead, for $\ell=0$,
 \begin{equation}\label{eq:a_dense_not_in_it}
 T_\lambda\Big(\{\xi\,|\,\widehat{\xi}\in C^\infty_0(\mathbb{R}^3) \}\cap H^{3/2}_{\ell=0}(\mathbb{R}^3)\Big)\;\subset\;H^{\sigma}(\mathbb{R}^3)\quad\textrm{only for $\sigma<{\textstyle\frac{1}{2}}$}.
 \end{equation}
\end{itemize}
\end{theorem}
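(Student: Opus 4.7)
The plan is to split $T_\lambda=M_\lambda+K_\lambda$, where $M_\lambda$ is the Fourier multiplier by $2\pi^2\sqrt{\nu p^2+\lambda}$ and $K_\lambda$ is the integral operator with kernel $(p^2+q^2+\mu p\cdot q+\lambda)^{-1}$. Since the symbol of $M_\lambda$ is comparable to $\langle p\rangle$, $M_\lambda$ maps $H^s(\mathbb{R}^3)$ continuously into $H^{s-1}(\mathbb{R}^3)$ for every $s\in\mathbb{R}$ and leaves each $H^s_\ell$ invariant, so all the nontrivial work concerns $K_\lambda$, and in particular the novel endpoint cases in \eqref{eq:T-ell-3/2-1/2_included} and the failure recorded in \eqref{eq:a_dense_not_in_it} (since the interior bound \eqref{eq:T-ell-3/2-1/2_excluded} is already contained in \cite{MO-2016}).

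Next I would exploit rotational invariance to decompose $K_\lambda=\bigoplus_\ell K_\lambda^{(\ell)}$ and work out the reduced radial kernel on $L^2(\mathbb{R}^+,r^2\,\ud r)$ via the Legendre expansion of $(p^2+q^2+\mu pq\cos\theta+\lambda)^{-1}$. Integrating $1/(a+b\cos\theta)$ against Legendre polynomials produces Legendre functions of the second kind $Q_\ell\big((p^2+q^2+\lambda)/(\mu pq)\big)$, and the resulting radial kernel is positive and almost homogeneous of degree $-2$ in $(p,q)$, with a dependence on $\lambda$ confined to bounded, smoothing remainders. On each such radial operator I would apply a Mellin transform in the radial variable: the Mellin symbol $m_\ell(i\tau)$ can be computed in closed form, and the mapping property $K_\lambda^{(\ell)}\colon H^s_\ell\to H^{s-1}_\ell$ reduces to a uniform bound on $\tau\mapsto\langle\tau\rangle^{-1}m_\ell(i\tau)$ along the critical line pertinent to the exponent $s$.

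To push to the closed endpoints $s=-\tfrac{1}{2}$ and $s=\tfrac{3}{2}$ under the hypothesis $\ell\geq 1$, the key observation is that $Q_\ell(\cosh y)$ decays like $e^{-(\ell+1)|y|}$ as $|y|\to\infty$, so the partial-wave kernel gains an extra factor of the form $(p/q)^\ell$ in the regime $p\ll q$ and $(q/p)^\ell$ in the regime $q\ll p$. For $\ell\geq 1$ this extra decay ensures that the relevant Mellin symbol is bounded on the boundary lines of the strip corresponding to $s\in[-\tfrac{1}{2},\tfrac{3}{2}]$, and a weighted Schur test on the radial kernel then delivers \eqref{eq:T-ell-3/2-1/2_included}. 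By contrast, for $\ell=0$ one loses precisely one power of the diagonal ratio, producing a logarithmic divergence at the same endpoints, which is the analytic source of the $S$-wave obstruction.

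The hard part will be the sharp negative statement (vi) for $\ell=0$. Here I would compute the reduced $S$-wave kernel explicitly,
\[
(K_\lambda^{(0)}\phi)(p)\;=\;\frac{2\pi}{\mu\, p}\!\int_0^\infty\!q\,\phi(q)\,\log\frac{\,p^2+q^2+\mu pq+\lambda\,}{\,p^2+q^2-\mu pq+\lambda\,}\,\ud q,
\]
and test it against a function $\xi$ with $\widehat{\xi}$ smooth, compactly supported, and radial, arranged so that the Mellin transform of $q\,\phi(q)$ does not vanish at the critical imaginary value. A careful expansion of $K_\lambda^{(0)}\phi$ for $p\to\infty$ should reveal a deficit of the form $c\,p^{-2}\log p$ (and a symmetric one at $p\to 0^+$); this tail lies in $H^\sigma_{\ell=0}$ for every $\sigma<\tfrac{1}{2}$ but misses $H^{1/2}$ logarithmically. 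The same logarithmic signature appears as a pole of $m_0(i\tau)$ at the endpoint value, and inspecting this pole should both prove the failure and show that it is sharp, i.e.\ no choice of $\xi$ in the dense subspace considered can produce cancellations restoring the full endpoint regularity.
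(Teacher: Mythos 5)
Your strategy is sound and lands on the same final estimate as the paper, but the route through the partial-wave kernel is genuinely different. The paper proves in Appendix \ref{appendix:bounds_ell_-1/2_3/2} only the new endpoint $s=-\frac{1}{2}$, $\ell\geqslant 1$ of \eqref{eq:T-ell-3/2-1/2_included} (the range $s\in(-\frac{1}{2},\frac{3}{2})$, the endpoint $s=\frac{3}{2}$, and parts (i), (ii), (iv), (vi) are quoted from \cite{MO-2016}), and it extracts the crucial off-diagonal decay $(pq)^{\ell}(p^2+q^2+1)^{-(\ell+1)}$ of the reduced radial kernel by writing $P_\ell$ through the Rodrigues formula and integrating by parts $\ell$ times in the angular variable; an elementary Schur test with the weights $r(1+r^2)^{-1/4}$ and $r'(1+r'^2)^{1/4}$ then closes the argument. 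You obtain the same decay from the Neumann integral $\int_{-1}^{1}P_\ell(y)(a+by)^{-1}\,\ud y=\pm\frac{2}{b}\,Q_\ell(a/b)$ together with the asymptotics $Q_\ell(z)\sim z^{-\ell-1}$, and you propose to conclude either by a Mellin-symbol bound or by a weighted Schur test. Both mechanisms isolate the same source of the restriction $\ell\geqslant 1$: for $\ell=0$ the relevant $q$-integral in the Schur test (equivalently, your Mellin symbol on the boundary line) diverges. Your version buys a more conceptual picture (an explicit symbol and a sharp strip of boundedness) at the price of special-function asymptotics, while the paper's integration by parts is entirely elementary; note also that once you have the pointwise kernel bound the Schur test alone suffices, and it absorbs the non-homogeneity coming from $\lambda>0$ and from the inhomogeneous weights $(1+p^2)^{s/2}$ more directly than the Mellin calculus does.

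Two details in your discussion of (vi) need correcting, though neither affects the conclusion. For $\widehat{\xi}=\phi(|\cdot|)$ smooth, radial and compactly supported, the expansion $\log\frac{p^2+q^2+\mu pq+\lambda}{p^2+q^2-\mu pq+\lambda}=\frac{2\mu q}{p}+O(p^{-2})$ gives
\[
(K^{(0)}_\lambda\phi)(p)\;=\;\frac{4\pi}{\,p^2}\int_0^{+\infty}\!q^2\phi(q)\,\ud q+O(|p|^{-3})\qquad\textrm{as }|p|\to+\infty\,,
\]
so the obstruction is a clean $c\,|p|^{-2}$ tail with $c$ proportional to $\int_{\mathbb{R}^3}\widehat{\xi}$, generically non-zero, rather than $c\,|p|^{-2}\log|p|$; since $|p|^{-2}$ decay at infinity lies in $H^{\sigma}$ exactly for $\sigma<\frac{1}{2}$, statement \eqref{eq:a_dense_not_in_it} follows all the same. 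Moreover there is no companion singularity as $p\to 0^{+}$: the same expansion shows that $(K^{(0)}_\lambda\phi)(p)$ tends there to a finite constant, so only the large-momentum behaviour is relevant. These are slips of detail, not gaps in the argument.
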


Owing to Proposition \ref{prop:T-W},
\begin{equation}\label{eq:W-scalar-product}
\langle \xi,\eta\rangle_{W_\lambda}\;:=\;\langle \xi,W_\lambda\,\eta\rangle_{H^{-\frac{1}{2}},H^{\frac{1}{2}}}\;=\;\langle u_\xi,u_\eta\rangle_{\cH}
\end{equation}
defines a scalar product in $H^{-\frac{1}{2}}(\mathbb{R}^3)$. It is \emph{equivalent} to the standard scalar product of $H^{-\frac{1}{2}}(\mathbb{R}^3)$, as follows by combining \eqref{eq:W-scalar-product} with \eqref{eq:uxi-equivalent-norms}.

We shall denote by $H^{-1/2}_{W_\lambda}(\mathbb{R}^3)$ the Hilbert space consisting of the $H^{-\frac{1}{2}}(\mathbb{R}^3)$-functions and equipped with the scalar product $\langle\cdot,\cdot\rangle_{W_\lambda}$.
Then the map
\begin{equation}\label{eq:isomorphism_Ulambda}
\begin{split}
U_\lambda\,:\,\ker (\mathring{H}^*+\lambda\mathbbm{1})\;&\;\xrightarrow[]{\;\;\;\cong\;\;\;}\;H^{-1/2}_{W_\lambda}(\mathbb{R}^3)\,,\qquad u_\xi \longmapsto \,\xi
\end{split}
\end{equation}
is an isomorphism between Hilbert spaces, with $\ker (\mathring{H}^*+\lambda\mathbbm{1})$ equipped with  the standard scalar product inherited from $\cH$.

One can therefore equivalently parametrise the self-adjoint extensions of $\mathring{H}$ in terms of self-adjoint operators acting on Hilbert subspaces of $\ker (\mathring{H}^*+\lambda\mathbbm{1})$ or of its unitarily equivalent version $H^{-1/2}_{W_\lambda}(\mathbb{R}^3)$. 

For our purposes, the relevant class of self-adjoint operators on $H^{-1/2}_{W_\lambda}(\mathbb{R}^3)$ to consider here are those of the form $\widetilde{\mathcal{A}}_{\lambda,\alpha}$ indicated in the Proposition \ref{lem:selfadj_hierarchy} below, and the reason of their relevance will be explained in the discussion that will follow right after.

\begin{theorem}{Proposition}\label{lem:selfadj_hierarchy}\emph{(\cite[Proposition 5]{MO-2016}.)}
The following data be given: 
\begin{itemize}
 \item two constants $\lambda>0$ and $\alpha\in\mathbb{R}$,
 \item for each integer $\ell\geqslant 1$, the densely defined and symmetric operator $T_\lambda^{(\ell)}$ on the Hilbert space $L^2_{\ell}(\mathbb{R}^3)$ with $\mathcal{D}(T_\lambda^{(\ell)}):=H^{3/2}_\ell(\mathbb{R}^3)$,
 \item for $\ell=0$, an operator $\mathcal{A}^{(\ell=0)}_{\lambda,\alpha}$ that is self-adjoint on the Hilbert space $H^{-1/2}_{W_\lambda,\ell=0}(\mathbb{R}^3)$ (a datum that will be further specified at a later stage).
\end{itemize}
   With respect to the decomposition (see Remark \ref{rem:twisted_orthogonality} below)
\begin{equation}\label{eq:H-1/2_orthogonal_decomposition}
  H^{-1/2}_{W_\lambda}(\mathbb{R}^3)\;=\;\bigoplus_{\ell=0}^\infty \:H^{-1/2}_{W_\lambda,\ell}(\mathbb{R}^3)\,,
\end{equation}
   let
\begin{equation}\label{eq:def_A_all_sectors}
 \mathcal{A}_{\lambda,\alpha}\;:=\;\;\bigoplus_{\ell=0}^\infty \mathcal{A}^{(\ell)}_{\lambda,\alpha}\,,
\end{equation}
  where
\begin{equation}\label{eq:def_A_ell_geq1}
\begin{array}{rl}
 \mathcal{A}^{(\ell)}_{\lambda,\alpha}\!\!&:=\;2\,W_\lambda^{-1}(T^{(\ell)}_\lambda+\alpha\mathbbm{1}) \\
 \mathcal{D}(\mathcal{A}^{(\ell)}_{\lambda,\alpha})\!\!&:=\;\mathcal{D}(T^{(\ell)}_\lambda)\;=\;H^{3/2}_\ell(\mathbb{R}^3)
\end{array}\qquad\quad \ell\geqslant 1\,.
\end{equation}
 Then $\mathcal{A}_{\lambda,\alpha}$ is a densely defined and symmetric operator on  $H^{-1/2}_{W_\lambda}(\mathbb{R}^3)$. Moreover, if $\widetilde{\mathcal{A}}_{\lambda,\alpha}$ is a self-adjoint extension of $\mathcal{A}_{\lambda,\alpha}$ on $H^{-1/2}_{W_\lambda}(\mathbb{R}^3)$, then 
\begin{equation}\label{eq:unitary_equiv_A_Atilde_2+1}
A_{\lambda,\alpha}\;:=\;U_\lambda^{-1}\widetilde{\mathcal{A}}_{\lambda,\alpha} U_\lambda
\end{equation}
(where $U_\lambda$ is the isomorphism \eqref{eq:isomorphism_Ulambda}) is a self-adjoint operator on $\ker (\mathring{H}^*+\lambda\mathbbm{1})$. 
\end{theorem}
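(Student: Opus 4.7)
The statement splits into two essentially independent claims: first, that $\mathcal{A}_{\lambda,\alpha}$ is a densely defined symmetric operator on the space of charges $H^{-1/2}_{W_\lambda}(\mathbb{R}^3)$ endowed with the twisted inner product \eqref{eq:W-scalar-product}; second, that pulling back a self-adjoint extension through $U_\lambda$ produces a self-adjoint operator on $\ker(\mathring{H}^*+\lambda\mathbbm{1})$. The second claim is essentially free once the first is in hand, because identity \eqref{eq:W-scalar-product} is precisely the statement that $U_\lambda$ is unitary between Hilbert spaces and self-adjointness is preserved under unitary conjugation; so the substantive work concentrates on the first claim.

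For the well-posedness of $\mathcal{A}_{\lambda,\alpha}^{(\ell)}$ when $\ell\geq 1$, I would combine Proposition \ref{prop:T-W}(i) with \eqref{eq:T-ell-3/2-1/2_included}. The bound \eqref{eq:T-ell-3/2-1/2_included} at $s=3/2$ gives $T_\lambda^{(\ell)}:H^{3/2}_\ell(\mathbb{R}^3)\to H^{1/2}_\ell(\mathbb{R}^3)$ continuously; adding the bounded operator $\alpha\mathbbm{1}$ keeps the image inside $H^{1/2}_\ell(\mathbb{R}^3)$; and Proposition \ref{prop:T-W}(i) makes $W_\lambda^{-1}$ a topological isomorphism $H^{1/2}(\mathbb{R}^3)\to H^{-1/2}(\mathbb{R}^3)$. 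Because both $W_\lambda$ and $T_\lambda$ commute with the $\mathrm{SO}(3)$-action (their integral kernels depend only on $|p|,|q|,p\cdot q$, as seen from \eqref{eq:Tlambda}--\eqref{eq:Wlambda}), every sector $H^{-1/2}_{W_\lambda,\ell}(\mathbb{R}^3)$ is invariant, so $\mathcal{A}_{\lambda,\alpha}^{(\ell)}$ indeed lands in $H^{-1/2}_\ell$ and the direct sum \eqref{eq:def_A_all_sectors} makes sense on the decomposition \eqref{eq:H-1/2_orthogonal_decomposition}. Density of each $H^{3/2}_\ell$ in $H^{-1/2}_\ell$, together with the density of $\mathcal{D}(\mathcal{A}^{(\ell=0)}_{\lambda,\alpha})$ assumed in the $\ell=0$ sector, then yields density of the direct-sum domain.

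The crucial step is symmetry. For $\xi,\eta\in H^{3/2}_\ell(\mathbb{R}^3)$ with $\ell\geq 1$, I would unfold
\begin{equation*}
\langle \mathcal{A}_{\lambda,\alpha}^{(\ell)}\xi,\eta\rangle_{W_\lambda}\;=\;\langle 2W_\lambda^{-1}(T_\lambda^{(\ell)}+\alpha)\xi,\,W_\lambda\eta\rangle_{H^{-1/2},H^{1/2}}\;=\;2\langle (T_\lambda^{(\ell)}+\alpha)\xi,\eta\rangle_{L^2},
\end{equation*}
where the second equality uses that $(T_\lambda^{(\ell)}+\alpha)\xi\in H^{1/2}(\mathbb{R}^3)\subset L^2(\mathbb{R}^3)$ and $\eta\in H^{3/2}(\mathbb{R}^3)\subset L^2(\mathbb{R}^3)$, so that the $(H^{-1/2},H^{1/2})$-duality reduces to the ordinary $L^2$-pairing and the composition $W_\lambda^{-1}\circ W_\lambda$ collapses to the identity on $H^{-1/2}$. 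Proposition \ref{prop:T-W}(ii) then gives the $L^2$-symmetry of $T_\lambda^{(\ell)}$, and with the reality of $\alpha$ the right-hand side equals $2\langle\xi,(T_\lambda^{(\ell)}+\alpha)\eta\rangle_{L^2}$; running the same reduction in reverse yields $\langle\xi,\mathcal{A}_{\lambda,\alpha}^{(\ell)}\eta\rangle_{W_\lambda}$. Symmetry on the whole direct sum then follows from the $W_\lambda$-orthogonality of distinct angular sectors (the content of Remark \ref{rem:twisted_orthogonality}), itself a consequence of the $\mathrm{SO}(3)$-invariance of $W_\lambda$.

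The delicate point I expect will require most care is precisely the duality reduction $\langle W_\lambda^{-1}h,W_\lambda\eta\rangle_{H^{-1/2},H^{1/2}}=\langle h,\eta\rangle_{L^2}$: one must verify that $W_\lambda$, viewed through the duality between $H^{-1/2}$ and $H^{1/2}$, is symmetric (which it is, from the kernel symmetry already encoded in \eqref{eq:scalar_products}), so that the formal manipulation above is rigorous on the chosen domains and does not tacitly mix up the $W_\lambda$-inner product with the standard $H^{-1/2}$-one. Once this is cleared, the remaining bookkeeping---direct sum of symmetric components, and transport through the unitary $U_\lambda$---is essentially formal.
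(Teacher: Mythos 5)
Your argument is correct and takes essentially the approach the paper intends: the paper defers the proof to \cite[Proposition 5]{MO-2016}, but the ingredients it assembles for exactly this purpose --- the mapping bound \eqref{eq:T-ell-3/2-1/2_included} at $s=\frac{3}{2}$, the $L^2$-symmetry of $T_\lambda^{(\ell)}$ from Proposition \ref{prop:T-W}(ii), the duality-symmetry of $W_\lambda$ encoded in \eqref{eq:scalar_products}, and the unitarity of $U_\lambda$ --- are precisely the ones you use. The delicate point you flag, namely $\langle W_\lambda^{-1}h,W_\lambda\eta\rangle_{H^{-1/2},H^{1/2}}=\langle h,\eta\rangle_{L^2}$, is indeed settled by the conjugate symmetry of $\langle\cdot,\cdot\rangle_{W_\lambda}$ as an inner product (equivalently, of $\langle u_\xi,u_\eta\rangle_{\cH}$), so no gap remains.
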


\begin{remark}{Remark}\label{rem:twisted_orthogonality}
With a slight abuse of notation, we use in \eqref{eq:H-1/2_orthogonal_decomposition} above, and in analogous formulas throughout, the same symbol of direct orthogonal sum without distinguishing between the usual Hilbert scalar product in $H^{-1/2}(\mathbb{R}^3)$ and the twisted scalar product in $H^{-1/2}_{W_\lambda}(\mathbb{R}^3)$. This is harmless because, owing to the definition \eqref{eq:W-scalar-product} for $\langle\cdot,\cdot\rangle_{W_\lambda}$ and the rotational symmetry of $W_\lambda$ defined in \eqref{eq:Wlambda}, elements belonging to subspaces of different symmetry $\ell$ are orthogonal in either scalar product.
\end{remark}

\begin{remark}{Remark}\label{rem:symmetry}
In short, $\mathcal{A}_{\lambda,\alpha}=2\,W_\lambda^{-1}(T_\lambda+\alpha\mathbbm{1})$ apart from a (self-adjoint) re-definition of this expression in the sector of symmetry $\ell=0$. This is to overcome the difficulty of defining $W_\lambda^{-1}T_\lambda$ when $\ell=0$, since owing to Proposition \ref{prop:T-W} the map $W_\lambda^{-1}$ cannot pull arbitrary functions $T_\lambda\xi$ back to $H^{-1/2}(\mathbb{R}^3)$. To our understanding, this simple fact had been overlooked in all the previous operator-theoretic approaches to the 2+1 fermionic model of TMS type \cite{Minlos-Shermatov-1989,Menlikov-Minlos-1991,Menlikov-Minlos-1991-bis,Shermatov-2003,Minlos-2011-preprint_May_2010,Minlos-2010-bis,Minlos-2012-preprint_30sett2011,Minlos-2012-preprint_1nov2012,Minlos-RusMathSurv-2014} until when we pointed it out in \cite{MO-2016}. We shall characterise $\mathcal{A}_{\lambda,\alpha}$ on the sector $\ell=0$ in Section \ref{sec:A_ell_0}
\end{remark}

The relevance, for the self-adjointness problem of TMS operators, of self-adjoint operators of the form $\widetilde{\mathcal{A}}_{\lambda,\alpha}$ on $H^{-1/2}_{W_\lambda}(\mathbb{R}^3)$, is dictated by the KVB extension theory. Indeed, such $\widetilde{\mathcal{A}}_{\lambda,\alpha}$'s turn out to be precisely the `\emph{extension parameters}', in the sense of the Kre{\u\i}n-Vi\v{s}ik-Birman theory, for the family of self-adjoint extensions of $\mathring{H}$ on $\cH$ of TMS type.

More precisely, as stated in Proposition \ref{lem:selfadj_hierarchy} above, each $A_{\lambda,\alpha}=U_\lambda^{-1}\widetilde{\mathcal{A}}_{\lambda,\alpha} U_\lambda$ is densely defined and self-adjoint on $\ker(\mathring{H}^*+\lambda\mathbbm{1})$ and as such it qualifies a self-adjoint extension of $\mathring{H}$. Let us denote by $H_{\alpha}$ such extension. The correspondence $A_{\lambda,\alpha}\leftrightarrow H_{\alpha}+\lambda\mathbbm{1}$ is prescribed by the Kre{\u\i}n-Vi\v{s}ik-Birman theory and reads (\cite[Theorem 3.4]{M-KVB2015})
\begin{equation}\label{eq:D_Halpha_2+1}
\begin{split}
\mathcal{D}(H_{\alpha})\;&:=\;\left\{g=f+(\mathring{H}_F+\lambda\mathbbm{1})^{-1}(A_{\lambda,\alpha}u_\xi)+u_\xi\left|\!
\begin{array}{c}
f\in\mathcal{D}(\mathring{H}) \\
u_\xi\in\mathcal{D}(A_{\lambda,\alpha})
\end{array}\!\!\!\right.\right\}\,, \\
H_{\alpha}\;&:=\;\mathring{H}^*\upharpoonright\mathcal{D}(\mathring{H}_{\alpha})\,.
\end{split}
\end{equation}
(Thus, $H_{\alpha}$ is a maximally defined extension in the sense that its extension parameter $A_{\lambda,\alpha}$ is densely defined on the deficiency space $\ker(\mathring{H}^*+\lambda\mathbbm{1})$; there are also self-adjoint extensions of $\mathring{H}$ whose extension parameter is self-adjoint on a closed \emph{proper} subspace of $\ker(\mathring{H}^*+\lambda\mathbbm{1})$, however such extensions do not find room in the present discussion, as they are not of TMS type.)

Direct comparison between \eqref{eq:D_Halpha_2+1} and \eqref{eq:decompositionD_Hdostar_2+1} shows that $\mathcal{D}(H_{\alpha})$ is obtained as a restriction $\mathcal{D}(\mathring{H}^*)$ by imposing the condition
\begin{equation}\label{eq:TMS_2+1_a}
u_\eta\;=\;A_{\lambda,\alpha} u_\xi
\end{equation}
as an identity in $\ker (\mathring{H}^*+\lambda\mathbbm{1})$, which, by the unitary equivalence \eqref{eq:unitary_equiv_A_Atilde_2+1}, is tantamount as 
\begin{equation}\label{eq:TMS_2+1_b}
\eta\;=\;\widetilde{\mathcal{A}}_{\lambda,\alpha}\,\xi
\end{equation}
as an identity in $H^{-1/2}_{W_\lambda}(\mathbb{R}^3)$. If we restrict to $\ell\geqslant 1$ and $\xi$ is taken in $\mathcal{D}(\mathcal{A}^{(\ell)}_{\lambda,\alpha})$ instead of $\mathcal{D}(\widetilde{\mathcal{A}}^{(\ell)}_{\lambda,\alpha})$ (we make this choice here only for presentational purposes: in Sections \ref{sec:reconstr_Halpha} and \ref{sec:emergence_of_TMS} we shall release this restriction), then \eqref{eq:TMS_2+1_b} reads
\begin{equation}\label{eq:TMS_2+1_c}
\alpha\,\xi\;=\;-T^{(\ell)}_\lambda\xi +\!\!\!\begin{array}{c}\frac{1}{2}\end{array}\!\!\!W_\lambda\eta\qquad\qquad\forall\xi\in H^{3/2}_\ell(\mathbb{R}^3)\,,\quad\ell\geqslant 1\,,
\end{equation}
as an identity in $H^{1/2}_{\ell}(\mathbb{R}^3)$, owing to the definition \eqref{eq:def_A_ell_geq1}. Plugging \eqref{eq:TMS_2+1_c} into \eqref{eq:g_*_asymptotics_2+1} yields the following asymptotics for elements in $\mathcal{D}(\mathring{H}_{\alpha})$ as $R\to\infty$:
\begin{equation}\label{eq:TMS_cond_asymptotics_2+1}
\begin{split}
\int_{\substack{ \\ \,p_2\in\mathbb{R}^3 \\ \! |p_2|<R}}{\:\widehat{g}(p_1,p_2) \,\ud p_2}\;&=\;\widehat{\xi}(p_1)(4\pi R+\alpha)+o(1) \quad\quad (\,\xi\in H^{3/2}_\ell(\mathbb{R}^3)\,)\,.
\end{split}
\end{equation}

Let us summarise the above discussion as follows.

\begin{theorem}{Proposition}
 \begin{itemize}
 \item[(i)] Any of the two equivalent conditions \eqref{eq:TMS_2+1_a}, \eqref{eq:TMS_2+1_b} is a condition of self-adjointness for restrictions of $\mathring{H}^*$
 \item[(ii)] Each such condition selects the self-adjoint extension
   \begin{equation}\label{eq:ext_formula_etaAxi}
 H_{\alpha}\;=\;\mathring{H}^*\upharpoonright\big\{g\in\mathcal{D}(\mathring{H}^*)\,|\,\eta=\widetilde{\mathcal{A}}_{\lambda,\alpha}\,\xi\big\}
 \end{equation}
 given by the restriction of $\mathring{H}^*$ to those elements of $\mathcal{D}(\mathring{H}^*)$ whose charges $\xi$ and $\eta$, in terms of the decomposition \eqref{eq:decompositionD_Hdostar_2+1}, instead of belonging generically to $H^{-1/2}(\mathbb{R}^3)$ are such that $\xi$ belongs to the domain of $\widetilde{\mathcal{A}}_{\lambda,\alpha}$ and $\eta$ is of the form $\widetilde{\mathcal{A}}_{\lambda,\alpha}\xi$.
 \item[(iii)] Any $g\in\mathcal{D}(H_{\alpha})$ with charge $\xi\in\mathcal{D}(T_\lambda^{\ell})=H^{3/2}_\ell(\mathbb{R}^3)$, $\ell\geqslant 1$, satisfies the TMS asymptotics \eqref{eq:g_tms_2+1}.
\end{itemize}
\end{theorem}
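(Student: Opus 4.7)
My strategy is to deduce (i) and (ii) directly from the Kre\u{\i}n-Vi\v{s}ik-Birman extension formula \eqref{eq:D_Halpha_2+1} combined with the adjoint-domain decomposition \eqref{eq:decompositionD_Hdostar_2+1}, and to obtain (iii) by substitution into the asymptotics \eqref{eq:g_*_asymptotics_2+1}. For (i) and (ii), the KVB formula asserts that each $g\in\mathcal{D}(H_{\alpha})$ decomposes as
\[
g\;=\;f + (\mathring{H}_F + \lambda\mathbbm{1})^{-1}(A_{\lambda,\alpha}u_\xi) + u_\xi,\qquad f\in\mathcal{D}(\mathring{H}),\quad u_\xi\in\mathcal{D}(A_{\lambda,\alpha}),
\]
and produces a self-adjoint operator precisely because $A_{\lambda,\alpha}=U_\lambda^{-1}\widetilde{\mathcal{A}}_{\lambda,\alpha}U_\lambda$ is self-adjoint on $\ker(\mathring{H}^*+\lambda\mathbbm{1})$, as granted by Proposition \ref{lem:selfadj_hierarchy}. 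I would next observe that $A_{\lambda,\alpha}u_\xi$ lives by construction in $\ker(\mathring{H}^*+\lambda\mathbbm{1})$ and therefore equals $u_\eta$ for a unique $\eta\in H^{-1/2}(\mathbb{R}^3)$, namely $\eta=\widetilde{\mathcal{A}}_{\lambda,\alpha}\xi$ through the isomorphism \eqref{eq:isomorphism_Ulambda}. Since $\mathring{H}_F$ is the self-adjoint free operator on $\cH\cap H^2$, $(\mathring{H}_F+\lambda\mathbbm{1})^{-1}$ acts in Fourier as multiplication by $(p_1^2+p_2^2+\mu\,p_1\cdot p_2+\lambda)^{-1}$, so
\[
\mathcal{F}\bigl[(\mathring{H}_F+\lambda\mathbbm{1})^{-1}u_\eta\bigr](p_1,p_2)\;=\;\frac{\widehat{u}_\eta(p_1,p_2)}{p_1^2+p_2^2+\mu\,p_1\cdot p_2+\lambda}\,,
\]
matching exactly the middle summand of \eqref{eq:decompositionD_Hdostar_2+1}. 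A term-by-term comparison then shows that $\mathcal{D}(H_{\alpha})$ is nothing but the restriction of $\mathcal{D}(\mathring{H}^*)$ singled out by $u_\eta=A_{\lambda,\alpha}u_\xi$ with $u_\xi\in\mathcal{D}(A_{\lambda,\alpha})$, equivalently $\eta=\widetilde{\mathcal{A}}_{\lambda,\alpha}\xi$ with $\xi\in\mathcal{D}(\widetilde{\mathcal{A}}_{\lambda,\alpha})$.

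For (iii), I would specialise to $g\in\mathcal{D}(H_{\alpha})$ whose charge $\xi$ belongs to the $\ell$-th angular sector with $\ell\geqslant 1$ and to $H^{3/2}_\ell(\mathbb{R}^3)=\mathcal{D}(T^{(\ell)}_\lambda)$. On this sector the definition \eqref{eq:def_A_ell_geq1} gives $\widetilde{\mathcal{A}}_{\lambda,\alpha}\xi=2W_\lambda^{-1}(T^{(\ell)}_\lambda+\alpha\mathbbm{1})\xi$, so the selection condition $\eta=\widetilde{\mathcal{A}}_{\lambda,\alpha}\xi$ rearranges to $\alpha\,\xi=-T^{(\ell)}_\lambda\xi+\tfrac{1}{2}W_\lambda\eta$. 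Because $T_\lambda$ is reduced by the angular decomposition (Proposition \ref{prop:T-W}(iv)), one may replace $T^{(\ell)}_\lambda\xi$ by $T_\lambda\xi$, and plugging this relation into \eqref{eq:g_*_asymptotics_2+1} collapses the bracketed constant-in-$R$ term $-\widehat{(T_\lambda\xi)}(p_1)+\tfrac{1}{2}\widehat{(W_\lambda\eta)}(p_1)$ exactly to $\alpha\,\widehat{\xi}(p_1)$, yielding \eqref{eq:g_tms_2+1}.

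The main conceptual obstacle is the matching step in (i)--(ii): one must verify that the decomposition \eqref{eq:decompositionD_Hdostar_2+1} is essentially unique modulo re-shuffling of the $H^2$-regular piece $f\in\mathcal{D}(\mathring{H})$, so that identifying the two charges through the relation $u_\eta=A_{\lambda,\alpha}u_\xi$ imposes no spurious additional constraint on the other components of $g$. Once this uniqueness is in place, everything else is an algebraic translation between the abstract KVB parametrisation and the concrete Fourier description of $\ker(\mathring{H}^*+\lambda\mathbbm{1})$ already recorded in Propositions \ref{prop:D_Hdostar_2+1}--\ref{lem:selfadj_hierarchy}, and no further estimates are needed.
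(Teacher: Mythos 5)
Your proposal is correct and follows essentially the same route as the paper, which proves this Proposition implicitly through the discussion preceding it: the KVB formula \eqref{eq:D_Halpha_2+1} is compared term by term with the adjoint-domain decomposition \eqref{eq:decompositionD_Hdostar_2+1} to extract the condition $u_\eta=A_{\lambda,\alpha}u_\xi$, equivalently $\eta=\widetilde{\mathcal{A}}_{\lambda,\alpha}\xi$, and part (iii) is obtained exactly as you do, by rewriting this as $\alpha\,\xi=-T^{(\ell)}_\lambda\xi+\tfrac{1}{2}W_\lambda\eta$ via \eqref{eq:def_A_ell_geq1} and substituting into the asymptotics \eqref{eq:g_*_asymptotics_2+1}. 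The uniqueness of the decomposition that you flag as the main obstacle is indeed the only point requiring care, and it is guaranteed because \eqref{eq:decompositionD_Hdostar_2+1} is a direct-sum decomposition of $\mathcal{D}(\mathring{H}^*)$ in the Kre{\u\i}n-Vi\v{s}ik-Birman theory.
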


Informally speaking: \emph{each self-adjoint extension $H_{\alpha}$ of $\mathring{H}$ is, on each sector of charges with symmetry $\ell\geqslant 1$, a Ter-Martirosyan--Skornyakov  operator}. (We shall supplement this picture in Sections \ref{sec:reconstr_Halpha} and \ref{sec:emergence_of_TMS} also with the case $\ell=0$.) In this sense
\begin{equation}\label{eq:TMS_as_selfadj_extns}
 \mathring{H}\;\subset\;\mathring{H}^{\textsc{tms}}_\alpha\;\subset\;H_{\alpha}=\;H_{\alpha}^*\;\subset\;(\mathring{H}^{\textsc{tms}}_\alpha)^*\;\subset\;\mathring{H}^*\qquad(\ell\geqslant 1)\,.
\end{equation}

Moreover, it is established that \emph{the issue of the self-adjoint realisation of TMS Hamiltonians on the physical Hilbert space $\cH$ is tantamount as the self-adjoint realisation of the auxiliary operator $\mathcal{A}_{\lambda,\alpha}$ on the Hilbert space of charges $H^{1/2}_{W_\lambda}(\mathbb{R}^3)$}.

\begin{remark}{Remark}
The intuition of re-phrasing the self-adjointness problem of TMS Hamiltonians in terms of the self-adjointness of an auxiliary operator acting on a smaller Hilbert space -- the `space of charges', in the spirit of the extension theory of Kre{\u\i}n, Vi\v{s}ik, and Birman -- is originally due to Minlos and Faddeev in a seminal short announcement published in 1962 \cite{Minlos-Faddeev-1961-1}. However, from the very beginning an unfortunate and long-lasting misinterpretation established, as it was thought that the self-adjointness problem for what we denoted here with $\mathcal{A}_{\lambda,\alpha}$ on $H^{-1/2}_{W_\lambda}(\mathbb{R}^3)$ is equivalent to the self-adjointness problem of $T_\lambda$ on $L^2(\mathbb{R}^3)$. Not only must this statement be well formulated in each symmetry sector, 
but most importantly the two problems are \emph{not} the same even when full care is taken of the symmetry. Indeed, as proved in \cite[Remark 10]{MO-2016}, if in addition to the assumptions of Proposition \ref{lem:selfadj_hierarchy} above one assumes also that the $T_\lambda^{(\ell)}$'s for $\ell\geqslant 1$ are defined on a larger domain so as to become \emph{self-adjoint} in $L^2_\ell(\mathbb{R}^3)$, then for the densely defined and symmetric operator $\mathcal{A}^{(\ell)}_{\lambda,\alpha}$ on  $H^{-1/2}_{W_\lambda,\ell}(\mathbb{R}^3)$ one would have
\begin{equation*}
\mathcal{D}(\mathcal{A}_{\lambda,\alpha}^{(\ell)\star})\cap L^2_\ell(\mathbb{R}^3)\;=\;\mathcal{D}(\mathcal{A}^{(\ell)}_{\lambda,\alpha})\qquad\qquad(T_\lambda^{(\ell)}=T_\lambda^{(\ell)*})
\end{equation*}
(here $\mathcal{A}_{\lambda,\alpha}^\star$ denotes the adjoint in $H^{-1/2}_{W_\lambda}(\mathbb{R}^3)$).
However, this is \emph{not} enough to claim that the self-adjointness of $T^{(\ell)}_\lambda$ implies the self-adjointness of $\mathcal{A}^{(\ell)}_{\lambda,\alpha}$: the latter could still have (and in general it does have) a larger adjoint and thus admit non-trivial self-adjoint extensions. To our understanding this point was systematically missed in the very work of Minlos and Faddeev (see the statement in \cite{Minlos-Faddeev-1961-1} right after equation (19) therein, where the analogue of our $\mathcal{A}_{\lambda,\alpha}$ is introduced), in the 1987 seminal work of Minlos on the three-particle system \cite{Minlos-1987}, and throughout the subsequent literature on the operator-theoretic approach to the 2+1 fermionic model of TMS type \cite{Minlos-Shermatov-1989,Menlikov-Minlos-1991,Menlikov-Minlos-1991-bis,Shermatov-2003,Minlos-2011-preprint_May_2010,Minlos-2010-bis,Minlos-2012-preprint_30sett2011,Minlos-2012-preprint_1nov2012,Minlos-RusMathSurv-2014}. The discrepancy between the self-adjointness problem of $T_\lambda$ on $L^2(\mathbb{R}^3)$ and of $\mathring{H}^{\textsc{tms}}_\alpha$ on $\cH$ was proved in a sharp quantitative manner 
for the first time in 2015 in the work \cite{CDFMT-2012} by one of us in collaboration with Correggi, Finco, Dell'Antonio, and Teta, using quadratic form methods and working in the case $\alpha=0$. Then, in our recent work \cite{MO-2016} we have finally provided a consistent explanation of the difference between such two problems using the extension theory in the operator-theoretic language.
\end{remark}

\section{Extension scheme for the auxiliary $\mathcal{A}_{\lambda,\alpha}$: case $\ell\geqslant 1$}\label{sec:ext_scheme_for_A}

It follows from Propositions \ref{prop:T-W} and \ref{lem:selfadj_hierarchy} that the self-adjointness problem of the operator $\mathcal{A}_{\lambda,\alpha}$ defined in \eqref{eq:def_A_all_sectors} and of its self-adjoint extensions on the space of charges $H^{-1/2}_{W_\lambda}(\mathbb{R}^3)$ is separated (reduced) for each value of the angular momentum $\ell\in\mathbb{N}_0$. Indeed, we have the following decomposition of $H^{-1/2}_{W_\lambda}(\mathbb{R}^3)$ and reduction of $\mathcal{A}_{\lambda,\alpha}$
\begin{equation*}
 H^{-1/2}_{W_\lambda}(\mathbb{R}^3)\;=\;\bigoplus_{\ell=0}^{\infty}\,H^{-1/2}_{W_\lambda,\ell}(\mathbb{R}^3)\,,\qquad \mathcal{A}_{\lambda,\alpha}\;=\;\bigoplus_{\ell=0}^{\infty}\,\mathcal{A}_{\lambda,\alpha}^{(\ell)}\,,
\end{equation*}
where each $H^{-1/2}_{W_\lambda,\ell}(\mathbb{R}^3)$ is the sector of $\ell$-th angular symmetry of $H^{-1/2}(\mathbb{R}^3)$, equipped with the scalar product $\langle\cdot,\cdot\rangle_{W_\lambda}$ \eqref{eq:W-scalar-product}, and it is a reducing subspace for $\mathcal{A}_{\lambda,\alpha}$.

In this Section we discuss the self-adjointness problem of $\mathcal{A}_{\lambda,\alpha}^{(\ell)}$ in the case $\ell\geqslant 1$. For convenience of presentation, we restrict our analysis to the the regime of masses $m>m^*$, where $m^*\approx(13.607)^{-1}$ is the unique root of $\Lambda(m)=1$ and 
\begin{equation}\label{eq:Lambdam}
\Lambda(m)\;:=\;{\textstyle\frac{2}{\pi}}(m+1)^2\Big(\frac{1}{\sqrt{m(m+2)}}-\arcsin\frac{1}{m+1}\Big)\,,
\end{equation}
a positive, smooth, monotone decreasing function often also referred to as the Efimov transcendental function. We see that, thanks to this choice, $\mathcal{A}_{\lambda,\alpha}^{(\ell)}$ is strictly positive.

\begin{theorem}{Proposition}\label{prop:Apos}
For $m>m^*$ and $\ell\geqslant 1$ the operator $\mathcal{A}_{\lambda,\alpha}^{(\ell)}$ is densely defined, symmetric, and semi-bounded from below on the space $H^{-1/2}_{W_\lambda,\ell}(\mathbb{R}^3)$, with strictly positive bottom  when $\alpha\geqslant 0$ and $\lambda>0$, or also when $\alpha<0$ provided that $\lambda$ is large enough.
\end{theorem}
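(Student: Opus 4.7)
The plan is to verify, in order, (i) density of the domain $H^{3/2}_\ell(\mathbb{R}^3)$ in $H^{-1/2}_{W_\lambda,\ell}(\mathbb{R}^3)$, (ii) symmetry of $\mathcal{A}^{(\ell)}_{\lambda,\alpha}$, and (iii) the quadratic-form lower bound with strictly positive bottom, relying throughout on the fact that by construction $\mathcal{A}^{(\ell)}_{\lambda,\alpha}$ is a twisted realisation of $T_\lambda^{(\ell)}+\alpha\mathbbm{1}$, so every claim reduces to a statement about $T_\lambda^{(\ell)}$ on $L^2_\ell(\mathbb{R}^3)$.

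Density is automatic: $H^{3/2}_\ell$ is dense in $H^{-1/2}_\ell$ in the standard Sobolev norm and $\|\cdot\|_{W_\lambda}$ is equivalent to it by \eqref{eq:W-scalar-product} and \eqref{eq:uxi-equivalent-norms}. For symmetry, unpacking \eqref{eq:W-scalar-product} and \eqref{eq:def_A_ell_geq1} gives, for every $\xi,\eta\in H^{3/2}_\ell$,
\begin{equation*}
\langle \xi,\mathcal{A}^{(\ell)}_{\lambda,\alpha}\eta\rangle_{W_\lambda}\;=\;\langle \xi,W_\lambda\cdot 2W_\lambda^{-1}(T_\lambda^{(\ell)}+\alpha)\eta\rangle_{H^{-1/2},H^{1/2}}\;=\;2\langle \xi,(T_\lambda^{(\ell)}+\alpha)\eta\rangle_{L^2},
\end{equation*}
where Proposition \ref{prop:T-W}(v) with $s=3/2$ (legitimate because $\ell\geqslant 1$) places $(T_\lambda^{(\ell)}+\alpha)\eta$ in $H^{1/2}_\ell\subset L^2_\ell$ and $\xi$ itself lies in $L^2_\ell$, so the duality pairing collapses to the $L^2$ inner product. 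Symmetry of $\mathcal{A}^{(\ell)}_{\lambda,\alpha}$ then follows from symmetry of $T_\lambda^{(\ell)}$ on $L^2_\ell$ stated in Proposition \ref{prop:T-W}(ii) and (iv).

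Semi-boundedness from below with strictly positive bottom reduces, via the previous identity with $\xi=\eta$, to the quadratic form
\begin{equation*}
\langle \xi,\mathcal{A}^{(\ell)}_{\lambda,\alpha}\xi\rangle_{W_\lambda}\;=\;2\langle \xi,T_\lambda^{(\ell)}\xi\rangle_{L^2}+2\alpha\|\xi\|_{L^2}^2,
\end{equation*}
together with the key quantitative estimate
\begin{equation*}
\langle \xi,T_\lambda^{(\ell)}\xi\rangle_{L^2}\;\geqslant\;c_{m,\ell}\int_{\mathbb{R}^3}\sqrt{\nu p^2+\lambda}\,|\widehat{\xi}(p)|^2\,\ud p,\qquad c_{m,\ell}>0,
\end{equation*}
valid for every $\xi\in H^{3/2}_\ell$, $m>m^*$, $\ell\geqslant 1$, with $c_{m,\ell}$ independent of $\lambda$. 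Granted such an estimate, the case $\alpha\geqslant 0$, $\lambda>0$ follows because its right-hand side exceeds $\sqrt{\lambda}\,\|\xi\|_{L^2}^2\gtrsim \sqrt{\lambda}\,\|\xi\|_{W_\lambda}^2$ (using $\|\xi\|_{H^{-1/2}}\leqslant\|\xi\|_{L^2}$ together with $\|\cdot\|_{H^{-1/2}}\sim\|\cdot\|_{W_\lambda}$), while for $\alpha<0$ one obtains $\langle\xi,\mathcal{A}^{(\ell)}_{\lambda,\alpha}\xi\rangle_{W_\lambda}\geqslant 2(c_{m,\ell}\sqrt{\lambda}+\alpha)\|\xi\|_{L^2}^2$, which is strictly positive as soon as $\lambda>\alpha^2/c_{m,\ell}^2$.

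The step I expect to be the main obstacle is the positivity estimate just displayed. My plan for it is: (a) decompose $\xi$ in spherical harmonics and perform the angular integration in the off-diagonal term of $\langle \xi,T_\lambda^{(\ell)}\xi\rangle_{L^2}$ via the Legendre expansion of $(p^2+q^2+\mu p\cdot q+\lambda)^{-1}$, reducing $T_\lambda^{(\ell)}$ to an explicit radial integral operator whose kernel is given in closed form in terms of the Legendre function $Q_\ell$ of the second kind; (b) treat first the scale-invariant limit $\lambda=0$, where the reduced radial operator is dilation-covariant and hence diagonalised by the Mellin transform, yielding an explicit multiplier whose strict positivity is governed by the inequality $\Lambda_\ell(m)<1$ for the natural $\ell$-sectorial analogue $\Lambda_\ell$ of the Efimov function \eqref{eq:Lambdam}; (c) check that $\Lambda_\ell(m)\leqslant\Lambda_1(m)<\Lambda_0(m)=\Lambda(m)<1$ for all $m>m^*$, using the known monotone decay of $\Lambda_\ell$ in $\ell$; (d) propagate the estimate from $\lambda=0$ to $\lambda>0$, observing that turning on $\lambda$ enlarges the diagonal part $2\pi^2\sqrt{\nu p^2+\lambda}$ and strictly reduces the off-diagonal kernel in absolute value, so that the same constant $c_{m,\ell}$ works. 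Steps (a)--(c) are the core analytic work, paralleling in the $L^2$-language the quadratic-form analysis carried out in \cite{CDFMT-2012}.
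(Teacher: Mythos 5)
Your proposal is correct and, as far as the proposition itself is concerned, follows the same route as the paper: density and symmetry are reduced to the corresponding properties of $T_\lambda^{(\ell)}$ on $L^2_\ell(\mathbb{R}^3)$ (the paper simply inherits them from Proposition \ref{lem:selfadj_hierarchy}), and the lower bound comes from the identity ${\textstyle\frac{1}{2}}\langle\xi,\mathcal{A}^{(\ell)}_{\lambda,\alpha}\xi\rangle_{H^{-1/2}_{W_\lambda}}=\langle\xi,(T_\lambda^{(\ell)}+\alpha\mathbbm{1})\xi\rangle$ combined with the coercivity estimate \eqref{eq:positivity_from_CDFMT2012} and the comparison $\|\xi\|_{L^2}\geqslant\|\xi\|_{H^{-1/2}}\approx\|\xi\|_{H^{-1/2}_{W_\lambda}}$. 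The one genuine divergence is that the paper imports the coercivity estimate wholesale from \cite[Proposition 3.1]{CDFMT-2012}, with the explicit constant $2\pi^2(1-\Lambda(m))$ uniform in $\ell$ and $\lambda$, whereas you propose to re-derive it by Legendre reduction and Mellin diagonalisation; that is in fact how the cited estimate is proved, so your plan buys self-containedness at the price of reproducing a nontrivial piece of \cite{CDFMT-2012}. If you do carry it out, two points in your sketch need repair. First, the Efimov function \eqref{eq:Lambdam} is the value at the critical Mellin point of the \emph{$\ell=1$} symbol, as the weight $y=P_1(y)$ in \eqref{eq:s-integral-equation} makes plain; your identification of $\Lambda(m)$ with the $\ell=0$ sector and the ordering $\Lambda_1<\Lambda_0$ are therefore not the statement you need --- what is needed is that the supremum over the Mellin parameter of the $\ell$-th symbol is bounded by its $\ell=1$ value, which equals $\Lambda(m)$. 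Second, in step (d) the assertion that switching on $\lambda$ ``strictly reduces the off-diagonal kernel in absolute value'' cannot be read off before the angular integration, because $P_\ell(y)$ changes sign for $\ell\geqslant 1$; one must first integrate in $y$ and verify the monotonicity in $\lambda$ of the resulting signed radial kernel (true, but a computation, not an observation). Neither issue affects the validity of the proposition, since the required bound is available verbatim in the cited reference.
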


\begin{proof}
The fact that $\mathcal{A}_{\lambda,\alpha}^{(\ell)}$ is densely defined and symmetric on  $H^{-1/2}_{W_\lambda,\ell}(\mathbb{R}^3)$ is already part of Proposition \ref{lem:selfadj_hierarchy}. Moreover, for arbitrary $\xi\in\mathcal{D}(\mathcal{A}_{\lambda,\alpha}^{(\ell)})\!=\mathcal{D}(T_\lambda^{(\ell)})=H^{3/2}_\ell(\mathbb{R}^3)$, owing to \eqref{eq:T-ell-3/2-1/2_included} and \eqref{eq:def_A_ell_geq1} one has
\[
\begin{split}
{\textstyle\frac{1}{2}}\langle\xi,\mathcal{A}_{\lambda,\alpha}^{(\ell)}\xi\rangle_{H^{-1/2}_{W_\lambda}}\;&=\;\langle\xi,(T_\lambda^{(\ell)}+\alpha\mathbbm{1})\xi\rangle_{H^{-\frac{1}{2}},H^{\frac{1}{2}}}\;=\;\int_{\mathbb{R}^3}\overline{\widehat{\xi}(p)}\:\widehat{(T_\lambda^{(\ell)}\xi)}(p)\,\ud p+\alpha \|\xi\|_2^2\,.
\end{split}
\]
In \cite[Proposition 3.1]{CDFMT-2012} it was proved that
\begin{equation}\label{eq:positivity_from_CDFMT2012}
\begin{split}
\int_{\mathbb{R}^3}\overline{\widehat{\xi}(p)}\:\widehat{(T_\lambda^{(\ell)}\xi)}(p)\,\ud p\;&\geqslant\;2\pi^2(1-\Lambda(m))\int_{\mathbb{R}^3}\sqrt{\nu p^2+\lambda}\,|\widehat{\xi}(p)|^2\,\ud p\,. 
\end{split}
\end{equation}
Therefore, recalling that $\Lambda(m)<1$ for $m>m^*$,
\[
\begin{split}
\langle\xi,\mathcal{A}_{\lambda,\alpha}^{(\ell)}\xi\rangle_{H^{-1/2}_{W_\lambda}}\;&\geqslant\;(4\pi^2\sqrt{\lambda}\,(1-\Lambda(m))+2\alpha)\,\|\xi\|_2^2\,.
\end{split}
\]
The constant in the r.h.s.~above is strictly positive when $\alpha\geqslant 0$ and $\lambda>0$, or also when $\alpha<0$ provided that $\lambda$ is large enough. Since (Proposition \ref{prop:D_Hdostar_2+1}(ii))
\[
\|\xi\|_2\;\geqslant\;\|\xi\|_{H^{-1/2}}\;\approx\;\|\xi\|_{H^{-1/2}_{W_\lambda}}\,,
\]
the conclusion then follows.
\end{proof}

As such, $\mathcal{A}_{\lambda,\alpha}^{(\ell)}$ may have other self-adjoint extensions on  $H^{-1/2}_{W_\lambda,\ell}(\mathbb{R}^3)$ than its Friedrichs extension, and it is possible to classify them by means of the Kre{\u\i}n-Vi\v{s}ik-Birman extension theory, more precisely, the self-adjoint extensions of $\mathcal{A}_{\lambda,\alpha}^{(\ell)}$ are one-to-one with self-adjoint operators acting on Hilbert subspaces of $\ker(\mathcal{A}_{\lambda,\alpha}^{(\ell)})^\star$. Observe that $(\mathcal{A}_{\lambda,\alpha}^{(\ell)})^\star$ denotes the adjoint of $\mathcal{A}_{\lambda,\alpha}^{(\ell)}$ with respect to the Hilbert space $H^{-1/2}_{W_\lambda,\ell}(\mathbb{R}^3)$.

The two inputs for the extension formula given by the Kre{\u\i}n-Vi\v{s}ik-Birman theory (\cite[Theorem 3.4]{M-KVB2015}) are the Friedrichs extension $\mathcal{A}_{\lambda,\alpha}^{(\ell),F}$ of $\mathcal{A}_{\lambda,\alpha}^{(\ell)}$ (namely the unique self-adjoint extension whose domain is entirely contained in the form domain of $\mathcal{A}_{\lambda,\alpha}^{(1)}$ and hence the highest of all other extensions) and the space $\ker(\mathcal{A}_{\lambda,\alpha}^{(\ell)})^\star$. We characterise $\mathcal{A}_{\lambda,\alpha}^{(\ell),F}$ first.

\begin{theorem}{Proposition}\label{prop:Friedrichs_ext_of_A} Let $m>m^*$, $\ell\geqslant 1$, and $\lambda>0$ large enough (depending on $\alpha$) so as to make the bottom of $\mathcal{A}_{\lambda,\alpha}^{(\ell)}$ strictly positive, as found in Proposition \ref{prop:Apos}. The Friedrichs extension of $\mathcal{A}_{\lambda,\alpha}^{(\ell)}$ (with respect to the Hilbert space $H^{-1/2}_{W_\lambda,\ell}(\mathbb{R}^3)$) is the operator
\begin{equation}\label{eq:DFriedrA}
\begin{split}
\mathcal{D}(\mathcal{A}_{\lambda,\alpha}^{(\ell),F})\;&=\;\{\xi\in H^{1/2}_{\ell}(\mathbb{R}^3)\,|\,(T_\lambda^{(\ell)}+\alpha\mathbbm{1})\xi\in  H^{1/2}_{\ell}(\mathbb{R}^3)\} \\
\mathcal{A}_{\lambda,\alpha}^{(\ell),F}\,\xi\;&=\;2\,W_\lambda^{-1}(T_\lambda^{(\ell)}+\alpha\mathbbm{1})\,\xi\,.
\end{split}
\end{equation}
Clearly, $\mathcal{D}(\mathcal{A}_{\lambda,\alpha}^{(\ell),F})\supset H^{3/2}_{\ell}(\mathbb{R}^3)$.
The (Friedrichs) quadratic form of $\mathcal{A}_{\lambda,\alpha}^{(\ell),F}$ is given by
\begin{equation}\label{eq:DformA}
\begin{split}
\mathcal{D}[\mathcal{A}_{\lambda,\alpha}^{(\ell),F}]\;&=\;H^{1/2}_{\ell}(\mathbb{R}^3) \\
\mathcal{A}_{\lambda,\alpha}^{(\ell),F}[\xi_1,\xi_2]\;&=\;2\,\langle\,\xi_1,(T_\lambda^{(\ell)}+\alpha\mathbbm{1})\xi_2\rangle_{H^{\frac{1}{2}},H^{-\frac{1}{2}}}\,.
\end{split}
\end{equation}
(Observe that $\langle\xi_1,(T_\lambda^{(\ell)}+\alpha\mathbbm{1})\xi_2\rangle_{H^{\frac{1}{2}},H^{-\frac{1}{2}}}$ is \emph{finite} for $\xi_2\in H^{1/2}_{\ell}(\mathbb{R}^3)$, due to \eqref{eq:T-ell-3/2-1/2_included}.)
\end{theorem}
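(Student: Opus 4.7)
The plan is to construct the Friedrichs extension through the standard route: identify the natural quadratic form of $\mathcal{A}^{(\ell)}_{\lambda,\alpha}$ on its operator core $H^{3/2}_\ell(\mathbb{R}^3)$, prove that it is closable by establishing two-sided bounds in terms of an equivalent norm, identify the form domain of the closure, and then apply the first representation theorem to read off the operator domain and action. For $\xi \in H^{3/2}_\ell(\mathbb{R}^3) = \mathcal{D}(\mathcal{A}^{(\ell)}_{\lambda,\alpha})$, using the definition \eqref{eq:def_A_ell_geq1} together with the identity $\langle\cdot,\cdot\rangle_{W_\lambda}=\langle\cdot,W_\lambda\cdot\rangle_{H^{-1/2},H^{1/2}}$ from \eqref{eq:W-scalar-product}, the natural quadratic form reads
\[
q_0[\xi]\;:=\;\langle \xi, \mathcal{A}^{(\ell)}_{\lambda,\alpha}\xi\rangle_{W_\lambda}\;=\;2\langle \xi,(T^{(\ell)}_\lambda+\alpha\mathbbm{1})\xi\rangle_{H^{-1/2},H^{1/2}}\,.
\]

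The lower bound \eqref{eq:positivity_from_CDFMT2012} combined with $\Lambda(m)<1$ (as in the proof of Proposition \ref{prop:Apos}) yields, for $\lambda$ large enough depending on $\alpha$, the coercivity $q_0[\xi]\gtrsim \|\xi\|_{H^{1/2}}^2$, while the matching upper bound $|q_0[\xi]|\leqslant 2\|\xi\|_{H^{1/2}}\|(T^{(\ell)}_\lambda+\alpha\mathbbm{1})\xi\|_{H^{-1/2}}\lesssim \|\xi\|_{H^{1/2}}^2$ follows by duality from the continuity $T^{(\ell)}_\lambda:H^{1/2}_\ell(\mathbb{R}^3)\to H^{-1/2}_\ell(\mathbb{R}^3)$ granted by \eqref{eq:T-ell-3/2-1/2_excluded} with $s=\tfrac{1}{2}$. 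These two bounds imply that $q_0$ is closable on $H^{-1/2}_{W_\lambda,\ell}(\mathbb{R}^3)$ and that its closure $q$ has form domain exactly $H^{1/2}_\ell(\mathbb{R}^3)$, identified as the completion of $H^{3/2}_\ell(\mathbb{R}^3)$ in the norm $\|\cdot\|_{H^{1/2}}$ (equivalent to the form-graph norm). On this domain the closed form extends by continuity to the announced expression \eqref{eq:DformA}, the pairing therein being finite precisely by the $s=\tfrac{1}{2}$ continuity of $T^{(\ell)}_\lambda$.

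By the first representation theorem, $q$ corresponds to a unique self-adjoint operator on $H^{-1/2}_{W_\lambda,\ell}(\mathbb{R}^3)$, which by construction is the Friedrichs extension $\mathcal{A}^{(\ell),F}_{\lambda,\alpha}$. Its operator domain is the set of $\xi \in H^{1/2}_\ell(\mathbb{R}^3)$ for which the antilinear functional $\eta\mapsto q[\eta,\xi]$ is continuous with respect to the $\|\cdot\|_{W_\lambda}$-norm. By the identity $\langle \eta,\zeta\rangle_{W_\lambda}=\langle \eta,W_\lambda\zeta\rangle_{H^{-1/2},H^{1/2}}$ and the isomorphism $W_\lambda:H^{-1/2}_\ell(\mathbb{R}^3)\to H^{1/2}_\ell(\mathbb{R}^3)$ from Proposition \ref{prop:T-W}(i), representability forces $(T^{(\ell)}_\lambda+\alpha\mathbbm{1})\xi \in H^{1/2}_\ell(\mathbb{R}^3)$ and uniquely determines $\mathcal{A}^{(\ell),F}_{\lambda,\alpha}\xi = 2W_\lambda^{-1}(T^{(\ell)}_\lambda+\alpha\mathbbm{1})\xi$, yielding exactly \eqref{eq:DFriedrA}. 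The inclusion $H^{3/2}_\ell(\mathbb{R}^3)\subset \mathcal{D}(\mathcal{A}^{(\ell),F}_{\lambda,\alpha})$ is then automatic, since \eqref{eq:T-ell-3/2-1/2_included} with $s=\tfrac{3}{2}$ implies $(T^{(\ell)}_\lambda+\alpha\mathbbm{1})(H^{3/2}_\ell)\subset H^{1/2}_\ell$.

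The only real subtlety, beyond the mechanical deployment of the Friedrichs machinery, lies in consistently juggling the twisted inner product $\langle\cdot,\cdot\rangle_{W_\lambda}$ against the standard $H^{1/2}$--$H^{-1/2}$ duality pairing, so that at each step the form identity one reads off truly expresses continuity in the ambient Hilbert space $H^{-1/2}_{W_\lambda,\ell}(\mathbb{R}^3)$ rather than in the larger distributional pairing; the two technical inputs that make this work are the positivity estimate \eqref{eq:positivity_from_CDFMT2012} and the continuity of $T^{(\ell)}_\lambda$ at the critical exponent $s=\tfrac{1}{2}$ supplied by Proposition \ref{prop:T-W}.
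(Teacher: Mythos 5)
Your proposal is correct and follows essentially the same route as the paper: build the Friedrichs form on $H^{3/2}_\ell(\mathbb{R}^3)$, identify its domain as $H^{1/2}_\ell(\mathbb{R}^3)$ via the two-sided equivalence of $\langle\xi,(T_\lambda^{(\ell)}+\alpha\mathbbm{1})\xi\rangle$ with the $H^{1/2}$-norm, and then extract the operator domain and action from the representation theorem using that $W_\lambda$ is a bijection $H^{-1/2}(\mathbb{R}^3)\to H^{1/2}(\mathbb{R}^3)$. The only (harmless) difference is that you assemble the norm equivalence yourself from the lower bound \eqref{eq:positivity_from_CDFMT2012} and the $s=\tfrac12$ continuity \eqref{eq:T-ell-3/2-1/2_excluded}, whereas the paper cites it wholesale from \cite[Eq.(3.52)]{CDFMT-2012}.
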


\begin{proof}
We start with the quadratic form associated with $\mathcal{A}_{\lambda,\alpha}^{(\ell),F}$ (after all, the Friedrichs extension is a form construction).
The form domain of the Friedrichs extension of $\mathcal{A}_{\lambda,\alpha}^{(\ell)}$ is given (see, e.g., \cite[Theorem A.2]{M-KVB2015} or \cite[Section 10.4]{schmu_unbdd_sa}) by the completion of $\mathcal{D}(\mathcal{A}_{\lambda,\alpha}^{(\ell)})=H^{3/2}_\ell(\mathbb{R}^3)$ in the norm 
\[
\|\xi\|_{\mathcal{A}_{\lambda,\alpha}^{(\ell)}}\;:=\;\Big(\langle\xi,\mathcal{A}_{\lambda,\alpha}^{(\ell)}\xi\rangle_{H^{-1/2}_{W_\lambda}}\Big)^{\!1/2}\;=\;\Big(2\,\langle\xi,(T_\lambda^{(\ell)}+\alpha\mathbbm{1})\xi\rangle_{H^{-\frac{1}{2}},H^{\frac{1}{2}}}\Big)^{\!1/2}\,.
\]
(It was crucial here to have taken $\lambda$ large enough so as to make the bottom of $\mathcal{A}_{\lambda,\alpha}^{(\ell)}$ strictly positive.)
We now use the fact, proved in \cite[Eq.(3.52)]{CDFMT-2012}, that for $m>m^*$ the expression $\langle\xi,(T_\lambda^{(\ell)}+\alpha\mathbbm{1})\xi\rangle$ defines a (square) norm \emph{equivalent to the $H^{1/2}$-norm} (what we denote here by $\langle\xi,(T_\lambda^{(\ell)}+\alpha\mathbbm{1})\xi\rangle$ is precisely the quantity $\Phi^\lambda_\alpha[\xi]$ in \cite{CDFMT-2012}). This yields the first identity in \eqref{eq:DformA}. Moreover (see, e.g., \cite[Theorem A.2]{M-KVB2015}), for $\xi_1,\xi_2\in\mathcal{D}[\mathcal{A}_{\lambda,\alpha}^{(\ell),F}]=H^{1/2}_{\ell}(\mathbb{R}^3)$ one has
\[
{\textstyle\frac{1}{2}}\mathcal{A}_{\lambda,\alpha}^{(\ell),F}[\xi_1,\xi_2]\;=\;{\textstyle\frac{1}{2}}\lim_{n\to\infty}\langle\xi_1^{(n)},\mathcal{A}_{\lambda,\alpha}^{(\ell)}\xi_2^{(n)}\rangle_{H^{-1/2}_{W_\lambda}}\;=\;\lim_{n\to\infty}\langle\xi_1^{(n)},(T_\lambda^{(\ell)}+\alpha\mathbbm{1})\xi_2^{(n)}\rangle_{H^{\frac{1}{2}},H^{-\frac{1}{2}}}\,,
\]
where $(\xi_1^{(n)})_n$ and $(\xi_2^{(n)})_n$ are sequences in $\mathcal{D}(\mathcal{A}_{\lambda,\alpha}^{(\ell)})=H^{3/2}_\ell(\mathbb{R}^3)$ such that $\xi_1^{(n)}\to\xi_1$ and $\xi_2^{(n)}\to\xi_2$ in the $\|\cdot\|_{\mathcal{A}_{\lambda,\alpha}^{(\ell)}}$-norm, namely the $H^{1/2}$-norm. Using again the norm equivalence proved in \cite[Eq.(3.52)]{CDFMT-2012}, one can pass to the limit inside the scalar product, thus obtaining the second identity in \eqref{eq:DformA}. Also the symmetry property
\[
\langle\xi_1^{(n)},(T_\lambda^{(\ell)}+\alpha\mathbbm{1})\xi_2^{(n)}\rangle_{H^{\frac{1}{2}},H^{-\frac{1}{2}}}\;=\;\langle(T_\lambda^{(\ell)}+\alpha\mathbbm{1})\xi_1^{(n)},\xi_2^{(n)}\rangle_{H^{-\frac{1}{2}},H^{\frac{1}{2}}}
\]
is retained in the limit, whence indeed $\mathcal{A}_{\lambda,\alpha}^{(\ell),F}[\xi_1,\xi_2]=\overline{\mathcal{A}_{\lambda,\alpha}^{(\ell),F}[\xi_2,\xi_1]}$.

The \emph{operator} $\mathcal{A}_{\lambda,\alpha}^{(\ell),F}$ is then derived by its quadratic form in the usual manner. Its domain is the space
\[
\mathcal{D}(\mathcal{A}_{\lambda,\alpha}^{(\ell),F})\;=\;\left\{\xi\in \mathcal{D}[\mathcal{A}_{\lambda,\alpha}^{(\ell),F}]=H^{1/2}_{W_\lambda,\ell}(\mathbb{R}^3)\left|\begin{array}{c}
\exists\,\zeta_\xi\in H^{-1/2}_{W_\lambda,\ell}(\mathbb{R}^3)\textrm{ such that} \\
\langle\eta,\zeta_\xi\rangle_{H^{1/2}_{W_\lambda}}=\mathcal{A}_{\lambda,\alpha}^{(\ell),F})[\eta,\xi] \\
\forall\eta\in \mathcal{D}[\mathcal{A}_{\lambda,\alpha}^{(\ell),F}]=H^{1/2}_{W_\lambda,\ell}(\mathbb{R}^3)
\end{array}\!\right.\right\}
\]
and its action is then given by
\[
\mathcal{A}_{\lambda,\alpha}^{(\ell),F}\,\xi\;=\;\zeta_\xi\,.
\]
Since $\langle\eta,\zeta_\xi\rangle_{H^{1/2}_{W_\lambda}}=\langle\eta,W_\lambda\zeta_\xi\rangle$, $\mathcal{A}_{\lambda,\alpha}^{(\ell),F})[\eta,\xi]=2\,\langle\,\eta,(T_\lambda^{(\ell)}+\alpha\mathbbm{1})\xi\rangle_{H^{\frac{1}{2}},H^{-\frac{1}{2}}}$, and $W_\lambda$ is a bijection $H^{-\frac{1}{2}}(\mathbb{R}^3)\to H^{\frac{1}{2}}(\mathbb{R}^3)$, the condition $\langle\eta,\zeta_\xi\rangle_{H^{1/2}_{W_\lambda}}=\mathcal{A}_{\lambda,\alpha}^{(\ell),F})[\eta,\xi]$ implies that $(T_\lambda^{(\ell)}+\alpha\mathbbm{1})\xi\in H^{\frac{1}{2}}(\mathbb{R}^3)$ and that $\zeta_\xi=2\,W_\lambda^{-1}(T_\lambda^{(\ell)}+\alpha\mathbbm{1})\xi$, which proves \eqref{eq:DFriedrA}.
\end{proof}

Next we characterise the crucial space $\ker(\mathcal{A}_{\lambda,\alpha}^{(\ell)})^\star$.

\begin{theorem}{Proposition}\label{lem:kerAstar}
Let $m>m^*$, $\ell\geqslant 1$, and $\lambda>0$ large enough (depending on $\alpha$) so as to make the bottom of $\mathcal{A}_{\lambda,\alpha}^{(\ell)}$ strictly positive, as found in Proposition \ref{prop:Apos}. The subspace $\ker(\mathcal{A}_{\lambda,\alpha}^{(\ell)})^\star\subset H^{-1/2}_{W_\lambda,\ell}(\mathbb{R}^3)$ consists of all the functions $\Xi\in H^{-1/2}_{W_\lambda,\ell}(\mathbb{R}^3)$ such that
\begin{equation}\label{eq:AstarXi=0}
2\pi^2\sqrt{\nu p^2+\lambda}\;\widehat{\Xi}(p)+\int_{\mathbb{R}^3}\frac{\widehat{\Xi}(q)}{p^2+q^2+\mu p\cdot q+\lambda}\,\ud q+\alpha\,\widehat{\Xi}(p)\;=\;0
\end{equation}
as a (distributional) identity in $H^{-3/2}(\mathbb{R}^3)$, and then also point-wise almost everywhere in $p\in\mathbb{R}^3$.
\end{theorem}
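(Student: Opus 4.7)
The plan is to unfold the definition of the adjoint in the twisted Hilbert space, then transpose $T_\lambda^{(\ell)}$ onto the $\Xi$ side by exploiting the symmetry of its Fourier kernel, and finally upgrade the resulting distributional identity to the pointwise a.e.\ statement. By definition of the adjoint in $H^{-1/2}_{W_\lambda,\ell}(\mathbb{R}^3)$, a charge $\Xi$ belongs to $\ker(\mathcal{A}_{\lambda,\alpha}^{(\ell)})^\star$ if and only if $\langle \mathcal{A}_{\lambda,\alpha}^{(\ell)}\xi,\Xi\rangle_{W_\lambda}=0$ for every $\xi\in H^{3/2}_\ell(\mathbb{R}^3)$. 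Inserting $\mathcal{A}_{\lambda,\alpha}^{(\ell)}=2W_\lambda^{-1}(T_\lambda^{(\ell)}+\alpha\mathbbm{1})$ and unfolding $\langle\cdot,\cdot\rangle_{W_\lambda}$ through \eqref{eq:W-scalar-product}, using the bijectivity $W_\lambda:H^{-1/2}\to H^{1/2}$ from Proposition \ref{prop:T-W}(i), this rewrites as
\[
\langle (T_\lambda^{(\ell)}+\alpha\mathbbm{1})\xi,\,\Xi\rangle_{H^{1/2},H^{-1/2}}\;=\;0 \qquad \forall\xi\in H^{3/2}_\ell(\mathbb{R}^3),
\]
the duality pairing being legitimate since $(T_\lambda^{(\ell)}+\alpha\mathbbm{1})\xi\in H^{1/2}_\ell$ by Proposition \ref{prop:T-W}(v) at $s=\tfrac{3}{2}$.

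Next I would transpose $T_\lambda^{(\ell)}$ onto $\Xi$. From \eqref{eq:Tlambda}, $T_\lambda^{(\ell)}$ decomposes on the Fourier side into the self-dual multiplier $2\pi^2\sqrt{\nu p^2+\lambda}$ plus an integral operator whose kernel $(p^2+q^2+\mu\,p\cdot q+\lambda)^{-1}$ is symmetric in $(p,q)$; a Fubini swap on Schwartz test functions gives
\[
\langle T_\lambda^{(\ell)}\xi,\Xi\rangle_{H^{1/2},H^{-1/2}}\;=\;\langle \xi,T_\lambda^{(\ell)}\Xi\rangle_{H^{3/2},H^{-3/2}},
\]
which I would extend to all $\xi\in H^{3/2}_\ell$ and $\Xi\in H^{-1/2}_\ell$ by density and by the continuity $T_\lambda^{(\ell)}:H^{-1/2}_\ell\to H^{-3/2}_\ell$ granted by Proposition \ref{prop:T-W}(v) at $s=-\tfrac{1}{2}$ (this is precisely where $\ell\geqslant 1$ matters). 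The kernel condition then reads
\[
\langle\xi,(T_\lambda^{(\ell)}+\alpha\mathbbm{1})\Xi\rangle_{H^{3/2},H^{-3/2}}\;=\;0 \qquad \forall\xi\in H^{3/2}_\ell(\mathbb{R}^3),
\]
and non-degeneracy of the $H^{3/2}_\ell$--$H^{-3/2}_\ell$ duality forces $(T_\lambda^{(\ell)}+\alpha\mathbbm{1})\Xi=0$ in $H^{-3/2}_\ell(\mathbb{R}^3)$; in Fourier variables this is exactly \eqref{eq:AstarXi=0}, understood distributionally in $H^{-3/2}(\mathbb{R}^3)$.

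It remains to upgrade to a pointwise a.e.\ identity. For $\Xi\in H^{-1/2}_\ell$ with $\ell\geqslant 1$, each of the three summands on the left-hand side of \eqref{eq:AstarXi=0} admits a measurable representative finite for a.e.~$p$: the multiplier term $2\pi^2\sqrt{\nu p^2+\lambda}\,\widehat\Xi(p)$ and the term $\alpha\,\widehat\Xi(p)$ belong to $L^2\big((1+p^2)^{-3/2}\ud p\big)\subset L^1_{\mathrm{loc}}$, while the convolution-type integral is finite for a.e.~$p$ by the Cauchy-Schwarz / Fubini estimate already displayed in the Remark after Proposition \ref{prop:asymptotic_integral_2+1}, refined by the improved $H^{-1/2}_\ell\to H^{-3/2}_\ell$ continuity available for $\ell\geqslant 1$. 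A sum of locally integrable functions whose total distributional value is zero must vanish a.e., so \eqref{eq:AstarXi=0} holds pointwise almost everywhere, concluding the proof.

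The main obstacle I anticipate is precisely this last upgrade. The Remark after Proposition \ref{prop:asymptotic_integral_2+1} warns that for a generic $\xi\in H^{-1/2}(\mathbb{R}^3)$ the integral in \eqref{eq:Tlambda} may be infinite; the angular restriction $\ell\geqslant 1$ is what rescues the situation, via the endpoint mapping property in Proposition \ref{prop:T-W}(v). Producing an \emph{a.e.-finite representative} of each addend of \eqref{eq:AstarXi=0} from the raw distributional identity---rather than merely an element of $H^{-3/2}$---is the delicate point where the angular-symmetry hypothesis must be exploited quantitatively.
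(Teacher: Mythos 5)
Your proposal is correct and follows essentially the same route as the paper: unfolding $\ker(\mathcal{A}_{\lambda,\alpha}^{(\ell)})^\star$ as orthogonality to the range in the $W_\lambda$-inner product, rewriting it via \eqref{eq:W-scalar-product} as the $H^{1/2}$--$H^{-1/2}$ pairing $\langle (T_\lambda^{(\ell)}+\alpha\mathbbm{1})\xi,\Xi\rangle=0$, transposing $T_\lambda^{(\ell)}$ onto $\Xi$ through its $L^2$-symmetry on regular approximants together with the endpoint continuity $H^{-1/2}_\ell\to H^{-3/2}_\ell$ of \eqref{eq:T-ell-3/2-1/2_included} (the paper uses an explicit sequence $\Xi_n\in H^1_\ell$ where you invoke density of Schwartz functions, an immaterial difference), and concluding by non-degeneracy of the $H^{3/2}_\ell$--$H^{-3/2}_\ell$ duality. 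Your closing paragraph on the pointwise a.e.\ upgrade is more explicit than the paper's treatment and correctly identifies $\ell\geqslant 1$ as the hypothesis that makes the integral term a.e.\ finite.
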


\begin{proof}
Let $\Xi\in \ker(\mathcal{A}_{\lambda,\alpha}^{(\ell)})^\star$ and let $(\Xi_n)_{n}$ be a sequence in $H^1_\ell(\mathbb{R}^3)$ such that $\Xi_n\rightarrow \Xi$ in the $H^{-1/2}_{\ell}$-norm, equivalently, in the $H^{-1/2}_{W_{\lambda},\ell}$-norm. Then, for arbitrary $\xi\in\mathcal{D}(\mathcal{A}_{\lambda,\alpha}^{(\ell)})= H_{\ell}^{3/2}(\mathbb{R}^3)$,
\begin{align*}
0\;&=\;{\textstyle\frac{1}{2}}\langle\Xi,\mathcal{A}_{\lambda,\alpha}^{(\ell)}\xi\rangle_{H^{-1/2}_{W_\lambda}}\;=\;\langle\Xi,(T_\lambda^{(\ell)}+\alpha\mathbbm{1})\xi\rangle_{H^{-\frac{1}{2}},H^{\frac{1}{2}}}\\ 
&=\;\lim_{n\rightarrow +\infty}\langle\Xi_n,(T_\lambda^{(\ell)}+\alpha\mathbbm{1})\xi\rangle_{H^{-\frac{1}{2}},H^{\frac{1}{2}}}\;=\;\lim_{n\rightarrow +\infty}\langle\Xi_n,(T_\lambda^{(\ell)}+\alpha\mathbbm{1})\xi\rangle_{L^2} \\
&=\;\lim_{n\rightarrow +\infty}\langle (T_\lambda^{(\ell)}+\alpha\mathbbm{1})\Xi_n,\xi\rangle_{L^2}\;=\;\lim_{n\rightarrow +\infty}\langle (T_\lambda^{(\ell)}+\alpha\mathbbm{1})\Xi_n,\xi\rangle_{H^{-\frac{3}{2}},H^{\frac{3}{2}}} \\
&=\;\langle (T_\lambda^{(\ell)}+\alpha\mathbbm{1})\Xi,\xi\rangle_{H^{-\frac{3}{2}},H^{\frac{3}{2}}}\,,
\end{align*}
where in the second step we used \eqref{eq:def_A_ell_geq1}, in the third step we used the strong continuity in the first entry of the dual product, in the fifth step we used the symmetry of the operator $T^{(\ell)}$ on $L^2(\mathbb{R}^3)$ (Proposition \ref{prop:T-W}(ii)), and in the last two steps we used the fact that $T_\lambda^{(\ell)}$, \emph{only for $\ell\geqslant 1$}, maps continuously $H^{-1/2}_{\ell}(\mathbb{R}^3)$ into $H^{-3/2}_{\ell}(\mathbb{R}^3)$ (Proposition \ref{prop:T-W}(v), bound \eqref{eq:T-ell-3/2-1/2_included}). Therefore, by duality, $ (T_\lambda^{(\ell)}+\alpha\mathbbm{1})\Xi=0$ in $H^{-3/2}_{\ell}(\mathbb{R}^3)$, and \eqref{eq:AstarXi=0} follows.
\end{proof}

\begin{remark}{Remark}
In the above proof, the \emph{finiteness} of both the double integrals
\[
\int_{\mathbb{R}^3}\ud p\;\overline{\widehat{\,\Xi}(p)}\,\Big(\int_{\mathbb{R}^3}\ud q\:\frac{\widehat{\xi}(q)}{p^2+q^2+\mu p\cdot q+\lambda}\Big)
\]
and
\[
\int_{\mathbb{R}^3}\ud q\,\Big(\int_{\mathbb{R}^3}\ud p\:\frac{\overline{\,\widehat{\Xi}(p)}}{p^2+q^2+\mu p\cdot q+\lambda}\Big)\,\widehat{\xi}(q)
\]
is a consequence of the properties of $T_\lambda$ when $\ell\geqslant 1$ (the bound \eqref{eq:T-ell-3/2-1/2_included} of Proposition \ref{prop:T-W}(v)), \emph{whereas each of them would diverge for generic $\xi$ or $\Xi$ in $H^{-1/2}(\mathbb{R}^3)$}. The exchange of the order of integration did only take place at the level of suitably regular approximants of $\Xi$ and sufficiently regular $\xi$, which is the (Fubini-based) argument by which $T_\lambda^{(\ell)}$ was proved to be \emph{symmetric} on a regular enough dense of $L^2(\mathbb{R}^3)$. In turn, the approximation was controlled by means of the $H^{-1/2}_{\ell}(\mathbb{R}^3)\to H^{-3/2}_{\ell}(\mathbb{R}^3)$ continuity of $T_\lambda^{(\ell)}$, for which $\ell\geqslant 1$ is crucial.
\end{remark}

We have thus established that $\ker(\mathcal{A}_{\lambda,\alpha}^{(\ell)})^\star$ consists of all those $\Xi$'s that are \emph{distributional} solutions to
\begin{equation}\label{eq:(T+a)Xi=0_H-1/2}
\qquad(T_\lambda^{(\ell)}+\alpha\mathbbm{1})\,\Xi\;=\;0\,,\qquad\qquad \Xi\in H^{-1/2}_{\ell}(\mathbb{R}^3)
\end{equation}
(in this context it is obviously of no relevance to distinguish between $H^{-1/2}_{W_\lambda,\ell}(\mathbb{R}^3)$ and $H^{-1/2}_{\ell}(\mathbb{R}^3)$).

\begin{remark}{Remark}
Combining \eqref{eq:DFriedrA} and \eqref{eq:(T+a)Xi=0_H-1/2} we see that in the decomposition
\[
\mathcal{D}((\mathcal{A}_{\lambda,\alpha}^{(\ell)})^\star)\;=\;\mathcal{D}(\mathcal{A}_{\lambda,\alpha}^{(\ell),F})\dotplus\ker(\mathcal{A}_{\lambda,\alpha}^{(\ell)})^\star
\]
each $\eta\in \mathcal{D}((\mathcal{A}_{\lambda,\alpha}^{(\ell)})^\star)$ decomposes as $\eta=\xi+\Xi$, where $\xi\in\mathcal{D}(\mathcal{A}_{\lambda,\alpha}^{(\ell),F})$ and hence $(\mathcal{A}_{\lambda,\alpha}^{(\ell)})^\star\,\xi=\mathcal{A}_{\lambda,\alpha}^{(\ell),F}\,\xi=2\,W_\lambda^{-1}(T_\lambda^{(\ell)}+\alpha\mathbbm{1})\,\xi$, whereas $\Xi\in\ker(\mathcal{A}_{\lambda,\alpha}^{(\ell)})^\star$ and hence $(T_\lambda^{(\ell)}+\alpha\mathbbm{1})\,\Xi=0$. We may summarise all this by saying that 
\begin{equation}
(\mathcal{A}_{\lambda,\alpha}^{(\ell)})^\star\;=\;2\,W_\lambda^{-1}(T_\lambda^{(\ell)}+\alpha\mathbbm{1})
\end{equation}
as a \emph{distributional} operator.
\end{remark}

\begin{remark}{Remark}\label{rem:remark_history}
At this point it is worth remarking once more the difference between the present rigorous application of the formalism and of the extension theory, and what was done instead in the previous literature. Since we are after $\ker(\mathcal{A}_{\lambda,\alpha}^{(\ell)})^\star$, we are led to the problem \eqref{eq:(T+a)Xi=0_H-1/2} in the unknown $\Xi\in H^{-1/2}_{W_\lambda,\ell}(\mathbb{R}^3)$, for $m>m^*$. In a number of past studies it was erroneously argued that it was \emph{the self-adjoint extensions of $T_\lambda^{(\ell)}+\alpha\mathbbm{1}$ on the Hilbert space $L^2_\ell(\mathbb{R}^3)$} the correct class in one-to-one with the self-adjoint realisations of the corresponding TMS Hamiltonian with parameter $\alpha$. Since  $T_\lambda^{(\ell)}+\alpha\mathbbm{1}$ has a strictly positive bottom as an operator on $L^2_\ell(\mathbb{R}^3)$ when $m>m^*$ (precisely for the same arguments used in the proof of Proposition \ref{prop:Apos}), then its self-adjoint extensions in $L^2_\ell(\mathbb{R}^3)$ are labelled a la Kre{\u\i}n-Vi\v{s}ik-Birman by self-adjoint operators on Hilbert subspaces of $\ker((T_\lambda^{(\ell)})^*+\alpha\mathbbm{1})$, where here $(T_\lambda^{(\ell)})^*$ denotes the adjoint of $T_\lambda^{(\ell)}$ in $L^2_\ell(\mathbb{R}^3)$. 
In a recent series of works by Minlos \cite{Minlos-2011-preprint_May_2010,Minlos-2012-preprint_30sett2011,Minlos-2012-preprint_1nov2012,Minlos-RusMathSurv-2014} the deficiency indices of the operator $T_\lambda^{(\ell)}$ and its adjoint operator $(T_\lambda^{(\ell)})^*$ on $L^2_\ell(\mathbb{R}^3)$ were thoroughly studied, showing that in fact $(T_\lambda^{(\ell)})^*$ acts precisely as $T_\lambda^{(\ell)}$ on a domain in $L^2_\ell(\mathbb{R}^3)$ which, depending on $m$, may or may not be larger than the domain of $T_\lambda^{(\ell)}$ itself. In those works, therefore, the problem that is considered is the \emph{distributional} solutions to
\begin{equation}\label{eq:(T+a)Xi=0_L2}
\qquad(T_\lambda^{(\ell)}+\alpha\mathbbm{1})\,f\;=\;0\,,\qquad\qquad f\in L^2_\ell(\mathbb{R}^3)\,.
\end{equation}
The problems \eqref{eq:(T+a)Xi=0_H-1/2} and \eqref{eq:(T+a)Xi=0_L2} have the very same \emph{distributional} formulation, however the unknown is set on different spaces. It is no surprise, then, that the dimension of the space of solutions to each such problem depends on the mass parameter $m$ in a different manner.
\end{remark}

In the equation \eqref{eq:AstarXi=0}, or also \eqref{eq:(T+a)Xi=0_H-1/2}, at fixed $\ell$ there is at least a natural $(2\ell+1)$-degeneracy, labelled by the `magnetic' quantum number $n\in\{-\ell,\dots,\ell\}$. Indeed, $T^{(\ell)}_{\ell}$ is spherically symmetric and we recall from \eqref{eq:Hs_ell_decomposition} that
\begin{equation*}
H^{-1/2}_{\ell}(\mathbb{R}^3)\;\cong\;L^2(\mathbb{R}^+,(1+r^2)^{-\frac{1}{2}}r^2\,\ud r)\otimes\mathrm{span}\{Y_{\ell,-\ell},\dots,Y_{\ell,\ell}\}\,;
\end{equation*}
thus, if $f\in L^2(\mathbb{R}^+,(1+r^2)^{-\frac{1}{2}}r^2\,\ud r)$ and $\widehat{\Xi}_n(p):=f(|p|)Y_{\ell,n}(\Omega_p)$, then $\Xi_{n^*}$ is a solution to \eqref{eq:AstarXi=0} for some $n^*\in\{-\ell,\dots,\ell\}$ if and only if $\Xi_n$ is a solution of all other $n\in\{-\ell,\dots,\ell\}$. 

\bigskip

We are now in the condition of stating our main conjecture.

\bigskip

\textsc{Main conjecture.} \emph{
Let $m>m^*$, $\ell\geqslant 1$, and $\lambda>0$ large enough (depending on $\alpha$) so as to make the bottom of $\mathcal{A}_{\lambda,\alpha}^{(\ell)}$ strictly positive, as found in Proposition \ref{prop:Apos}. There exists a mass threshold $m^{**}>m^*$, expected to be given by the unique root $m^{**}\approx(8.62)^{-1}$ of the integral equation
\begin{equation}\label{eq:m**root}
\pi\sqrt{\frac{\,m(m+2)}{(m+1)^2}}+\int_{-1}^1\!\ud y\,y\int_0^{+\infty}\!\!\ud r\,\frac{r}{\,r^2+1+\frac{2}{m+1}ry}\;=\;0\,,
\end{equation}
with the following property:
\begin{itemize}
 \item if $m\in(m^*,m^{**})$ and $\ell=1$, then equation \eqref{eq:AstarXi=0} has a \emph{unique} non-zero solution in $H^{-1/2}_{\ell}(\mathbb{R}^3)$, modulo the triple angular degeneracy;
 \item if $m\geqslant m^{**}$ and $\ell=1$, as well as if $\ell\geqslant 2$ and $m>m^*$, then equation \eqref{eq:AstarXi=0} has \emph{no} non-zero solutions in $H^{-1/2}_{\ell}(\mathbb{R}^3)$.
\end{itemize}
Equivalently, owing to Proposition \ref{lem:kerAstar},
\begin{equation}
\dim\ker(\mathcal{A}_{\lambda,\alpha}^{(\ell)})^\star\;=\;\begin{cases}
\;\;3 & m\in(m^*,m^{**})\textrm{ and }\ell=1 \\
\;\;0 & m\geqslant m^{**}\textrm{ and }\ell=1\,,\textrm{ or }m>m^*\textrm{ and }\ell\geqslant 2\,.
\end{cases}
\end{equation}
}

%

We shall discuss the evidence and the motivation for our conjecture later in Section~\ref{sec:mass_thr_and_evidence_of_conj} In the remaining part of this Section let us examine how the self-adjoint extension scheme for $\mathcal{A}_{\lambda,\alpha}^{(\ell)}$ is completed, based upon our main conjecture.

\bigskip

\underline{\textbf{Case $m\geqslant m^{**}$ and $\ell=1$, or $m>m^*$ and $\ell\geqslant 2$.}} 

Each $\mathcal{A}_{\lambda,\alpha}^{(\ell)}$, for any $\ell\geqslant 1$, is (essentially) self-adjoint on $H^{-1/2}_{W_\lambda,\ell}(\mathbb{R}^3)$, its closure being the Friedrichs extension $\mathcal{A}_{\lambda,\alpha}^{(\ell),F}$ characterised in Proposition \ref{prop:Friedrichs_ext_of_A}.

\bigskip

\underline{\textbf{Case $m\in(m^*,m^{**})$ and $\ell=1$.}} 

$\mathcal{A}_{\lambda,\alpha}^{(1)}$ has deficiency index equal to $3$ and its self-adjoint extensions on $H^{-1/2}_{W_\lambda,\ell=1}(\mathbb{R}^3)$ are one-to-one with self-adjoint operators acting on Hilbert subspaces of $\ker(\mathcal{A}_{\lambda,\alpha}^{(1)})^\star$.
Since the Friedrichs extension $\mathcal{A}_{\lambda,\alpha}^{(1),F}$ and $\mathcal{A}_{\lambda,\alpha}^{(1)}$ have the same strictly positive bottom, $(\mathcal{A}_{\lambda,\alpha}^{(1),F})^{-1}$ exists and is bounded and everywhere defined on $H^{-1/2}_{W_\lambda,1}(\mathbb{R}^3)$.
We denote by $\Xi_{1,n}$, $n\in\{-1,0,1\}$, the unique (up to multiples) solution to equation \eqref{eq:AstarXi=0} with angular symmetry $(\ell=1,n)$. Clearly, the $\Xi_{1,n}$'s span $\ker(\mathcal{A}_{\lambda,\alpha}^{(1)})^\star$, and the span of each $\Xi_{1,n}$ is a reducing subspace for both both $(\mathcal{A}_{\lambda,\alpha}^{(1)})^\star$ and $(\mathcal{A}_{\lambda,\alpha}^{(1),F})^{-1}$. Also, on $\mathrm{span}\,\{\Xi_{1,n}\}$ the only self-adjoint operators are multiplications by a real number, say, $\beta_n\in\mathbb{R}$.

\begin{theorem}{Proposition}
Let $m\in(m^*,m^{**})$ and $\lambda>0$ large enough (depending on $\alpha$) so as to make the bottom of $\mathcal{A}_{\lambda,\alpha}^{(1)}$ strictly positive, as found in Proposition \ref{prop:Apos}.
The family of self-adjoint extensions of $\mathcal{A}_{\lambda,\alpha}^{(1)}$ is a three-real-parameter family of operators
\begin{equation}\label{eq:A-extensions_op_family}
\big\{ \,\mathcal{A}_{\lambda,\alpha}^{(1),\beta}\,|\,\beta\equiv(\beta_{-1},\beta_0,\beta_1)\in(-\infty,+\infty]^3\,\big\}
\end{equation}
with
\begin{equation}\label{eq:A-extensions_op}
\begin{split}
\mathcal{D}(\mathcal{A}_{\lambda,\alpha}^{(1),\beta})\;&=\;\left\{\!\!
\begin{array}{c}
\xi\;=\;\xi_1+(\mathcal{A}_{\lambda,\alpha}^{(1),F})^{-1}\Big(\!\displaystyle\sum_{n=-1}^1 \beta_n q_n\Xi_{1,n}\Big)+\sum_{n=-1}^1 q_n\Xi_{1,n} \\
\textrm{where }\xi_1\in\mathcal{D}(\overline{\mathcal{A}_{\lambda,\alpha}^{(1)}})\textrm{ and }(q_{-1},q_0, q_1)\in\mathbb{C}^{3}
\end{array}\!\!\right\} \\
\mathcal{A}_{\lambda,\alpha}^{(1),\beta}\,\xi\;&=\;\overline{\mathcal{A}_{\lambda,\alpha}^{(1)}}\:\xi_1+\displaystyle\sum_{n=-1}^1 \beta_n q_n\Xi_{1,n}\,.
\end{split}
\end{equation}
Each `charge' $\xi\in\mathcal{D}(\mathcal{A}_{\lambda,\alpha}^{(1),\beta})$ consists of a `regular' component
\begin{equation}\label{eq:xi_reg}
\begin{split}
\xi_{\mathrm{reg}}\;&:=\;\xi_1+(\mathcal{A}_{\lambda,\alpha}^{(1),F})^{-1}\big(\!\sum_{n=-1}^1 \beta_n q_n\Xi_{1,n}\big)\;\in\;\mathcal{D}(\mathcal{A}_{\lambda,\alpha}^{(1),F}) \\
\mathcal{A}_{\lambda,\alpha}^{(1),\beta}\,\xi\;&=\;\mathcal{A}_{\lambda,\alpha}^{(1),F}\,\xi_{\mathrm{reg}}\,,
\end{split}
\end{equation}
plus a `singular' component
\[
\xi_{\mathrm{sing}}\;:=\;\sum_{n=-1}^1 q_n\Xi_{1,n}\;\in\;\ker(\mathcal{D}(\mathcal{A}_{\lambda,\alpha}^{(1)}))^\star\,.
\]
In particular, $\xi_{\mathrm{reg}}$ has the following properties:
\begin{equation}\label{eq:properties_of_xi_in_DAb1}
\begin{split}
&\xi_{\mathrm{reg}}\in H^{1/2}_{\ell=1}(\mathbb{R}^3)\,,\qquad(T_\lambda^{(1)}+\alpha\mathbbm{1})\,\xi_{\mathrm{reg}}\in  H^{1/2}_{\ell=1}(\mathbb{R}^3)\,, \\
2\,\langle\Xi_{1,n},&(T_\lambda^{(1)}+\alpha\mathbbm{1})\,\xi_{\mathrm{reg}}\rangle_{H^{-\frac{1}{2}},H^{\frac{1}{2}}}\;=\;\beta_n q_n\|\Xi_{1,n}\|^2_{H^{-1/2}_{W_\lambda}}\qquad n\in\{-1,0,1\}\,.
\end{split}
\end{equation}
The quadratic form of each such extension is given by
\begin{equation}\label{eq:quadratic_forms_Abeta_v2}
\begin{split}
\mathcal{D}[\mathcal{A}_{\lambda,\alpha}^{(1),\beta}]\;&=\;H^{1/2}_{\ell=1}(\mathbb{R}^3)\:\dotplus\:\mathrm{span}\{\Xi_{1,-1},\Xi_{1,0},\Xi_{1,1}\}\\
\mathcal{A}_{\lambda,\alpha}^{(1),\beta}\Big[\xi+\sum_{n=-1}^1 q_n\Xi_{1,n}\Big]\;&= \\
&\hspace{-2cm}=\;2\,\langle\xi,(T_\lambda^{(1)}+\alpha\mathbbm{1})\xi\rangle_{H^{\frac{1}{2}},H^{-\frac{1}{2}}}+\sum_{n=-1}^1 \beta_n|q_n|^2\|\Xi_{1,n}\|^2_{H^{-1/2}_{W_\lambda}}\,.
\end{split}
\end{equation}
\end{theorem}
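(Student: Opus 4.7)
The plan is to apply the Kre{\u\i}n--Vi\v{s}ik--Birman extension theorem \cite[Theorem 3.4]{M-KVB2015} to the triple consisting of $\mathcal{A}_{\lambda,\alpha}^{(1)}$, its Friedrichs extension $\mathcal{A}_{\lambda,\alpha}^{(1),F}$ (Proposition \ref{prop:Friedrichs_ext_of_A}), and the three-dimensional deficiency space $\ker(\mathcal{A}_{\lambda,\alpha}^{(1)})^\star$ supplied by Proposition \ref{lem:kerAstar} together with the Main Conjecture. First I would exploit the rotational symmetry of both $T_\lambda^{(1)}$ and $W_\lambda$: by Remark \ref{rem:twisted_orthogonality}, elements of distinct magnetic symmetry $n\in\{-1,0,1\}$ are $\langle\cdot,\cdot\rangle_{W_\lambda}$-orthogonal, so the three one-dimensional subspaces $\mathrm{span}\{\Xi_{1,n}\}$ are mutually orthogonal reducing subspaces for both $(\mathcal{A}_{\lambda,\alpha}^{(1)})^\star$ and for the bounded, everywhere defined operator $(\mathcal{A}_{\lambda,\alpha}^{(1),F})^{-1}$ (which exists because the Friedrichs extension has strictly positive bottom, by Proposition \ref{prop:Apos}).

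Second, the KVB bijection between self-adjoint extensions of $\mathcal{A}_{\lambda,\alpha}^{(1)}$ and self-adjoint operators on closed subspaces of $\ker(\mathcal{A}_{\lambda,\alpha}^{(1)})^\star$ factors through the reducing structure: any such parameter is the direct sum of self-adjoint operators on subspaces of $\mathrm{span}\{\Xi_{1,n}\}$, and on a one-dimensional channel the only options are multiplication by $\beta_n\in\mathbb{R}$ (with the convention $\beta_n=+\infty$ encoding the Friedrichs choice, that is, absence of $\Xi_{1,n}$ from the singular component). This at once produces the parameter set $(-\infty,+\infty]^3$ and, by direct substitution into \cite[Theorem 3.4]{M-KVB2015}, the domain and action \eqref{eq:A-extensions_op}, as well as the splitting $\xi=\xi_{\mathrm{reg}}+\xi_{\mathrm{sing}}$ with $\mathcal{A}_{\lambda,\alpha}^{(1),\beta}\,\xi=\mathcal{A}_{\lambda,\alpha}^{(1),F}\,\xi_{\mathrm{reg}}$.

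The properties \eqref{eq:properties_of_xi_in_DAb1} of $\xi_{\mathrm{reg}}$ then follow from the characterisation \eqref{eq:DFriedrA} of the Friedrichs extension: by construction $\xi_{\mathrm{reg}}\in\mathcal{D}(\mathcal{A}_{\lambda,\alpha}^{(1),F})$ carries the regularity claims, while pairing $\mathcal{A}_{\lambda,\alpha}^{(1),F}\xi_{\mathrm{reg}}=2\,W_\lambda^{-1}(T_\lambda^{(1)}+\alpha\mathbbm{1})\xi_{\mathrm{reg}}$ against $\Xi_{1,n}$ in $H^{-1/2}_{W_\lambda}$ and invoking the duality identity $\langle\zeta,W_\lambda^{-1}\eta\rangle_{H^{-1/2}_{W_\lambda}}=\langle\zeta,\eta\rangle_{H^{-1/2},H^{1/2}}$ (a consequence of \eqref{eq:W-scalar-product}) yields
\[
2\,\langle\Xi_{1,n},(T_\lambda^{(1)}+\alpha\mathbbm{1})\xi_{\mathrm{reg}}\rangle_{H^{-\frac{1}{2}},H^{\frac{1}{2}}}\;=\;\langle\Xi_{1,n},\mathcal{A}_{\lambda,\alpha}^{(1),F}\xi_{\mathrm{reg}}\rangle_{H^{-1/2}_{W_\lambda}},
\]
the right-hand side of which reduces, by orthogonality between distinct channels and the explicit expansion of $\xi_{\mathrm{reg}}$, to $\beta_n q_n\|\Xi_{1,n}\|^2_{H^{-1/2}_{W_\lambda}}$. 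The quadratic form formula \eqref{eq:quadratic_forms_Abeta_v2} is then the companion KVB form identity applied to $\mathcal{A}_{\lambda,\alpha}^{(1),\beta}$: one inserts \eqref{eq:DformA} for the Friedrichs contribution and uses the very same channel-orthogonality to kill the cross terms, leaving only the diagonal singular piece $\sum_n\beta_n|q_n|^2\|\Xi_{1,n}\|^2_{H^{-1/2}_{W_\lambda}}$.

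The genuinely delicate point is the bookkeeping associated with the unnormalised basis $\{\Xi_{1,n}\}$: the weights $\|\Xi_{1,n}\|^2_{H^{-1/2}_{W_\lambda}}$ in \eqref{eq:properties_of_xi_in_DAb1} and \eqref{eq:quadratic_forms_Abeta_v2} appear precisely because each $\Xi_{1,n}$ is only defined up to a scalar by \eqref{eq:AstarXi=0}, and the identification of each $\beta_n$ as a real multiplier (rather than as a sesquilinear form on $\mathrm{span}\{\Xi_{1,n}\}$) must be tracked coherently across the KVB operator and form formulas. This, together with the ``$\beta_n=+\infty$'' convention that collapses a direction into the Friedrichs component, is the only nontrivial bookkeeping in what is otherwise a mechanical specialisation of the extension theorem to a three-dimensional deficiency space split by symmetry.
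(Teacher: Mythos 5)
Your proposal is correct and follows essentially the same route as the paper: an application of the Kre{\u\i}n--Vi\v{s}ik--Birman extension formula of \cite[Theorem 3.4]{M-KVB2015} to the triple $(\mathcal{A}_{\lambda,\alpha}^{(1)},\mathcal{A}_{\lambda,\alpha}^{(1),F},\ker(\mathcal{A}_{\lambda,\alpha}^{(1)})^\star)$, with the three-dimensionality of the deficiency space (and its splitting into the $W_\lambda$-orthogonal channels $\mathrm{span}\{\Xi_{1,n}\}$) supplying the parameter set $(-\infty,+\infty]^3$, followed by the same pairing computation for the last line of \eqref{eq:properties_of_xi_in_DAb1} and the companion KVB form identity for \eqref{eq:quadratic_forms_Abeta_v2}. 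The only detail worth making explicit in your pairing step is that the contribution of $\overline{\mathcal{A}_{\lambda,\alpha}^{(1)}}\,\xi_1$ vanishes precisely because $(\mathcal{A}_{\lambda,\alpha}^{(1)})^\star\Xi_{1,n}=0$, not merely by channel orthogonality.
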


\begin{proof}
The three-parameter nature of the family \eqref{eq:A-extensions_op_family} for $m\in(m^*,m^{**})$ is an immediate consequence of the dimensionality of $\ker(\mathcal{A}_{\lambda,\alpha}^{(1)})^\star$, and \eqref{eq:A-extensions_op} is then the explicit Kre{\u\i}n-Vi\v{s}ik-Birman extension formula applied to this case (we refer to \cite[Theorem 3.4]{M-KVB2015}, for concreteness). By construction the vector in \eqref{eq:xi_reg} belongs to the domain of the Friedrichs extension $\mathcal{A}_{\lambda,\alpha}^{(1),F}$, thus the first two properties of \eqref{eq:properties_of_xi_in_DAb1} follow at once from \eqref{eq:DFriedrA}. Moreover,
\[
\begin{split}
2\,\langle&\Xi_{1,n},(T_\lambda^{(1)}+\alpha\mathbbm{1})\,\xi_{\mathrm{reg}}\rangle_{H^{-\frac{1}{2}},H^{\frac{1}{2}}}\;=\;\langle \Xi_{1,n},\mathcal{A}_{\lambda,\alpha}^{(1),F}\,\xi_{\mathrm{reg}}\rangle_{H^{-1/2}_{W_\lambda}} \\
&=\;\langle \Xi_{1,n},\mathcal{A}_{\lambda,\alpha}^{(1),\beta}\,\xi\rangle_{H^{-1/2}_{W_\lambda}}\;=\;\langle \Xi_{1,n}\overline{\mathcal{A}_{\lambda,\alpha}^{(1)}}\:\xi_1\rangle_{H^{-1/2}_{W_\lambda}}+\displaystyle\sum_{a=-1}^1 \beta_a q_a\langle\Xi_{1,n},\Xi_{1,a}\rangle_{H^{-1/2}_{W_\lambda}} \\
&=\;\beta_n q_n\|\Xi_{1,n}\|^2_{H^{-1/2}_{W_\lambda}}
\end{split}
\]
where we used \eqref{eq:DFriedrA} in the first step, \eqref{eq:xi_reg} in the second, \eqref{eq:A-extensions_op} in the third, and the $W_\lambda$-orthogonality of the $\Xi_{1,a}$'s, together with $(\mathcal{A}_{\lambda,\alpha}^{(1)})^\star\Xi_{1,a}=0$, in the fourth. This yields the last property of \eqref{eq:properties_of_xi_in_DAb1}. Next to the classification \eqref{eq:A-extensions_op}, the Kre{\u\i}n-Vi\v{s}ik-Birman theory provides also a convenient parametrisation of the self-adjoint extensions of $\mathcal{A}_{\lambda,\alpha}^{(1)}$ in terms of their quadratic forms. Referring, for concreteness, to \cite[Theorem 3.4]{M-KVB2015}, we find
\begin{equation*}
\begin{split}
\mathcal{D}[\mathcal{A}_{\lambda,\alpha}^{(1),\beta}]\;&=\;\mathcal{D}[\mathcal{A}_{\lambda,\alpha}^{(1),F}]\:\dotplus\:\ker(\mathcal{D}(\mathcal{A}_{\lambda,\alpha}^{(1)}))^\star \\
\mathcal{A}_{\lambda,\alpha}^{(1),\beta}\Big[\xi+\sum_{n=-1}^1 q_n\Xi_{1,n}\Big]\;&=\;\mathcal{A}_{\lambda,\alpha}^{(1)}[\xi]+\sum_{n=-1}^1 \beta_n|q_n|^2\|\Xi_{1,n}\|^2_{H^{-1/2}_{W_\lambda}}\,.
\end{split}
\end{equation*}
Owing to \eqref{eq:DformA}, this reads precisely as \eqref{eq:quadratic_forms_Abeta_v2}.
\end{proof}

Having conjectured that $\ker(\mathcal{D}(\mathcal{A}_{\lambda,\alpha}^{(1)}))^\star$ is finite-dimensional, necessarily each $\mathcal{A}_{\lambda,\alpha}^{(1),\beta}$ is semi-bounded below, \cite[Corollary 5.7]{M-KVB2015}.

The Friedrichs extension corresponds, in the parametrisation \eqref{eq:A-extensions_op}, to the case ``$\beta=\infty$'' \cite[Proposition 3.7]{M-KVB2015}. It is characterised in Proposition \ref{prop:Friedrichs_ext_of_A} above.


\section{Auxiliary operator $\mathcal{A}_{\lambda,\alpha}$ in the sector $\ell=0$}\label{sec:A_ell_0}

So far, on the sector of symmetry $\ell=0$ the `Kre{\u\i}n-Vi\v{s}ik-Birman extension parameter' $\mathcal{A}_{\lambda,\alpha}$ was generically qualified  to be a self-adjoint operator $\mathcal{A}_{\lambda,\alpha}^{(\ell=0)}$ on $H^{-1/2}_{W_\lambda,\ell=0}(\mathbb{R}^3)$. 

The peculiarity of the sector $\ell=0$ is that the bound \eqref{eq:T-ell-3/2-1/2_included}, namely the $H^{3/2}_{\ell}(\mathbb{R}^3)\to H^{1/2}_{\ell}(\mathbb{R}^3)$ continuity of $T_\lambda^{(\ell)}$, is lost, and in fact on the opposite \eqref{eq:a_dense_not_in_it} holds, which makes the composition $W_\lambda^{-1} T_\lambda$ in general ill-defined on spherically symmetric functions -- for sure ill-defined on a very much natural dense of $H^{-1/2}_{W_\lambda,\ell=0}(\mathbb{R}^3)$ !

On the other hand, in our discussion on the sectors $\ell\geqslant 2$ the essential feature of the extension parameter $\mathcal{A}_{\lambda,\alpha}^{(\ell)}$ was to have a Friedrichs extension whose quadratic form \eqref{eq:DformA} takes the expression $\xi\mapsto 2\,\langle\,\xi,(T_\lambda^{(\ell)}+\alpha\mathbbm{1})\xi\rangle_{H^{\frac{1}{2}},H^{-\frac{1}{2}}}$.

This motivates the following construction of $\mathcal{A}_{\lambda,\alpha}^{(\ell=0)}$ .



\begin{theorem}{Proposition}\label{prop:ext_param_l=0}
Let $m>m^*$ and $\lambda>0$ large enough. On the Hilbert space $H^{-1/2}_{W_\lambda,\ell=0}(\mathbb{R}^3)$ the expression
\begin{equation}\label{eq:qform}
\begin{split}
q[\xi_1,\xi_2]\;&:=\;2\,\langle\xi_1,(T^{(0)}_\lambda+\alpha\mathbbm{1})\xi_2)\rangle_{H^{\frac{1}{2}},H^{-\frac{1}{2}}} \\
\xi_1,\xi_2\in\mathcal{D}(q)\;&:=\;H^{1/2}_{\ell=0}(\mathbb{R}^3)
\end{split}
\end{equation}
defines a densely defined, positive definite, and closed quadratic form, with strictly positive bottom. Let $\mathcal{A}_{\lambda,\alpha}^{(\ell=0)}$ be the unique self-adjoint operator on $H^{-1/2}_{W_\lambda,\ell=0}(\mathbb{R}^3)$ that realises such a form, that is, 
\begin{equation}\label{eq:Al0}
\begin{split}
\mathcal{D}[\mathcal{A}_{\lambda,\alpha}^{(\ell=0)}]\;&=\;H^{1/2}_{\ell=0}(\mathbb{R}^3)  \\
\mathcal{A}_{\lambda,\alpha}^{(\ell=0)}[\xi_1,\xi_2]\;&=\;2\,\langle\xi_1,(T^{(0)}_\lambda+\alpha\mathbbm{1})\xi_2)\rangle_{H^{\frac{1}{2}},H^{-\frac{1}{2}}}\,.
\end{split}
\end{equation}
Then
\begin{equation}\label{eq:DA_op_l0}
\begin{split}
\mathcal{D}(\mathcal{A}_{\lambda,\alpha}^{(\ell=0)})\;&=\;\{\xi\in H^{1/2}_{\ell=0}(\mathbb{R}^3)\,|\,(T_\lambda^{(0)}+\alpha\mathbbm{1})\xi\in  H^{1/2}_{\ell=0}(\mathbb{R}^3)\} \\
\mathcal{A}_{\lambda,\alpha}^{(\ell=0)}\,\xi\;&=\;2\,W_\lambda^{-1}(T_\lambda^{(0)}+\alpha\mathbbm{1})\,\xi\,.
\end{split}
\end{equation}
\end{theorem}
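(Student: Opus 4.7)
The plan is to apply Kato's representation theorem for closed, densely defined, symmetric, semibounded quadratic forms, thereby constructing $\mathcal{A}_{\lambda,\alpha}^{(\ell=0)}$ as the unique self-adjoint operator associated with $q$, and then to identify its domain and action. A form construction is genuinely necessary here: the obstruction \eqref{eq:a_dense_not_in_it} prevents a direct operator definition of $2\,W_\lambda^{-1}(T_\lambda^{(0)}+\alpha\mathbbm{1})$ on $H^{3/2}_{\ell=0}(\mathbb{R}^3)$, because $W_\lambda^{-1}T_\lambda^{(0)}\xi$ may be ill-defined on the natural dense subspace of smooth Fourier-compactly-supported functions. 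The form approach bypasses this since only the weaker $H^{1/2}\to H^{-1/2}$ mapping property of $T_\lambda^{(0)}$ is invoked, and that property is available in the $\ell=0$ sector by Proposition~\ref{prop:T-W}(v), bound \eqref{eq:T-ell-3/2-1/2_excluded} at $s=\frac{1}{2}$ (a value in the open interval where no restriction on $\ell$ is imposed).

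First I would verify the four inputs required by Kato's theorem. Well-definedness of \eqref{eq:qform} together with the upper bound $|q[\xi_1,\xi_2]|\lesssim\|\xi_1\|_{H^{1/2}}\|\xi_2\|_{H^{1/2}}$ follows from the continuity of $T_\lambda^{(0)}:H^{1/2}_{\ell=0}(\mathbb{R}^3)\to H^{-1/2}_{\ell=0}(\mathbb{R}^3)$ just mentioned. Density of $H^{1/2}_{\ell=0}(\mathbb{R}^3)$ in $H^{-1/2}_{W_\lambda,\ell=0}(\mathbb{R}^3)$ is immediate from the density of $H^{1/2}$ in $H^{-1/2}$ and the equivalence of $\langle\cdot,\cdot\rangle_{W_\lambda}$ with the standard $H^{-1/2}$-scalar product. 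Symmetry I would first check on the dense subclass of $\xi$'s with $\widehat{\xi}\in C^\infty_0(\mathbb{R}^3)$, where both $\xi$ and $T_\lambda^{(0)}\xi$ lie in $L^2$ and Proposition~\ref{prop:T-W}(ii) directly applies, and then extend to all of $H^{1/2}_{\ell=0}(\mathbb{R}^3)$ via the $H^{1/2}\to H^{-1/2}$ continuity. Strict positivity of the bottom I would derive from the coercivity estimate \eqref{eq:positivity_from_CDFMT2012} of \cite[Prop.~3.1]{CDFMT-2012}, valid for arbitrary $\xi\in H^{1/2}(\mathbb{R}^3)$: exactly as in the proof of Proposition~\ref{prop:Apos}, it yields $q[\xi,\xi]\geqslant(4\pi^2(1-\Lambda(m))\sqrt{\lambda}+2\alpha)\,\|\xi\|_2^2$, strictly positive for $m>m^*$ and $\lambda$ large enough. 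The same estimate actually furnishes the stronger lower bound $q[\xi,\xi]\gtrsim\int\sqrt{\nu p^2+\lambda}\,|\widehat{\xi}(p)|^2\,\ud p\sim\|\xi\|_{H^{1/2}}^2$, which together with the upper bound shows that the form norm is equivalent to the ambient $H^{1/2}$-norm; since $H^{1/2}_{\ell=0}(\mathbb{R}^3)$ is complete in the latter, $q$ is closed.

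Once Kato's theorem produces $\mathcal{A}_{\lambda,\alpha}^{(\ell=0)}$ satisfying \eqref{eq:Al0}, I would extract \eqref{eq:DA_op_l0} from the standard characterisation of the operator associated with a closed form: $\xi\in\mathcal{D}(\mathcal{A}_{\lambda,\alpha}^{(\ell=0)})$ iff there exists $\zeta_\xi\in H^{-1/2}_{W_\lambda,\ell=0}(\mathbb{R}^3)$ with $\langle\eta,\zeta_\xi\rangle_{W_\lambda}=q[\eta,\xi]$ for all $\eta\in H^{1/2}_{\ell=0}(\mathbb{R}^3)$, and then $\mathcal{A}_{\lambda,\alpha}^{(\ell=0)}\xi=\zeta_\xi$. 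Rewriting $\langle\eta,\zeta_\xi\rangle_{W_\lambda}=\langle\eta,W_\lambda\zeta_\xi\rangle_{H^{-\frac{1}{2}},H^{\frac{1}{2}}}$ and recalling from Proposition~\ref{prop:T-W}(i) that $W_\lambda$ is a bounded bijection $H^{-1/2}(\mathbb{R}^3)\to H^{1/2}(\mathbb{R}^3)$, the defining relation reduces to the dual-pairing identity $\langle\eta,W_\lambda\zeta_\xi\rangle=2\,\langle\eta,(T_\lambda^{(0)}+\alpha\mathbbm{1})\xi\rangle$ for every $\eta\in H^{1/2}_{\ell=0}(\mathbb{R}^3)$. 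Density of $H^{1/2}_{\ell=0}$ in $H^{-1/2}_{\ell=0}$ forces $W_\lambda\zeta_\xi=2(T_\lambda^{(0)}+\alpha\mathbbm{1})\xi$ as an identity in $H^{1/2}_{\ell=0}(\mathbb{R}^3)$, which in particular requires $(T_\lambda^{(0)}+\alpha\mathbbm{1})\xi\in H^{1/2}_{\ell=0}(\mathbb{R}^3)$; conversely, each such $\xi$ qualifies with $\zeta_\xi=2\,W_\lambda^{-1}(T_\lambda^{(0)}+\alpha\mathbbm{1})\xi$. This is precisely \eqref{eq:DA_op_l0}. The main obstacle in the argument is the combined closedness/strict-positivity step, which hinges entirely on the availability of the CDFMT coercivity bound in the form sense on $H^{1/2}(\mathbb{R}^3)$ — once that is secured, the rest is a routine application of the form representation machinery.
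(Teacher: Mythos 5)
Your proposal is correct and follows essentially the same route as the paper: verify that $q$ is densely defined, bounded below with strictly positive bottom via the coercivity bound of \cite[Proposition 3.1]{CDFMT-2012}, closed because the form norm is equivalent to the $H^{1/2}$-norm, then invoke the standard representation theorem and extract the operator domain and action exactly as in the derivation of \eqref{eq:DFriedrA} from \eqref{eq:DformA}. Your added observations — that the $H^{1/2}\to H^{-1/2}$ continuity of $T_\lambda^{(0)}$ holds at $s=\tfrac{1}{2}$ without any restriction on $\ell$, and the explicit density argument for the symmetry of $q$ — are consistent with and merely flesh out the paper's argument.
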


\begin{proof}
The density of $\mathcal{D}(q)$ in  $H^{-1/2}_{W_\lambda,\ell=0}(\mathbb{R}^3)$ is obvious, and the boundedness from below of $q$ with strictly positive bottom follows by precisely the same argument used in the proof of Proposition \ref{prop:Apos}: in fact, the bound \eqref{eq:positivity_from_CDFMT2012} was proved in \cite[Proposition 3.1]{CDFMT-2012} also for $\ell=0$. Moreover, analogously to what argued in the proof of Proposition \ref{prop:Friedrichs_ext_of_A}, $q[\xi,\xi]$ defines a norm in $H^{-1/2}_{W_\lambda,\ell=0}(\mathbb{R}^3)$ which is equivalent to the $H^{1/2}$-norm, as a consequence of the bounds \cite[Eq.(3.52)]{CDFMT-2012}. Thus, $\mathcal{D}(q)$ is complete in the norm $q[\xi,\xi]$, which means that $q$ is closed. It is therefore standard that there is a unique self-adjoint operator $\mathcal{A}_{\lambda,\alpha}^{(\ell=0)}$ whose form domain is precisely  $H^{1/2}_{\ell=0}(\mathbb{R}^3)$ and $\mathcal{A}_{\lambda,\alpha}^{(\ell=0)}[\xi_1,\xi_2]=q[\xi_1,\xi_2]$. The operator domain and the operator action \eqref{eq:DA_op_l0} are deduced from the form domain and the form action \eqref{eq:Al0} exactly the same way we derived \eqref{eq:DFriedrA} from \eqref{eq:DformA} in the proof of Proposition \ref{eq:DFriedrA}: one can indeed repeat verbatim the same argument, for there was nothing constraining that part of the proof of Proposition \ref{eq:DFriedrA} to non-zero $\ell$'s.
\end{proof}

\section{Reconstruction of the self-adjoint TMS Hamiltonians}\label{sec:reconstr_Halpha}

We analyse in this Section the explicit structure of the self-adjoint extensions of $\mathring{H}$ of the class $\mathring{H}_{\alpha}$ given by \eqref{eq:D_Halpha_2+1}, that is, those extensions of TMS type. We show that, following our main conjecture stated in Section \ref{sec:ext_scheme_for_A}, \emph{we retrieve precisely the family of TMS self-adjoint Hamiltonians derived in the physical literature, consistently also with previous partial attempts in the mathematical literature developed in the language of the quadratic forms}.

In this Section too we assume $m>m^*$ and, for arbitrary $\alpha$, we assume $\lambda$ to be large enough, in the sense of Propositions \ref{prop:Apos} and \ref{lem:kerAstar}.

In view of the discussion of Sections \ref{sec:ext_scheme_for_H-TMS} and \ref{sec:ext_scheme_for_A} above, we are led to consider the self-adjoint extensions $\mathring{H}^{[\beta]}_{\alpha}$, $\beta\in\mathbb{R}^3$, whose extension parameter on the space of charges $H^{-1/2}_{W_\lambda}(\mathbb{R}^3)=\bigoplus_{\ell\geqslant 0}H^{-1/2}_{W_\lambda,\ell}(\mathbb{R}^3)$ is the self-adjoint operator
\begin{equation}\label{eq:ext_param_A_lab}
\mathcal{A}_{\lambda,\alpha}^{[\beta]}\;=\;\mathcal{A}_{\lambda,\alpha}^{(\ell=0)}\;\oplus\;\mathcal{A}_{\lambda,\alpha}^{(1),\beta}\oplus\;\big({\textstyle\bigoplus_{\ell=2}^\infty}\,\overline{\,\mathcal{A}_{\lambda,\alpha}^{(\ell)}}\,\big)\,.
\end{equation}
It is understood, in view of our main conjecture, that on the $\ell=1$ sector there is a multiplicity of choices $\beta\in(-\infty,+\infty]^3$ only in the regime $m\in(m^*,m^{**})$, whereas for $m\geqslant m^{**}$ the only choice ($\beta=\infty$) is operator closure of $\mathcal{A}_{\lambda,\alpha}^{(1)}$, namely its unique self-adjoint realisation. Moreover, due to our conjecture, on the $\ell\geqslant 2$ sectors the operator closure of $\mathcal{A}_{\lambda,\alpha}^{(\ell)}$ is its unique self-adjoint extension. On the $\ell=0$ sector $\mathcal{A}_{\lambda,\alpha}^{(\ell=0)}$ is by construction already self-adjoint. 



We observe that $\{\mathcal{A}_{\lambda,\alpha}^{[\beta]}\,|\,\beta\in(-\infty,+\infty]^3\}$ is the family of self-adjoint extensions of the operator $\mathcal{A}_{\lambda,\alpha}$ defined in \eqref{eq:def_A_all_sectors} on the Hilbert space $H^{-1/2}_{W_\lambda}(\mathbb{R}^3)$, with the specification of $\mathcal{A}_{\lambda,\alpha}^{(\ell=0)}$ made in Section \ref{sec:A_ell_0}.

Corresponding to each $\mathcal{A}_{\lambda,\alpha}^{[\beta]}$, the Kre{\u\i}n-Vi\v{s}ik-Birman theory associates a Hamiltonian $H_{\alpha}^{[\beta]}$ on the physical Hilbert space $\cH=L^2_\mathrm{f}(\mathbb{R}^3\times\mathbb{R}^3,\ud x_1\ud x_2)$, which is a self-adjoint extension of the initial operator $\mathring{H}$ and is given by the extension formula (see \eqref{eq:D_Halpha_2+1} and \eqref{eq:ext_formula_etaAxi} above)
\begin{equation}\label{eq:def_Halpha_beta}
\begin{split}
\mathcal{D}(H_{\alpha}^{[\beta]})\;&:=\;
\left\{\,g=f+(\mathring{H}_F+\lambda\mathbbm{1})^{-1}\big(u_{\mathcal{A}_{\lambda,\alpha}^{[\beta]}\xi}\big)+u_\xi\left|
\begin{array}{c}
f\in\mathcal{D}(\mathring{H}) \\
\xi\in\mathcal{D}(\mathcal{A}_{\lambda,\alpha}^{[\beta]})
\end{array}\!\!\right.\right\}\,, \\
&\:=\;\,\big\{g\in\mathcal{D}(\mathring{H}^*)\,|\,\eta=\mathcal{A}_{\lambda,\alpha}^{[\beta]}\,\xi\big\} \\
H_{\alpha}^{[\beta]}\;&:=\;\mathring{H}^*\upharpoonright\mathcal{D}(H_{\alpha}^{[\beta]})\,.
\end{split}
\end{equation}

The Kre{\u\i}n-Vi\v{s}ik-Birman theory provides another extension formula (see, e.g., \cite[Theorem 3.4]{M-KVB2015})
that gives the quadratic form of $H_{\alpha}^{[\beta]}$. We find:
\begin{equation}\label{eq:def_Halpha_beta_qform}
\begin{split}
\mathcal{D}[H_{\alpha}^{[\beta]}]\;&:=\;\mathcal{D}[\mathring{H}_F]\dotplus \mathcal{D}[U_\lambda^{-1}\mathcal{A}_{\lambda,\alpha}^{[\beta]} U_\lambda] \\
&=\;H^1_f(\mathbb{R}^3\times\mathbb{R}^3)\dotplus U_\lambda^{-1}\mathcal{D}[\mathcal{A}_{\lambda,\alpha}^{[\beta]}] \\
(H_{\alpha}^{[\beta]}+\lambda\mathbbm{1})[F+u_\xi]\;&:=\;(\mathring{H}_F+\lambda\mathbbm{1})[F]+\mathcal{A}_{\lambda,\alpha}^{[\beta]}[
\xi]\,.
\end{split}
\end{equation}
Recall that $U_\lambda:\ker (\mathring{H}^*+\lambda\mathbbm{1})\xrightarrow[]{\cong}H^{-1/2}_{W_\lambda}(\mathbb{R}^3)$ is the unitary isomorphism \eqref{eq:isomorphism_Ulambda}, thus $U_\lambda^{-1}\mathcal{A}_{\lambda,\alpha}^{[\beta]} U_\lambda$ is the self-adjoint operator on the Hilbert space $\ker (\mathring{H}^*+\lambda\mathbbm{1})$ which is unitarily equivalent to $\mathcal{A}_{\lambda,\alpha}^{[\beta]}$, and $(U_\lambda^{-1}\mathcal{A}_{\lambda,\alpha}^{[\beta]} U_\lambda)[u_\xi]=\mathcal{A}_{\lambda,\alpha}^{[\beta]}[\xi]$.

Whereas
\begin{equation}\label{eq:saext}
\mathring{H}\;\subset\; H_{\alpha}^{[\beta]}\;=\;H_{\alpha}^{[\beta]*}\;\subset\;\mathring{H}^*
\end{equation}
is thus guaranteed by the Kre{\u\i}n-Vi\v{s}ik-Birman extension theory, we want now to make the structure of $H_{\alpha}^{[\beta]}$ explicit both in the operator and in the quadratic form sense, and we want also to discuss the TMS nature of each such extension.

The results are cast in the following two Theorems. 

\begin{theorem}{Theorem}\label{thm:DHab}
With the notation and under the assumptions stated above in this Section, the operator domain of the self-adjoint extension $H_{\alpha}^{[\beta]}$ is the space
\begin{equation}\label{eq:DHab}
\begin{split}
\mathcal{D}(H_{\alpha}^{[\beta]})\;&=\;\left\{\,g=F+u_\xi\left|\!
\begin{array}{c}
F \in H^2_{\mathrm{f}}(\mathbb{R}^3\times\mathbb{R}^3),\;\xi=\xi_{\mathrm{reg}}+\!\displaystyle\sum_{n=-1}^1 q_n\Xi_{1,n}\,, \\
q\equiv(q_{-1},q_0,q_1)\in\mathbb{C}^3,\\
\xi_{\mathrm{reg}}\in H^{\frac{1}{2}}(\mathbb{R}^3)\,,\;(T_\lambda+\alpha)\xi_{\mathrm{reg}}\in H^{\frac{1}{2}}(\mathbb{R}^3) \\
\textrm{plus boundary conditions $(\textsc{bc}1)$ and $(\textsc{bc}2)$}
\end{array}\!\!\!\right.\right\} \\
(\textsc{bc}1) & \quad \int_{\mathbb{R}^3}\widehat{F}(p_1,p_2)\,\ud p_2\;=\;(T_\lambda+\alpha)\,\xi_{\mathrm{reg}}(p_1)\,, \\
(\textsc{bc}2) & \quad
\displaystyle\int_{\mathbb{R}^3}\overline{\,\widehat{\Xi}_{1,n}(p)}\,\big(\widehat{(T_\lambda\xi_{\mathrm{reg}})}(p)+\alpha\,\widehat{\xi_{\mathrm{reg}}}(p)\big)\,\ud p\;=\;\beta_n|q_n|^2\,, \quad n\in\{-1,0,1\}\,,
\end{split}
\end{equation}
and the action of $H_{\alpha}^{[\beta]}$ is given by
\begin{equation}\label{eq:action_DHab}
\begin{split}
(H_{\alpha}^{[\beta]}+\lambda\mathbbm{1})\,g\;&=\;(H_{\mathrm{free}}+\lambda\mathbbm{1})\,F \\
&=\;\big(-\Delta_{x_1}-\Delta_{x_2}-{\textstyle\frac{2}{m+1}}\nabla_{x_1}\cdot\nabla_{x_2}+\lambda\big)\,F\,.
\end{split}
\end{equation}
The three functions $\Xi_{1,-1},\Xi_{1,0},\Xi_{1,1}\in H^{-1/2}_{\ell=1}(\mathbb{R}^3)$ in \eqref{eq:DHab} are those conjectured in our main conjecture and they are taken here to be normalised as $\|\Xi_{1,n}\|_{H^{-1/2}_{W_\lambda}}^2=2$. 
\end{theorem}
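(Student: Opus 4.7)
The plan is to unfold the Kre\u\i n--Vi\v sik--Birman extension formula \eqref{eq:def_Halpha_beta} into the explicit characterisation \eqref{eq:DHab}, proceeding sector by sector on the space of charges. I would take an arbitrary $g\in\mathcal{D}(H_\alpha^{[\beta]})$, write $g=f+(\mathring{H}_F+\lambda\mathbbm{1})^{-1}u_\eta+u_\xi$ with $f\in\mathcal{D}(\mathring{H})$, $\xi\in\mathcal{D}(\mathcal{A}_{\lambda,\alpha}^{[\beta]})$, $\eta:=\mathcal{A}_{\lambda,\alpha}^{[\beta]}\xi$, and set $F:=f+(\mathring{H}_F+\lambda\mathbbm{1})^{-1}u_\eta$. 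Since $(\mathring{H}_F+\lambda\mathbbm{1})^{-1}$ maps $\cH$ onto $\mathcal{D}(\mathring{H}_F)=\cH\cap H^2$, whereas $f\in H^2_0\cap\cH$, one has $F\in H^2_{\mathrm{f}}(\mathbb{R}^3\times\mathbb{R}^3)$, and \eqref{eq:action_DHab} is then immediate from \eqref{eq:actionHdotstar_to_g-f_2+1}.

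Next I would extract the regular/singular splitting of $\xi$ from the orthogonal decomposition \eqref{eq:ext_param_A_lab}. By our main conjecture, the only sector with non-trivial deficiency is $\ell=1$: in every other sector $\xi^{(\ell)}$ lies in the domain of a self-adjoint operator which coincides with (the closure of) the Friedrichs extension, so in particular $\xi^{(\ell)}\in\mathcal{D}(\mathcal{A}^{(\ell),F}_{\lambda,\alpha})$; for $\ell=1$, the extension formula \eqref{eq:A-extensions_op} produces $\xi^{(1)}=\xi^{(1)}_{\mathrm{reg}}+\sum_{n=-1}^{1}q_n\Xi_{1,n}$ with $\xi^{(1)}_{\mathrm{reg}}\in\mathcal{D}(\mathcal{A}^{(1),F}_{\lambda,\alpha})$ and $\mathcal{A}^{(1),\beta}_{\lambda,\alpha}\xi^{(1)}=\mathcal{A}^{(1),F}_{\lambda,\alpha}\xi^{(1)}_{\mathrm{reg}}$. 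Setting $\xi_{\mathrm{reg}}:=\xi-\sum_n q_n\Xi_{1,n}$ and summing over sectors, Propositions \ref{prop:Friedrichs_ext_of_A} and \ref{prop:ext_param_l=0} furnish $\xi_{\mathrm{reg}}\in H^{1/2}(\mathbb{R}^3)$, $(T_\lambda+\alpha)\xi_{\mathrm{reg}}\in H^{1/2}(\mathbb{R}^3)$, together with the identity $\tfrac12 W_\lambda\eta=(T_\lambda+\alpha)\xi_{\mathrm{reg}}$ in $H^{1/2}(\mathbb{R}^3)$.

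To obtain $(\textsc{bc}1)$ I would then apply Proposition \ref{prop:asymptotic_integral_2+1} to $g$, giving
\[
\int_{|p_2|<R}\widehat g(p_1,p_2)\,\ud p_2\;=\;4\pi R\,\widehat\xi(p_1)+\bigl(-\widehat{T_\lambda\xi}(p_1)+\tfrac12\widehat{W_\lambda\eta}(p_1)\bigr)+o(1)\,,
\]
and compare it with $\widehat g=\widehat F+\widehat u_\xi$: the $H^2$-regularity of $F$ makes $\int\widehat F\,\ud p_2$ absolutely convergent, while the direct computation of $\int_{|p_2|<R}\widehat u_\xi\,\ud p_2$ (done essentially in the proof of Proposition \ref{prop:asymptotic_integral_2+1}) returns $4\pi R\,\widehat\xi-\widehat{T_\lambda\xi}+o(1)$; matching the $R$-independent terms forces $\int\widehat F\,\ud p_2=\tfrac12\widehat{W_\lambda\eta}=\widehat{(T_\lambda+\alpha)\xi_{\mathrm{reg}}}$, which is $(\textsc{bc}1)$. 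For $(\textsc{bc}2)$ I would invoke directly the last identity of \eqref{eq:properties_of_xi_in_DAb1}, rewriting the dual pairing $\langle\Xi_{1,n},(T_\lambda+\alpha)\xi_{\mathrm{reg}}\rangle_{H^{-1/2},H^{1/2}}$ as the $L^2$-integral appearing in $(\textsc{bc}2)$, which is legitimate because $(T_\lambda+\alpha)\xi_{\mathrm{reg}}\in H^{1/2}\subset L^2$, after using the stated normalisation $\|\Xi_{1,n}\|^2_{H^{-1/2}_{W_\lambda}}=2$.

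The converse inclusion is obtained by running the construction backwards: given $g=F+u_\xi$ of the stated form, one defines $\eta:=2W_\lambda^{-1}(T_\lambda+\alpha)\xi_{\mathrm{reg}}$ and $f:=F-(\mathring{H}_F+\lambda\mathbbm{1})^{-1}u_\eta$, and verifies $f\in\mathcal{D}(\mathring{H})$ and $\xi\in\mathcal{D}(\mathcal{A}_{\lambda,\alpha}^{[\beta]})$ with $\mathcal{A}_{\lambda,\alpha}^{[\beta]}\xi=\eta$. The second membership is read off $(\textsc{bc}2)$ via \eqref{eq:A-extensions_op}; I expect the first to be the main technical obstacle, as it requires showing that the $H^2$-function $f$ actually lies in $H^2_0$, i.e.\ has vanishing trace on both coincidence hyperplanes $\Gamma_1,\Gamma_2$. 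By fermionic antisymmetry this reduces to $\int\widehat f(p_1,p_2)\,\ud p_2=0$ for a.e.\ $p_1$, and a direct residue-type computation gives $\int[\,\widehat u_\eta(p_1,p_2)/(p_1^2+p_2^2+\mu p_1\cdot p_2+\lambda)\,]\ud p_2=\tfrac12\widehat{W_\lambda\eta}(p_1)$, so $(\textsc{bc}1)$ is calibrated precisely to make this term cancel with $\int\widehat F\,\ud p_2$.
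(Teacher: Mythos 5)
Your proposal is correct and its overall architecture --- unfolding the Kre{\u\i}n--Vi\v{s}ik--Birman formula \eqref{eq:def_Halpha_beta}, reading off $F\in H^2_{\mathrm f}$ and the action \eqref{eq:action_DHab}, and splitting the charge sector by sector via \eqref{eq:ext_param_A_lab}, \eqref{eq:DFriedrA}, and \eqref{eq:A-extensions_op} --- coincides with the paper's. Where you genuinely diverge is the derivation of $(\textsc{bc}1)$: the paper obtains it by testing $(H_{\alpha}^{[\beta]}+\lambda\mathbbm{1})[g',g]=\langle g',(H_{\alpha}^{[\beta]}+\lambda\mathbbm{1})g\rangle_\cH$ against form-domain elements $g'=F'+u_{\xi'}$ with purely regular charges and concluding by density, so that the whole argument stays at the level of the $H^{\frac12}$--$H^{-\frac12}$ duality; you instead subtract the large-$R$ asymptotics of $\int_{|p_2|<R}\widehat u_\xi\,\ud p_2$ from that of $\int_{|p_2|<R}\widehat g\,\ud p_2$ (Proposition \ref{prop:asymptotic_integral_2+1}) to get $\int_{\mathbb{R}^3}\widehat F\,\ud p_2=\frac12\widehat{(W_\lambda\eta)}$, and combine this with $\eta=2W_\lambda^{-1}(T_\lambda+\alpha\mathbbm{1})\xi_{\mathrm{reg}}$, which indeed follows directly from \eqref{eq:DFriedrA} and \eqref{eq:xi_reg}. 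This is precisely the computation the paper defers to Lemma \ref{lem:link_xireg_eta}, run there in the opposite direction with $(\textsc{bc}1)$ as input; your order of deduction is not circular and is arguably more economical, though it leans on the point-wise validity of the asymptotics for charges containing the singular components $\Xi_{1,n}$ --- acceptable here since $(T_\lambda^{(1)}+\alpha\mathbbm{1})\Xi_{1,n}=0$ almost everywhere by Proposition \ref{lem:kerAstar}, but worth flagging because for generic $H^{-1/2}$ charges the point-wise expansion degenerates. Your treatment of $(\textsc{bc}2)$ via \eqref{eq:properties_of_xi_in_DAb1} is the paper's (note that with the normalisation $\|\Xi_{1,n}\|^2_{H^{-1/2}_{W_\lambda}}=2$ that identity yields $\beta_nq_n$ on the right-hand side, so the $|q_n|^2$ in the statement is an inconsistency internal to the paper, not introduced by you). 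Finally, you sketch the converse inclusion, which the paper leaves implicit: the cancellation $\int_{\mathbb{R}^3}\widehat u_\eta(p_1,p_2)(p_1^2+p_2^2+\mu p_1\cdot p_2+\lambda)^{-1}\ud p_2=\frac12\widehat{(W_\lambda\eta)}(p_1)$ is correct, and the remaining membership $\xi\in\mathcal{D}(\mathcal{A}_{\lambda,\alpha}^{[\beta]})$ requires the standard Kre{\u\i}n--Vi\v{s}ik--Birman fact that $\mathcal{D}(\overline{\mathcal{A}_{\lambda,\alpha}^{(1)}})$ consists exactly of the Friedrichs-domain elements whose image under $\mathcal{A}_{\lambda,\alpha}^{(1),F}$ is $W_\lambda$-orthogonal to $\ker(\mathcal{A}_{\lambda,\alpha}^{(1)})^\star$; spelling that out would make your converse complete.
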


\begin{theorem}{Theorem}\label{thm:DHab_form}
With the notation and under the assumptions stated above in this Section, the quadratic form of the self-adjoint extension $H_{\alpha}^{[\beta]}$ is given by
\begin{equation}\label{eq:DHab_form}
\begin{split}
\mathcal{D}[H_{\alpha}^{[\beta]}]\;&=\;\left\{g=F+u_\xi\left|\!\!
\begin{array}{c}
F \in H^1_{\mathrm{f}}(\mathbb{R}^3\times\mathbb{R}^3),\;\xi=\xi_{\mathrm{reg}}+\!\displaystyle\sum_{n=-1}^1 q_n\Xi_{1,n}, \\
\xi_{\mathrm{reg}}\in H^{\frac{1}{2}}(\mathbb{R}^3)\,,\;q\equiv(q_{-1},q_0,q_1)\in\mathbb{C}^3
\end{array}\!\!\!\!\right.\right\} \\
H_{\alpha}^{[\beta]}[F+u_\xi]\;&=\;\lambda\|F\|_\cH^2-\lambda\|F+u_\xi\|_\cH^2+H_\mathrm{free}[F] \\
& \!\!\!\!\!\!\!\!+2\,\Big(\int_{\mathbb{R}^3}\overline{\,\widehat{\xi}_{\mathrm{reg}}(p)}\,\big(\widehat{(T_\lambda\xi_{\mathrm{reg}})}(p)+\alpha\,\widehat{\xi_{\mathrm{reg}}}(p)\big)\,\ud p
+\sum_{n=-1}^1 \beta_n|q_n|^2\Big)\,.
\end{split}
\end{equation}
The three functions $\Xi_{1,-1},\Xi_{1,0},\Xi_{1,1}\in H^{-1/2}_{\ell=1}(\mathbb{R}^3)$ in \eqref{eq:DHab} are those conjectured in our main conjecture and they are taken here to be normalised as $\|\Xi_{1,n}\|_{H^{-1/2}_{W_\lambda}}^2=2$. 
\end{theorem}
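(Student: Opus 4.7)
The plan is to read off both pieces of \eqref{eq:DHab_form} from the Kre\u{\i}n-Vi\v{s}ik-Birman quadratic form extension formula \eqref{eq:def_Halpha_beta_qform}, after decomposing the form of $\mathcal{A}_{\lambda,\alpha}^{[\beta]}$ across the angular momentum sectors and pulling back to $\cH$ via $U_\lambda^{-1}$. Concretely, \eqref{eq:def_Halpha_beta_qform} gives $\mathcal{D}[H_{\alpha}^{[\beta]}]=H^1_\mathrm{f}(\mathbb{R}^3\!\times\!\mathbb{R}^3)\dotplus U_\lambda^{-1}\mathcal{D}[\mathcal{A}_{\lambda,\alpha}^{[\beta]}]$ and $(H_{\alpha}^{[\beta]}+\lambda\mathbbm{1})[F+u_\xi]=(\mathring{H}_F+\lambda\mathbbm{1})[F]+\mathcal{A}_{\lambda,\alpha}^{[\beta]}[\xi]$, so the whole task reduces to identifying the form of $\mathcal{A}_{\lambda,\alpha}^{[\beta]}$ and then passing from $H_{\alpha}^{[\beta]}+\lambda\mathbbm{1}$ to $H_{\alpha}^{[\beta]}$ by subtracting $\lambda\|F+u_\xi\|_\cH^2$ on both sides.

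First I would assemble $\mathcal{D}[\mathcal{A}_{\lambda,\alpha}^{[\beta]}]$ sector by sector according to \eqref{eq:ext_param_A_lab}: Proposition \ref{prop:ext_param_l=0} gives $\mathcal{D}[\mathcal{A}_{\lambda,\alpha}^{(\ell=0)}]=H^{1/2}_{\ell=0}(\mathbb{R}^3)$ with form action $2\langle\xi,(T_\lambda^{(0)}+\alpha\mathbbm{1})\xi\rangle_{H^{\frac{1}{2}},H^{-\frac{1}{2}}}$; Proposition \ref{prop:Friedrichs_ext_of_A} (extended to the closure of $\mathcal{A}_{\lambda,\alpha}^{(\ell)}$ for $\ell\geqslant 2$, which by our main conjecture coincides with its Friedrichs extension) gives $\mathcal{D}[\overline{\mathcal{A}_{\lambda,\alpha}^{(\ell)}}]=H^{1/2}_\ell(\mathbb{R}^3)$ with the analogous action \eqref{eq:DformA}; and \eqref{eq:quadratic_forms_Abeta_v2} gives $\mathcal{D}[\mathcal{A}_{\lambda,\alpha}^{(1),\beta}]=H^{1/2}_{\ell=1}(\mathbb{R}^3)\dotplus\mathrm{span}\{\Xi_{1,-1},\Xi_{1,0},\Xi_{1,1}\}$ with its explicit boundary contribution. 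Summing over $\ell$ then yields $\mathcal{D}[\mathcal{A}_{\lambda,\alpha}^{[\beta]}]=H^{1/2}(\mathbb{R}^3)\dotplus\mathrm{span}\{\Xi_{1,-1},\Xi_{1,0},\Xi_{1,1}\}$, whose generic element $\xi=\xi_\mathrm{reg}+\sum_n q_n\Xi_{1,n}$ is exactly the parametrisation featured in \eqref{eq:DHab_form}; finiteness of the integral $\int\overline{\widehat{\xi}_\mathrm{reg}}(\widehat{T_\lambda\xi_\mathrm{reg}}+\alpha\widehat{\xi}_\mathrm{reg})\,\ud p$ for such $\xi_\mathrm{reg}\in H^{1/2}(\mathbb{R}^3)$ is in turn controlled by the $H^{1/2}\!\to\! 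H^{-1/2}$ continuity of $T_\lambda$ recorded in \eqref{eq:T-ell-3/2-1/2_excluded}-\eqref{eq:T-ell-3/2-1/2_included}.

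To assemble the action, I note that $\mathcal{A}_{\lambda,\alpha}^{[\beta]}[\xi]$ decouples into the $\ell=1$ contribution $2\langle\xi^{(1)}_\mathrm{reg},(T_\lambda^{(1)}+\alpha\mathbbm{1})\xi^{(1)}_\mathrm{reg}\rangle+\sum_n\beta_n|q_n|^2\|\Xi_{1,n}\|_{H^{-1/2}_{W_\lambda}}^{2}$ plus the purely Friedrichs-type contributions on the remaining sectors. Using the normalisation $\|\Xi_{1,n}\|_{H^{-1/2}_{W_\lambda}}^{2}=2$ and reassembling across sectors collapses the whole form to $2\langle\xi_\mathrm{reg},(T_\lambda+\alpha\mathbbm{1})\xi_\mathrm{reg}\rangle+2\sum_n\beta_n|q_n|^2$, precisely the parenthesised term in \eqref{eq:DHab_form}. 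Combining with $(\mathring{H}_F+\lambda\mathbbm{1})[F]=H_\mathrm{free}[F]+\lambda\|F\|_\cH^{2}$ and subtracting $\lambda\|F+u_\xi\|_\cH^{2}$ gives the stated action.

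The main obstacle I anticipate is not the bookkeeping but ensuring that the off-diagonal pairings between $\xi_\mathrm{reg}$ and the singular charges $\Xi_{1,n}$ vanish inside $\mathcal{A}_{\lambda,\alpha}^{[\beta]}[\cdot]$, so that the form reduces to the strictly diagonal expression above. This is secured by the identity $(T_\lambda^{(1)}+\alpha\mathbbm{1})\Xi_{1,n}=0$ in $H^{-3/2}_{\ell=1}(\mathbb{R}^3)$ from Proposition \ref{lem:kerAstar} together with the duality manipulation already carried out in the derivation of \eqref{eq:properties_of_xi_in_DAb1}. Once this cross-term vanishing is established, all that remains is algebraic rearrangement.
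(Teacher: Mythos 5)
Your proposal is correct and follows essentially the same route as the paper: both read off the domain and the action from the Kre\u{\i}n-Vi\v{s}ik-Birman form formula \eqref{eq:def_Halpha_beta_qform}, decompose $\mathcal{D}[\mathcal{A}_{\lambda,\alpha}^{[\beta]}]$ sector by sector via \eqref{eq:ext_param_A_lab}, \eqref{eq:DformA}, \eqref{eq:Al0}, and \eqref{eq:quadratic_forms_Abeta_v2}, and then reassemble with the normalisation $\|\Xi_{1,n}\|_{H^{-1/2}_{W_\lambda}}^2=2$. Your worry about cross-terms between $\xi_{\mathrm{reg}}$ and the $\Xi_{1,n}$ is already settled by \eqref{eq:quadratic_forms_Abeta_v2} itself, which delivers the $\ell=1$ form in diagonal shape, so no extra argument is needed there.
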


As remarked above, there is an actual multiplicity of extensions $H_{\alpha}^{[\beta]}$ only in the regime $m\in(m^*,m^{**})$, whereas for $m\geqslant m^{**}$ the only self-adjoint Hamiltonian is $H_{\alpha}^{[\beta=\infty]}$. In the above formulas, $H_{\alpha}^{[\beta=\infty]}$ is the Hamiltonian 
whose form domain has regular $H^1$-functions and $H^{\frac{1}{2}}$-charges only, and whose operator domain does not have singular charges in the $\ell=1$ sector, and for which therefore the boundary condition (\textsc{bc2}) is completely absent.

Before discussing the proofs, let us emphasize that Theorems \ref{thm:DHab} and \ref{thm:DHab_form} above are explicitly formulated so as to present the structure of $H_{\alpha}^{[\beta]}$ (as an operator and as a quadratic form) in such a way to allow a most direct comparison between our analysis and the existing literature.

The first rigorous construction of operators of the class $H_{\alpha}^{[\beta]}$ for the (2+1)-fermionic model, which also proved the fact that $H_{\alpha}^{[\beta]}$ is a self-adjoint extension of the ``away-from-hyperplanes'' free Hamiltonian $\mathring{H}$ in the sense of \eqref{eq:saext}, and that it supports a Ter-Martirosyan--Skornyakov condition in its domain, is due to a work of Finco and Teta \cite{Finco-Teta-2012} and a subsequent work of one of us in collaboration with Correggi, Dell'Antonio, Finco, and Teta \cite{CDFMT-2012}, both from 2012,  \emph{relative only to the case $\beta=\infty$} of the present notation. With reference to \eqref{eq:DHab} and \eqref{eq:DHab_form} above, this was precisely the operator $H_\alpha\equiv H_{\alpha}^{[\beta=\infty]}$.

In both works \cite{Finco-Teta-2012,CDFMT-2012} the approach was through quadratic form theory, thus first it was proved that the form $H_\alpha[\,\cdot\,]$ is closed and semi-bounded on $\cH$ and then the corresponding self-adjoint operator was derived and \eqref{eq:saext} proved. For a convenient comparison, our ubiquitous expression $T_\lambda+\alpha\mathbbm{1}$ is denoted by $\Gamma_\alpha^\gamma$ in \cite{CDFMT-2012}, and what we denote here by $\langle\xi,(T_\lambda^{(\ell)}+\alpha\mathbbm{1})\xi\rangle$ is precisely the quantity $\Phi^\lambda_\alpha[\xi]$ in \cite{CDFMT-2012}. Observe that despite the fact that in \cite[Eqns.~(2-29)-(2.30)]{CDFMT-2012} the charges of the functions in $\mathcal{D}(H_\alpha)$ are qualified by the boundary condition (\textsc{bc1}) under the constraint that $(T_\lambda+\alpha\mathbbm{1})\xi\equiv\Gamma_\alpha^\gamma\xi\in L^2(\mathbb{R}^3)$, it is clear that (\textsc{bc1}) is an identity among $H^{\frac{1}{2}}$-functions: our present formulation \eqref{eq:DHab} does not involve the unnecessary space $L^2(\mathbb{R}^3)$ and formulates the constraint on $\xi$ already in the form $(T_\lambda+\alpha\mathbbm{1})\xi\in H^{\frac{1}{2}}(\mathbb{R}^3)$.

As mentioned already (see Remark \ref{rem:remark_history} above), the possibility of the existence of self-adjoint extensions of $\mathring{H}$ of TMS type, \emph{other than $H_\alpha$}, was known since long and investigated in several studies, however, the precise role of the space of charges in the extension scheme was missed. Only in a recent work of one of us in collaboration with Correggi, Dell'Antonio, Finco, and Teta \cite{CDFMT-2015}, again through a quadratic form approach, a substantial progress was finally made by characterising the family of extensions of the form $H^{[\beta]}_{\alpha=0}$, namely those relative to the (2+1)-model \emph{at unitarity}, namely, with infinite scattering length $\alpha^{-1}=\infty$. Our Theorem \ref{thm:DHab_form} is immediately comparable with \cite[Theorem 2.1]{CDFMT-2015}, and Theorem \ref{thm:DHab} with \cite[Theorem 2.2]{CDFMT-2015}. Because of the restriction to $\alpha=0$ it was convenient in \cite{CDFMT-2015} to use homogeneous Sobolev spaces, which essentially corresponds to sending $\lambda\to 0$ in our formulas.

In the remaining part of this Section we shall prove Theorems \ref{thm:DHab} and \ref{thm:DHab_form}, whereas in the next Section we discuss the occurrence of the Ter-Martirosyan-Skornyakov asymptotics.

It is convenient to prove first the characterisation of the quadratic form of $H_{\alpha}^{[\beta]}$.

\begin{proof}[Proof of Theorem \ref{thm:DHab_form}]
Formula \eqref{eq:def_Halpha_beta_qform} indicates that $g\in\mathcal{D}[H_{\alpha}^{[\beta]}]$ is of the form $g=F+u_\xi$ where $F \in H^1_f(\mathbb{R}^3\times\mathbb{R}^3)$ and $\xi\in\mathcal{D}[\mathcal{A}_{\lambda,\alpha}^{[\beta]}]$. From \eqref{eq:ext_param_A_lab} we see that $\xi\in\mathcal{D}[\mathcal{A}_{\lambda,\alpha}^{[\beta]}]$ reads $\xi=\bigoplus_{\ell=0}^\infty\xi^{(\ell)}$, where $\xi^{(\ell)}\in\mathcal{D}[\mathcal{A}_{\lambda,\alpha}^{(\ell)}]=H^{1/2}_{\ell}(\mathbb{R}^3)$ for $\ell\neq 1$ (owing to \eqref{eq:DformA} and \eqref{eq:Al0}), and $\xi^{(\ell=1)}\in H^{1/2}_{\ell=1}(\mathbb{R}^3)\dotplus\mathrm{span}\{\Xi_{1,-1},\Xi_{1,0},\Xi_{1,1}\}$ (owing to \eqref{eq:quadratic_forms_Abeta_v2}). This proves the expression of $\mathcal{D}[H_{\alpha}^{[\beta]}]$ given in \eqref{eq:DHab_form}. Formula \eqref{eq:def_Halpha_beta_qform} also indicates that 
$(H_{\alpha}^{[\beta]}+\lambda\mathbbm{1})[F+u_\xi]=(\mathring{H}_F+\lambda\mathbbm{1})[F]+\mathcal{A}_{\lambda,\alpha}^{[\beta]}[\xi]$, equivalently,
\[
H_{\alpha}^{[\beta]}[F+u_\xi]\;=\;\lambda\|F\|_\cH^2-\lambda\|F+u_\xi\|_\cH^2+H_\mathrm{free}[F]+\mathcal{A}_{\lambda,\alpha}^{[\beta]}[\xi]\,.
\]
In turn, $\xi^{(\ell)}\equiv\xi^{(\ell)}_\mathrm{reg}$ and $\mathcal{A}_{\lambda,\alpha}^{[\beta]}[\xi^{(\ell)}]=\mathcal{A}_{\lambda,\alpha}^{(\ell)}[\xi^{(\ell)}_\mathrm{reg}]=2\,\langle\xi^{(\ell)}_\mathrm{reg},(T_\lambda^{(1)}+\alpha\mathbbm{1})\xi^{(\ell)}_\mathrm{reg}\rangle_{H^{\frac{1}{2}},H^{-\frac{1}{2}}}$ for $\ell\neq 1$ (owing to \eqref{eq:DformA} and \eqref{eq:Al0}), whereas $\xi^{(\ell=1)}=\xi^{(\ell=1)}_\mathrm{reg}+\sum_{n=-1}^1 q_n\Xi_{1,n}$ and 
$\mathcal{A}_{\lambda,\alpha}^{[\beta]}[\xi^{(\ell=1)}]=\mathcal{A}_{\lambda,\alpha}^{(\ell=1)}[\xi^{(\ell=1)}]=2\,\langle\xi^{(\ell=1)}_\mathrm{reg},(T_\lambda^{(1)}+\alpha\mathbbm{1})\xi^{(\ell=1)}_\mathrm{reg}\rangle_{H^{\frac{1}{2}},H^{-\frac{1}{2}}}+2\sum_{n=-1}^1 \beta_n|q_n|^2$ (owing to \eqref{eq:quadratic_forms_Abeta_v2}). This proves the expression of $H_{\alpha}^{[\beta]}[F+u_\xi]$ given in \eqref{eq:DHab_form}.
\end{proof}

We pass now to the proof of the characterisation of the operator $H_{\alpha}^{[\beta]}$.

\begin{proof}[Proof of Theorem \ref{thm:DHab}] In view of formula \eqref{eq:def_Halpha_beta}, each $g\in\mathcal{D}(H_{\alpha}^{[\beta]})$ decomposes as $g=F+u_\xi$ where 
\[
F\;:=\;f+(\mathring{H}_F+\lambda\mathbbm{1})^{-1}\big(u_{\mathcal{A}_{\lambda,\alpha}^{[\beta]}\xi}\big)\;\in\;\mathcal{D}[\mathring{H}_F]\;=\;H^2_f(\mathbb{R}^3\times\mathbb{R}^3)
\]
and $\xi\in\mathcal{D}(\mathcal{A}_{\lambda,\alpha}^{[\beta]})\subset\ker(\mathring{H}^*+\lambda\mathbbm{1})$. Since $H_{\alpha}^{[\beta]}+\lambda\mathbbm{1}$ acts on $g$ as $\mathring{H}+\lambda\mathbbm{1}$, formula \eqref{eq:action_DHab} follows at once. 

Because of \eqref{eq:ext_param_A_lab}, $\xi=\bigoplus_{\ell=0}^\infty\xi^{(\ell)}$ where, for $\ell\neq 1$, $\xi^{(\ell)}\equiv\xi^{(\ell)}_{\mathrm{reg}}\in\mathcal{D}(\mathcal{A}_{\lambda,\alpha}^{(\ell),F})$ and hence (owing to \eqref{eq:DFriedrA})
$\xi^{(\ell)}\in H^{\frac{1}{2}}_{\ell}(\mathbb{R}^3)$ with $(T^{(\ell)}_\lambda+\alpha)\xi^{(\ell)}\in H^{\frac{1}{2}}(\mathbb{R}^3)$, whereas, for $\ell=1$, $\xi^{(1)}=\xi_{\mathrm{reg}}^{(1)}+\sum_{n=-1}^1 q_n\Xi_{1,n}$ with $\xi_{\mathrm{reg}}^{(1)}\in\mathcal{D}(\mathcal{A}_{\lambda,\alpha}^{(\ell=1),F})$ and hence (owing to \eqref{eq:DFriedrA} again) $\xi_{\mathrm{reg}}^{(1)}\in H^{\frac{1}{2}}_{\ell=1}(\mathbb{R}^3)$ with $(T^{(1)}_\lambda+\alpha)\xi_{\mathrm{reg}}^{(1)}\in H^{\frac{1}{2}}(\mathbb{R}^3)$. Thus, setting $\xi_{\mathrm{reg}}:=\bigoplus_{\ell=0}^\infty\xi^{(\ell)}_{\mathrm{reg}}$, the decomposition of the charge $\xi$ stated in the first part of formula \eqref{eq:DHab} is proved.

We pass now to the proof of the two boundary conditions (\textsc{bc}1) and (\textsc{bc}2) of \eqref{eq:DHab}. 
For $g\in\mathcal{D}(H_{\alpha}^{[\beta]})$, one has
\[
(H_{\alpha}^{[\beta]}+\lambda\mathbbm{1})[g',g]\;=\;\langle g',(H_{\alpha}^{[\beta]}+\lambda\mathbbm{1})g\rangle_\cH\qquad\forall g'\in\mathcal{D}[H_{\alpha}^{[\beta]}]\,.
\]
The identity above must be true in particular for all $g'$ in the form domain \eqref{eq:DHab_form}  of the form $g'=F'+u_{\xi'}$ with $F'\in H^1_f(\mathbb{R}^3\times\mathbb{R}^3)$ and $\xi'\in H^{\frac{1}{2}}(\mathbb{R}^3)$, namely, those functions $g'$ of the form domain with only regular charges. With this choice, and decomposing $g=F+u_\xi$, $\xi=\xi_{\mathrm{reg}}+\sum_{n=-1}^1 q_n\Xi_{1,n}$, \eqref{eq:DHab_form} yields
\[
\begin{split}
(H_{\alpha}^{[\beta]}+&\lambda\mathbbm{1})[g',g]\;=\;(H_{\mathrm{free}}+\lambda\mathbbm{1})[F',F]+\mathcal{A}_{\lambda,\alpha}^{[\beta]}[\xi',\xi] \\
&=\;(H_{\mathrm{free}}+\lambda\mathbbm{1})[F',F]+2\,\int_{\mathbb{R}^3}\overline{\,\widehat{\xi'}(p)}\,\big(\widehat{(T_\lambda\xi_{\mathrm{reg}})}(p)+\alpha\,\widehat{\xi_{\mathrm{reg}}}(p)\big)\,\ud p
\end{split}
\]
(where it was crucial that the charge $\xi'$ has no singular part), whereas \eqref{eq:action_DHab} yields
\[
\begin{split}
\langle g',(H_{\alpha}^{[\beta]}+\lambda\mathbbm{1})g\rangle_\cH\;&=\;\langle F',(H_{\mathrm{free}}+\lambda\mathbbm{1})F\rangle_\cH+\langle u_{\xi'}, (H_{\mathrm{free}}+\lambda\mathbbm{1})F \rangle_{\cH}\\
&=\;(H_{\mathrm{free}}+\lambda\mathbbm{1})[F',F]+\langle u_{\xi'}, (H_{\mathrm{free}}+\lambda\mathbbm{1})F \rangle_{\cH}\,.
\end{split}
\]
Observe also that
\[
\begin{split}
 \langle u_{\xi'}, (H_{\mathrm{free}}+\lambda\mathbbm{1})F \rangle_{\cH}\;&=\;\iint\frac{\overline{\,\widehat{\xi'}(p_1)}-\overline{\,\widehat{\xi'}(p_2)}}{p_1^2+p_2^2+\mu p_1\cdot p_2+\lambda}\,\big((H_{\mathrm{free}}+\lambda\mathbbm{1})F\big)^{\widehat{\;}}(p_1,p_2)\,\ud p_1\ud p_2 \\
 &=\;2\int_{\mathbb{R}^3}\!\ud p_1\overline{\,\widehat{\xi'}(p_1)}\,\Big(\int_{\mathbb{R}^3}\!\ud p_2\,\widehat{F}(p_1,p_2)\Big)
\end{split}
\]
(since $F\in H^2_f(\mathbb{R}^3\times\mathbb{R}^3)$, the inverse Fourier transform of the function $p\mapsto \int_{\mathbb{R}^3}\!\ud q\,\widehat{F}(p,q)$ is a function in $H^{\frac{1}{2}}(\mathbb{R}^3)$, owing to a standard trace theorem, e.g., \cite[Lemma 16.1]{Trtar-SobSpaces_interp}).
Therefore, $\forall\xi'\in H^{\frac{1}{2}}(\mathbb{R}^3)$,
\[
\int_{\mathbb{R}^3}\overline{\,\widehat{\xi'}(p)}\,\big(\widehat{(T_\lambda\xi_{\mathrm{reg}})}(p)+\alpha\,\widehat{\xi_{\mathrm{reg}}}(p)\big)\,\ud p\;=\;\int_{\mathbb{R}^3}\!\ud p_1\overline{\,\widehat{\xi'}(p_1)}\,\Big(\int_{\mathbb{R}^3}\!\ud p_2\,\widehat{F}(p_1,p_2)\Big)\,,
\]
which by density gives condition (\textsc{bc}1) as an identity in $H^{\frac{1}{2}}(\mathbb{R}^3)$.

If the charge $\xi$ of $g\in\mathcal{D}(H_{\alpha}^{[\beta]})$ has a (non-zero) component with symmetry $\ell=1$, say, $\xi^{(1)}=\xi_{\mathrm{reg}}^{(1)}+\sum_{n=-1}^1 q_n\Xi_{1,n}$, it was already established in \eqref{eq:properties_of_xi_in_DAb1} that the regular part $\xi_{\mathrm{reg}}^{(1)}$ of $\xi^{(1)}$ satisfies the condition (\textsc{bc}2). Therefore, for a generic charge $\xi=\xi_{\mathrm{reg}}+\sum_{n=-1}^1q_n\Xi_{1,n}$, $\xi_{\mathrm{reg}}=\bigoplus_{\ell=0}^\infty\xi_{\mathrm{reg}}^{(\ell)}$, by orthogonality one has
\[
\begin{split}
 \int_{\mathbb{R}^3}\overline{\,\widehat{\Xi}_{1,n}(p)}\,\big(\widehat{(T_\lambda\xi_{\mathrm{reg}})}(p)+\alpha\,\widehat{\xi_{\mathrm{reg}}}(p)\big)\,\ud p\;&=\; \int_{\mathbb{R}^3}\overline{\,\widehat{\Xi}_{1,n}(p)}\,\big(\widehat{(T_\lambda\xi^{(1)}_{\mathrm{reg}})}(p)+\alpha\,\widehat{\xi^{(1)}_{\mathrm{reg}}}(p)\big)\,\ud p\\
 &=\;\beta_n|q_n|^2\,,
\end{split}
\]
which completes the proof of (\textsc{bc}2).
%
%
\end{proof}

\section{Emergence of the Ter-Martirosyan--Skornyakov asymptotics}\label{sec:emergence_of_TMS}

In this Section we elaborate on the emergence of the Ter-Martirosyan--Skornyakov asymptotics for functions $g\in\mathcal{D}(H_{\alpha}^{[\beta]})$.

In fact, the boundary condition (\textsc{bc1}) in \eqref{eq:DHab} \emph{is} a TMS condition. For the precise meaning of that, we need to recall the following simple facts.

\begin{lemma}{Lemma}\label{lem:link_xireg_eta}
Under the assumptions of Theorem \ref{thm:DHab}, if $g\in\mathcal{D}(H_{\alpha}^{[\beta]})$, then its regular part $F$ and the charges $\xi$, $\xi_\mathrm{reg}$, and $\eta=\mathcal{A}_{\lambda,\alpha}^{[\beta]}\xi$ defined by the decompositions \eqref{eq:DHab} and \eqref{eq:decompositionD_Hdostar_2+1} of $g$, satisfy
\begin{equation}\label{eq:asympt_reg_part}
\int_{\mathbb{R}^3}\widehat{F}(p_1,p_2)\,\ud p_2\;=\;{\textstyle\frac{1}{2}}\,\widehat{(W_\lambda\eta)}(p_1)
\end{equation}
and
 \begin{equation}\label{eq:link_xireg_eta}
 T_\lambda\xi_\mathrm{reg}+\alpha\xi_\mathrm{reg}\;=\;{\textstyle\frac{1}{2}}\,W_\lambda\eta
 \end{equation}
as identities in $H^{\frac{1}{2}}(\mathbb{R}^3)$, and 
\begin{equation}\label{eq:Axixireg}
 \mathcal{A}_{\lambda,\alpha}^{[\beta]}\xi\;=\;2\,W_\lambda^{-1}(T_\lambda+\alpha\mathbbm{1})\,\xi_{\mathrm{reg}}
\end{equation}
as an identity in $H^{-\frac{1}{2}}(\mathbb{R}^3)$.
\end{lemma}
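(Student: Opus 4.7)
My plan is to derive the three identities in the order \eqref{eq:asympt_reg_part}, \eqref{eq:link_xireg_eta}, \eqref{eq:Axixireg}, where the first is the computational heart of the proof and the other two follow by bookkeeping. The starting observation is the identification $\eta=\mathcal{A}_{\lambda,\alpha}^{[\beta]}\xi$, which one reads off by matching the extension formula $g=f+(\mathring{H}_F+\lambda\mathbbm{1})^{-1}u_{\mathcal{A}_{\lambda,\alpha}^{[\beta]}\xi}+u_\xi$ in \eqref{eq:def_Halpha_beta} with the general decomposition \eqref{eq:decompositionD_Hdostar_2+1} of $\mathcal{D}(\mathring{H}^*)$; the regular part $F=f+(\mathring{H}_F+\lambda\mathbbm{1})^{-1}u_\eta$ of Theorem \ref{thm:DHab} then has Fourier transform $\widehat{F}(p_1,p_2)=\widehat{f}(p_1,p_2)+\widehat{u}_\eta(p_1,p_2)/(p_1^2+p_2^2+\mu p_1\cdot p_2+\lambda)$.

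The key step is integrating $\widehat{F}(p_1,\cdot)$ in $p_2$. The $\widehat{f}$-contribution vanishes: $f$ has zero trace on $\Gamma_2=\{x_2=0\}$, and this trace is, up to constants, the inverse Fourier transform in $x_1$ of $p_1\mapsto\int\widehat{f}(p_1,p_2)\,\ud p_2$, forcing that partial integral to be identically zero. Expanding $\widehat{u}_\eta$ via \eqref{eq:u_xi}, what remains splits as
\[
\int\widehat{F}(p_1,p_2)\,\ud p_2\;=\;\widehat{\eta}(p_1)\int\frac{\ud p_2}{(p_1^2+p_2^2+\mu p_1\cdot p_2+\lambda)^2}-\int\frac{\widehat{\eta}(p_2)\,\ud p_2}{(p_1^2+p_2^2+\mu p_1\cdot p_2+\lambda)^2}\,.
\]
Completing the square in $p_2$ rewrites the denominator as $(|p_2+\tfrac{\mu}{2}p_1|^2+\nu p_1^2+\lambda)^2$ with $\nu$ from \eqref{eq:nu}, and the elementary three-dimensional identity $\int_{\mathbb{R}^3}(q^2+a^2)^{-2}\,\ud q=\pi^2/a$ applied with $a=\sqrt{\nu p_1^2+\lambda}$ turns the first summand into $(\pi^2/\sqrt{\nu p_1^2+\lambda})\widehat{\eta}(p_1)$, while the second summand is already the convolution integral in the definition \eqref{eq:Wlambda} of $W_\lambda$. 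Assembling the two pieces reproduces exactly $\tfrac{1}{2}\widehat{(W_\lambda\eta)}(p_1)$, which is \eqref{eq:asympt_reg_part}.

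From here \eqref{eq:link_xireg_eta} and \eqref{eq:Axixireg} are immediate. Boundary condition (\textsc{bc}1) of Theorem \ref{thm:DHab} reads $\int\widehat{F}(p_1,p_2)\,\ud p_2=\widehat{(T_\lambda+\alpha\mathbbm{1})\xi_{\mathrm{reg}}}(p_1)$ as an identity in $H^{1/2}(\mathbb{R}^3)$, which combined with \eqref{eq:asympt_reg_part} yields \eqref{eq:link_xireg_eta}; applying the bijection $W_\lambda^{-1}:H^{1/2}\to H^{-1/2}$ of Proposition \ref{prop:T-W}(i) and recalling $\eta=\mathcal{A}_{\lambda,\alpha}^{[\beta]}\xi$ produces \eqref{eq:Axixireg}. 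The only genuinely analytical ingredient is the closed-form evaluation of $\int(q^2+a^2)^{-2}\,\ud q$, which is classical (polar coordinates plus $\int_0^\infty r^2(r^2+a^2)^{-2}\,\ud r=\pi/(4a)$) and is in fact where the specific form of $W_\lambda$ in \eqref{eq:Wlambda} originates; I therefore anticipate no substantive obstacle, the proof being essentially a matching of decompositions after a single closed-form integral.
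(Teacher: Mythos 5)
Your proof is correct, and its overall skeleton coincides with the paper's: establish \eqref{eq:asympt_reg_part} first, then combine it with $(\textsc{bc}1)$ to get \eqref{eq:link_xireg_eta}, then apply the bijection $W_\lambda^{-1}$ and the identification $\eta=\mathcal{A}_{\lambda,\alpha}^{[\beta]}\xi$ to get \eqref{eq:Axixireg}. The difference lies in how \eqref{eq:asympt_reg_part} is obtained. The paper writes $\int_{|p_2|<R}\widehat{F}\,\ud p_2=\int_{|p_2|<R}\widehat{g}\,\ud p_2-\int_{|p_2|<R}\widehat{u}_\xi\,\ud p_2$ and simply subtracts the two $R\to+\infty$ asymptotics already established in Proposition \ref{prop:asymptotic_integral_2+1} (and in the proof of \cite[Lemma 4]{MO-2016}, which also contains the $u_\xi$-asymptotics); the divergent terms $4\pi\widehat{\xi}(p_1)R$ and the $-\widehat{(T_\lambda\xi)}(p_1)$ terms cancel, leaving $\frac{1}{2}\widehat{(W_\lambda\eta)}(p_1)$ with no new computation. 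You instead compute $\int_{\mathbb{R}^3}\widehat{F}(p_1,p_2)\,\ud p_2$ from scratch out of the explicit decomposition $F=f+(\mathring{H}_F+\lambda\mathbbm{1})^{-1}u_\eta$: the $\widehat f$-contribution vanishes by the trace theorem (the partial integral is, up to constants, the Fourier transform of the trace of $f$ on $\Gamma_2$, which is zero for $f\in\mathcal{D}(\mathring{H})$), and the $u_\eta$-contribution is evaluated by completing the square and using $\int_{\mathbb{R}^3}(q^2+a^2)^{-2}\ud q=\pi^2/a$, reproducing exactly the two terms of $\frac12 W_\lambda\eta$ in \eqref{eq:Wlambda}. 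This is in effect a self-contained re-derivation of the relevant portion of \cite[Lemma 4]{MO-2016}; it is legitimate (the splitting of the $u_\eta$-integral into two convergent pieces is justified because the $W_\lambda$-type integral is absolutely convergent for $\eta\in H^{-1/2}(\mathbb{R}^3)$, as noted in the Remark after Proposition \ref{prop:asymptotic_integral_2+1}), and it has the merit of making transparent where the factor $\frac12 W_\lambda$ comes from, at the price of redoing a computation the paper prefers to cite. The paper additionally observes that \eqref{eq:Axixireg} can be read off directly from \eqref{eq:DFriedrA} and \eqref{eq:xi_reg} sector by sector, without passing through \eqref{eq:asympt_reg_part} at all; you do not use this shortcut, but your route through $(\textsc{bc}1)$ and $W_\lambda^{-1}$ reaches the same conclusion.
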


\begin{proof}
In fact, we know already \eqref{eq:Axixireg} from \eqref{eq:DFriedrA} and \eqref{eq:xi_reg}, because on each $\ell$-th sector 
$\mathcal{A}_{\lambda,\alpha}^{(\ell),\beta}\,\xi=\mathcal{A}_{\lambda,\alpha}^{(\ell),F}\,\xi_{\mathrm{reg}}=2\,W_\lambda^{-1}(T_\lambda^{(\ell)}+\alpha\mathbbm{1})\,\xi_{\mathrm{reg}}$, but we can also argue as follows. From $g=F+u_\xi$,
\[
\begin{split}
\int_{|p_2|< R}\widehat{F}(p_1,p_2)\,\ud p_2\;&=\;\int_{|p_2|< R}\widehat{g}(p_1,p_2)\,\ud p_2-\int_{|p_2|< R}\widehat{u_\xi}(p_1,p_2)\,\ud p_2\,,
\end{split}
\]
and from \eqref{eq:g_*_asymptotics_2+1}, the proof of which includes also the asymptotics for $u_\xi$ (see \cite[Lemma 4]{MO-2016}), one has 
\[
\begin{split}
\int_{|p_2|< R}\widehat{g}(p_1,p_2)\,\ud p_2\;&=\;4\pi\widehat{\xi}(p_1) R+\big(-\widehat{(T_\lambda\,\xi)}(p_1)+{\textstyle\frac{1}{2}}\,\widehat{(W_\lambda\,\eta)}(p_1)\big)+o(1)\\
\int_{|p_2|< R}\widehat{u_\xi}(p_1,p_2)\,\ud p_2\;&=\;4\pi\widehat{\xi}(p_1) R-\widehat{(T_\lambda\,\xi)}(p_1)+o(1)
\end{split}
\]
as $R\to +\infty$, which gives \eqref{eq:asympt_reg_part}. Combining \eqref{eq:asympt_reg_part} with (\textsc{bc1}) in \eqref{eq:DHab} yields \eqref{eq:link_xireg_eta}. Last, \eqref{eq:Axixireg} follows because $\eta=\mathcal{A}_{\lambda,\alpha}^{[\beta]}\xi$ and $W_\lambda$ is a bijection $H^{-\frac{1}{2}}(\mathbb{R}^3)\to H^{\frac{1}{2}}(\mathbb{R}^3)$ (Proposition \ref{prop:T-W}(i)).
\end{proof}


Plugging $\eta=\mathcal{A}_{\lambda,\alpha}^{[\beta]}\xi$ given by \eqref{eq:Axixireg} into \eqref{eq:g_*_asymptotics_2+1} gives
\begin{equation}\label{eq:trueTMS_general}
\begin{split}
\int_{|p_2|< R}\widehat{g}(p_1,p_2)\,\ud p_2\;&=\;4\pi\widehat{\xi}(p_1) R+\alpha\,\widehat{\xi}_{\mathrm{reg}}(p_1)+o(1)\quad\textrm{ as }R\to +\infty\,.
\end{split}
\end{equation}
For those $g$'s in $\mathcal{D}(H_{\alpha}^{[\beta]})$ whose charge $\xi$ has only \emph{regular} part the asymptotics take the genuine TMS form
\begin{equation}\label{eq:trueTMS_general_reg}
\begin{split}
\int_{|p_2|< R}\widehat{g}(p_1,p_2)\,\ud p_2\;&=\;(4\pi R+\alpha)\,\widehat{\xi}_{\mathrm{reg}}(p_1)+o(1)\quad\textrm{ as }R\to +\infty\,.
\end{split}
\end{equation}
In \eqref{eq:trueTMS_general_reg} the point-wise approximation of the function $\int_{|p_2|< R}\widehat{g}(p_1,p_2)\,\ud p_2$ is given by a $H^{\frac{1}{2}}$-function (because so is $\xi_{\mathrm{reg}}$); in \eqref{eq:trueTMS_general}, due to the possible presence of more singular charges in $\xi=\xi_\mathrm{reg}+\sum_{n=-1}^1 q_n\Xi_{1,n}$, the point-wise approximating function for  $\int_{|p_2|< R}\widehat{g}(p_1,p_2)\,\ud p_2$ is in turn less regular.

Thus, the Hamiltonian $H_\alpha\equiv H_\alpha^{[\beta=\infty]}$ discussed in Section \ref{sec:reconstr_Halpha} and in the works \cite{Finco-Teta-2012,CDFMT-2012} has a genuine TMS condition for the functions of its domain, whereas for the Hamiltonians $H_\alpha\equiv H_\alpha^{[\beta]}$, due to the occurrence of more singular charges, the TMS condition survives only for sufficiently regular elements of the domain.

\section{Multiplicity of extensions and three-body boundary condition}\label{sec:3bodycond}

In this Section we discuss the interpretation of the boundary condition $\textsc{(bc2)}$ in \eqref{eq:DHab}, that is, the condition that qualifies the link between the regular and the singular part of the charge associated with an element in the domain of the self-adjoint Hamiltonian $H_\alpha^{[\beta]}$, provided of course that the charge has a non-zero component in the sector of symmetry $\ell=1$.

In fact, $\textsc{(bc2)}$ has eventually a natural interpretation of \emph{additional three-body boundary condition} (besides the two-body Ter-Martirosyan--Skornyakov boundary condition at the coincidence hyperplanes).

The circumstance that for three-body models with point interaction there is a regime of the mass parameter ($m\in(m^*,m^{**})$ in our notation for the present (2+1)-fermionic model) in which the Hamiltonian is not fully qualified by the sole specification of a boundary condition of TMS type, when two particles come on top of each other, and instead one needs to fix a further three-body parameter for an unambiguous realisation of the Hamiltonian, emerged quite immediately in the study of point interactions, even at a non-rigorous level. 
Indeed, right after Skornyakov and Ter-Martirosyan had defined their (formal) three-body model for identical bosons \cite{TMS-1956}, it was noted by Danilov \cite{Danilov-1961} that the equation for eigenvalues has a non-physical (today we would say: incompatible with self-adjointness) continuum of negative solutions. To resolve such a continuum, an additional constraint was prescribed, in the form of a real constant of proportionality between the coefficients of the two \emph{leading} singularities of the corresponding eigenfunctions
(see, e.g., \cite[Eq.~(3)]{Minlos-Faddeev-1961-2}), thus resulting in a discrete spectrum for each value of the additional parameter. This real constant was given in some vague sense the meaning of a three-body parameter, and it was indeed its emergence to motivate the subsequent first attempt of Minlos and Faddeev \cite{Minlos-Faddeev-1961-1,Minlos-Faddeev-1961-2} for a rigorous explanation in terms of a one-parameter family of self-adjoint realisations of the formal Hamiltonian, a picture in which Danilov's parameter is seen on the same footing as our $\beta$ in \eqref{eq:DHab}.

In the present setting, the role of $\beta\equiv(\beta_{-1},\beta_0,\beta_1)$, $\beta_j\in\mathbb{R}\cup\{\infty\}$, as a parametrisation of each self-adjoint extension $H_\alpha^{[\beta]}$ was rigorously established within the Kre{\u\i}n-Vi\v{s}ik-Birman extension theory. We want now to comment about its interpretation as the parameter for a three-body condition.

For $g\in\mathcal{D}(H_\alpha^{[\beta]})$ let assume non-restrictively that the corresponding charge $\xi$ in \eqref{eq:DHab} belongs entirely to the sector of symmetry $\ell=1$. First we observe that if we write (from \eqref{eq:A-extensions_op})
\begin{equation}\label{eq:xireg_in_components}
\begin{split}
&\xi\;=\;\xi_{\mathrm{reg}}+\xi_{\mathrm{sing}}\,,\qquad \xi_{\mathrm{reg}}\;=\;\xi_1+\xi_2^{(\beta)}\,,\qquad\xi_{\mathrm{sing}}\;=\sum_{n=-1}^1 q_n\Xi_{1,n}\,, \\
\xi_1&\in\mathcal{D}(\overline{\mathcal{A}_{\lambda,\alpha}^{(1)}})\,,\qquad \xi_2^{(\beta)}\;:=\;(\mathcal{A}_{\lambda,\alpha}^{(1),F})^{-1}\xi_{\mathrm{sing}}^{(\beta)}\,,\qquad \xi_{\mathrm{sing}}^{(\beta)}\;:=\sum_{n=-1}^1 \beta_n q_n\Xi_{1,n}\,,
\end{split}
\end{equation}
then the boundary condition $\textsc{(bc2)}$ in \eqref{eq:DHab} is only effective as a constraint between $\xi_2^{(\beta)}$ and $\xi_{\mathrm{sing}}$, and insensitive to $\xi_1$, for
\[
\langle \Xi_{1,n},(T_\lambda^{(1)}+\alpha\mathbbm{1})\xi_1\rangle_{H^{-\frac{1}{2}},H^{\frac{1}{2}}}\;=\;\langle \Xi_{1,n},\overline{\mathcal{A}_{\lambda,\alpha}^{(1)}}\,\xi_1\rangle_{H^{-1/2}_{W_\lambda}}\;=\;\langle \mathcal{A}_{\lambda,\alpha}^{(1)\star}\Xi_{1,n},\xi_1\rangle_{H^{-1/2}_{W_\lambda}}\;=\;0\,.
\]
Thus, $\textsc{(bc2)}$ is equivalent to
\begin{equation}
\langle \Xi_{1,n},(T_\lambda^{(1)}+\alpha\mathbbm{1})\,\xi_2^{(\beta)}\rangle_{H^{-\frac{1}{2}},H^{\frac{1}{2}}}\;=\;\beta_n|q_n|^2\qquad (\textsc{bc2})'\,.
\end{equation}

In \eqref{eq:xireg_in_components} we can think of $\xi=\xi_{\mathrm{reg}}+\xi_{\mathrm{sing}}=\xi_1+\xi_2^{(\beta)}+\xi_{\mathrm{sing}}$ as the sum of a `very regular' component $\xi_1$, a `less regular' $\xi_2^{(\beta)}$, and a singular $\xi_{\mathrm{sing}}$. In other words, $\xi_{\mathrm{sing}}$ and $\xi_2^{(\beta)}$ represent, respectively, the leading and the next-to-leading singularity in the charge $\xi$, and the parameter $\beta$ in $(\textsc{bc2})'$ prescribes a constraint between such singularities, equivalent to the $\beta$-dependent constraint between $\xi_{\mathrm{sing}}$ and $\xi_2^{(\beta)}$ given by \eqref{eq:xireg_in_components}.

It can be argued (it is part of our main conjecture and we shall elaborate more on that in Section \ref{sec:mass_thr_and_evidence_of_conj}) that the singularity in $\xi_{\mathrm{sing}}$ has (radially) the form
\begin{equation}\label{eq:xising_asympt}
 \widehat{\xi}_{\mathrm{sing}}(p)\;\sim\;|p|^{-2+s(m)}\qquad\textrm{as}\qquad|p|\to+\infty
\end{equation}
for some mass-dependent exponent $s(m)\in(0,1)$ -- observe that $s(m)<1$ ensures that $\xi_{\mathrm{sing}}\in H^{-\frac{1}{2}}(\mathbb{R}^3)$, as it has to be because $\xi_{\mathrm{sing}}\in\ker(\mathcal{A}_{\lambda,\alpha}^{(\ell)})^\star\subset H^{-1/2}_{\ell=1}(\mathbb{R}^3)$, whereas $s(m)> 0$ makes $\xi_{\mathrm{sing}}$ a charge in  $H^{s}(\mathbb{R}^3)$ with $s<\frac{1}{2}$. One can also argue what the milder singularity is in $\xi_2^{(\beta)}=(\mathcal{A}_{\lambda,\alpha}^{(1),F})^{-1}\xi_{\mathrm{sing}}^{(\beta)}$, namely what the regularisation effect is of the operator $(\mathcal{A}_{\lambda,\alpha}^{(1),F})^{-1}$ on the function $\xi_{\mathrm{sing}}^{(\beta)}$. From \eqref{eq:xireg_in_components}, in each sub-sector of symmetry $n=-1,0,1$ one has $\widehat{\xi}_{\mathrm{sing},n}^{(\beta)}\sim\beta_n  |p|^{-2+s(m)}$, that is, $\xi_{\mathrm{sing}}^{(\beta)}$
has by construction the same singularity as $\xi_{\mathrm{sing}}$. With heuristics based on the same arguments as in \cite[Proposition 2.2 and Eq.~(3.53)]{CDFMT-2015} we find
\begin{equation}\label{eq:xi2_asympt}
 \widehat{\xi_{2,n}^{(\beta)}}(p)\;\sim\;\beta_n\,|p|^{-2-s(m)}\qquad\textrm{as}\qquad|p|\to+\infty\,.
\end{equation}

Summarising, we do expect for the charge $\xi$ the asymptotics (see, e.g., \cite[Eq.~(2.17)]{CDFMT-2015}
\begin{equation}\label{eq:xi_asympt}
\widehat{\xi}(p)\;\sim\;\sum_{n=-1}^1\Big(\frac{1}{\;|p|^{2-s(m)}}+\frac{\beta_n A_n}{\;|p|^{2+s(m)}}+o(|p|^{-2-s(m)})\Big)\,,\qquad |p|\to+\infty\,,
\end{equation}
for some coefficients $A_n\in\mathbb{C}$, where the condition $(\textsc{bc2})'$ is implemented via its equivalent form $\xi_2^{(\beta)}=(\mathcal{A}_{\lambda,\alpha}^{(1),F})^{-1}\sum_{n=-1}^1 \beta_n q_n\Xi_{1,n}$, in view of \eqref{eq:xireg_in_components}, \eqref{eq:xising_asympt}, and \eqref{eq:xi2_asympt} above.

In turn, the singularity in $\xi$ determines the singularity of the function $g\in\mathcal{D}(H_\alpha^{[\beta]})$, more precisely of its singular (i.e., non-$H^2$) component $u_\xi$, by means of definition \eqref{eq:u_xi}. The correspondence $\xi\mapsto u_\xi$ is linear, thus resulting in a term-by-term counterpart of the asymptotics \eqref{eq:xi_asympt} for $\widehat{u}_\xi(p_1,p_2)$ as $|p_1|,|p_2|\to\infty$. For a cleaner inspection of the latter, it is convenient to consider the function $u'_\xi$ corresponding to a homogeneous version of $u_\xi$, defined by
\begin{equation}\label{eq:u_xi'}
 \widehat{u}'_\xi(p_1,p_2)\;:=\;\frac{\widehat{\xi}(p_1)-\widehat{\xi}(p_2)}{p_1^2+p_2^2+\mu\,p_1\cdot p_2}\,,\qquad p_1,p_2\in\mathbb{R}^3\,,
\end{equation}
which is tantamount as subtracting a very regular function from $u_\xi$. Now formulas \eqref{eq:xi_asympt} and \eqref{eq:u_xi'} allow to perform an easy scaling argument, that can be also translated back to position coordinates via inverse Fourier transform. The result, as found and discussed in \cite[Remark 2.8]{CDFMT-2015}, is
\begin{equation}\label{eq:triple_coinc_pt}
 u'_\xi(r\widehat{x}_1,r\widehat{x}_2)\;\underset{r\to 0}{\sim}\;\frac{1}{\:r^2}\sum_{n=-1}^1\Big(\frac{1}{\;r^{s(m)}}+\beta_n a_n\,r^{s(m)}+o(r^{s(m)})\Big)
\end{equation}
(here $\widehat{x}_j\in\mathbb{R}^3$, $|\widehat{x}_j|=1$, $j=1,2$).

Formula \eqref{eq:triple_coinc_pt} expresses an \emph{asymptotics at the triple coincidence point}. Thus, the boundary condition $\textsc{(bc2)}$ in \eqref{eq:DHab} translates into the precise $\beta$-dependent asymptotics \eqref{eq:triple_coinc_pt} that prescribes the proportionality relation between the constants of the leading and of the next-to-leading singularity of the three-body wave function when all three particles come on top of each other in a \emph{collapse onto the centre of mass}.

Also, the explicit form of \eqref{eq:triple_coinc_pt} matches precisely the analogous findings in the physical literature -- see, e.g., \cite[note 43]{Werner-Castin-2006-PRA}, and more recently \cite{Kartavtsev-Malykh-2016,Kartavtsev-Malykh-2016-proc}.

In conclusion: assigning solely the two-body $(|x_j|^{-1}+\alpha)$-singularity (as $|x_j|\to 0$) at the contact between two particles is not enough in general to qualify a domain of self-adjointness: for $m\in(m^*,m^{**})$, next to the two-body parameter $\alpha$ (scattering length), an additional parameter $\beta\in\mathbb{R}\cup\{\infty\}$ need be specified, which regulates the three-body singularity at the simultaneous spatial coincidence of all three particles in terms of the asymptotics \eqref{eq:triple_coinc_pt}.

\section{Mass thresholds and evidence of our conjecture}\label{sec:mass_thr_and_evidence_of_conj}

In this Section we discuss an amount of stringent evidence corroborating our main conjecture and the origin of the mass threshold $m^{**}$. Some of the following considerations originate in fact from the analysis of the previous literature, but only after having developed here the correct extension framework a la Kre{\u\i}n, Vi\v{s}ik, and Birman can we finally formulate the appropriate arguments.

In a series of recent notable works \cite{Minlos-2012-preprint_30sett2011,Minlos-2012-preprint_1nov2012,Minlos-RusMathSurv-2014}, Minlos established a result that is particularly relevant for the present discussion: we re-phrase it here within our current notation and together with an auxiliary clarifying result found in \cite{CDFMT-2015}.

\begin{theorem}{Proposition}\label{prop:critical_masses}

\begin{itemize}
 \item[(i)] \emph{\cite[Appendix A]{CDFMT-2015}} -- For $m>0$ and $s\in[0,1]$ the integral equation
 \begin{equation}\label{eq:s-integral-equation}
\pi\sqrt{\frac{\,m(m+2)}{(m+1)^2}}+\int_{-1}^1\!\ud y\,y\int_0^{+\infty}\!\!\ud r\,\frac{r^s}{\,r^2+1+\frac{2}{m+1}ry}\;=\;0
\end{equation}
 defines a continuous and monotone increasing function $s\mapsto m(s)$ with $m(0)=m^*\approx (13.607)^{-1}$ given by 
 $\Lambda(m^*)=1$ in \eqref{eq:Lambdam} and $m(1)=m^{**}\approx (8.62)^{-1}$ given by \eqref{eq:m**root}. The integral equation \eqref{eq:s-integral-equation} defines also a continuous and monotone increasing function $m\mapsto m(s)$, the inverse of $s\mapsto m(s)$.
 \item[(ii)] \emph{\cite{Minlos-2012-preprint_30sett2011,Minlos-2012-preprint_1nov2012,Minlos-RusMathSurv-2014}} -- For given $m>m^*$ and $\lambda>0$ consider $T_{\lambda}^{(1)}$ defined by \eqref{eq:Tlambda} and \eqref{eq:T-Tell} as an operator on $L^2_{\ell=1}(\mathbb{R}^3)$ with domain $\mathcal{D}(T_{\lambda}^{(1)})=H^{3/2}_{\ell=1}(\mathbb{R}^3)$, hence densely defined and symmetric in $L^2_{\ell=1}(\mathbb{R}^3)$ (Proposition \ref{prop:T-W}(ii)). Let $m^{**}_{\textsc{m}}=m(\frac{1}{2})\approx(12.315)^{-1}$ be the unique root of the integral equation \eqref{eq:s-integral-equation} when $s=\frac{1}{2}$. Then for $m\geqslant m^{**}_{\textsc{m}}$ the operator $T_{\lambda}^{(1)}$ is essentially self-adjoint and for $m\in(m^*,m^{**}_{\textsc{m}})$ it has deficiency index equal to three. When $m\in(m^*,m^{**}_{\textsc{m}})$, the kernel of $T_{\lambda=0}^{(1)}$ is spanned by the three functions $\Xi'_{1,-1}$, $\Xi'_{1,0}$, and $\Xi'_{1,1}$ given by
 \begin{equation}
 \widehat{\Xi}'_{1,n}(p)\;=\;\frac{\mathbf{1}_{\{|p|\geqslant 1\}}}{\,|p|^{2-s(m)\,}}Y_{1,n}(\Omega_p)\qquad n=-1,0,1
 \end{equation}
 in polar coordinates $p\equiv|p|\Omega_p$.
\end{itemize}
\end{theorem}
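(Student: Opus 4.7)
The plan is to handle separately the calculus statement (i) on the transcendental equation and the operator-theoretic statement (ii) on deficiency indices.

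For part (i), the first step is to evaluate the inner radial integral in closed form. Setting $\cos\phi(y,m):=y/(m+1)\in(-1,1)$ for $y\in(-1,1)$ and $m>0$, the standard Mellin identity
\[
\int_0^{+\infty}\frac{r^s}{r^2+2(\cos\phi)\,r+1}\,\ud r\;=\;\frac{\pi\,\sin(s\phi)}{\sin(\pi s)\,\sin\phi}\qquad (s\in(-1,1),\;\phi\in(0,\pi))
\]
converts \eqref{eq:s-integral-equation} into an explicit transcendental relation $\Phi(m,s)=0$ involving only elementary trigonometric functions (at the endpoint $s=1$ the inner integral diverges, but the $y$-weighted double integral remains finite as an iterated integral and is evaluated by first integrating in $y$, the divergent $y$-even contribution being cancelled by the $y$-weight). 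One then verifies that at $s=0$ the relation reduces algebraically to $\Lambda(m)=1$, recovering $m^*\approx(13.607)^{-1}$, and at $s=1$ it is precisely \eqref{eq:m**root}, defining $m^{**}\approx(8.62)^{-1}$. Continuity and strict monotonicity of $s\mapsto m(s)$ then follow from the implicit function theorem, provided $\partial_s\Phi$ and $\partial_m\Phi$ are of definite (opposite) sign throughout the relevant range; this is a direct computation from the explicit closed form.

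For part (ii) I would follow Minlos's strategy. Writing $\widehat{\Xi}(p)=f(|p|)Y_{1,n}(\Omega_p)$ and integrating the kernel of \eqref{eq:Tlambda} with $\lambda=0$ against $Y_{1,n}$ over the angular variable reduces $T_0^{(1)}$ to an explicit scalar radial operator $\widetilde T$ on $L^2(\mathbb{R}^+,r^2\,\ud r)$. Since $\widetilde T$ is homogeneous of degree one under the dilation $r\mapsto cr$, a Mellin transform diagonalises it as multiplication by a smooth function $\Psi_1(\tau)$ of the Mellin variable. Plugging the ansatz $\widehat\Xi_\tau(p)=|p|^{-2+\tau}Y_{1,n}(\Omega_p)$ into the distributional action of $T_0^{(1)}$ and using the closed form from part (i) shows that $T_0^{(1)}\widehat\Xi_\tau=0$ (as a distribution) precisely when $\tau=s$ is a root of the equation in part (i). A function of the form $\mathbf{1}_{\{|p|\geqslant 1\}}|p|^{-2+s}Y_{1,n}(\Omega_p)$ lies in $L^2_{\ell=1}(\mathbb{R}^3)$ iff $s<\tfrac12$, i.e.\ iff $m<m^{**}_{\textsc{m}}=m(\tfrac12)$; this identifies the three $\Xi'_{1,n}$ as a basis of $\ker T_0^{(1)}$ in that regime and gives vanishing kernel otherwise. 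The claim on deficiency indices then follows from Kre\u{\i}n's result that, for a closed semibounded symmetric operator with strictly positive lower bound (available for $T_\lambda^{(1)}$ with $\lambda>0$ large, by \eqref{eq:positivity_from_CDFMT2012}), the deficiency index equals $\dim\ker(T_\lambda^{(1)})^*$, combined with a deformation argument connecting the latter to the $\lambda=0$ kernel just analysed (the deficiency index being independent of $\lambda>0$).

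The main obstacle I anticipate is the rigorous execution of the Mellin-analytic step of part (ii). One must both diagonalise $\widetilde T$ at the operatorial rather than formal level and, crucially, prove that \emph{every} $L^2$ null mode of $T_0^{(1)}$ arises from a zero of $\Psi_1(\tau)$ in the critical strip producing an $L^2$-admissible power law, so as to obtain $\dim\ker T_0^{(1)}\leqslant 3$ and not merely $\geqslant 3$. A subsidiary technicality is that $\mathbf{1}_{\{|p|\geqslant 1\}}|p|^{-2+s}\notin H^{3/2}_{\ell=1}(\mathbb{R}^3)=\mathcal{D}(T_0^{(1)})$, so the identity $T_0^{(1)}\Xi'_{1,n}=0$ must be read distributionally and then lifted to $(T_0^{(1)})^*$ by testing against sufficiently regular charges and a mollifier-and-cutoff procedure. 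These are exactly the steps developed in depth in \cite{Minlos-2012-preprint_30sett2011,Minlos-2012-preprint_1nov2012,Minlos-RusMathSurv-2014}; part (i), by contrast, reduces to the self-contained calculation sketched above.
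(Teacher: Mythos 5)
The paper offers no proof of Proposition \ref{prop:critical_masses}: it is stated expressly as a restatement of results imported from the literature, part (i) from \cite[Appendix A]{CDFMT-2015} and part (ii) from Minlos's works \cite{Minlos-2012-preprint_30sett2011,Minlos-2012-preprint_1nov2012,Minlos-RusMathSurv-2014}, so any genuine proof attempt is by construction a different route from the paper's. Your sketch is a faithful reconstruction of how those references proceed, and its self-contained portion is sound. For part (i), the Mellin identity you quote is the correct closed form of the inner integral for $s\in(-1,1)$, and your reading of \eqref{eq:m**root} as an iterated/regularised integral is the right one: at $s=1$ the inner $r$-integral diverges logarithmically for each fixed $y$, but the divergent term is $y$-independent and is annihilated by the odd weight $y$, so the double integral is finite when the $y$-integration is performed first. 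What you leave unchecked (the algebraic reduction to $\Lambda(m)=1$ at $s=0$ and the definite signs of $\partial_s\Phi$ and $\partial_m\Phi$ needed for monotonicity) is elementary bookkeeping, acceptable at the level of a sketch. For part (ii) you correctly identify the Mellin diagonalisation of the dilation-homogeneous radial operator, the $L^2$ admissibility criterion $s(m)<\frac{1}{2}$ that produces the threshold $m^{**}_{\textsc{m}}=m(\frac{1}{2})$, the Kre{\u\i}n identification of the deficiency index with $\dim\ker(T^{(1)}_\lambda)^*$ for a strictly positive symmetric operator, and the $\lambda$-independence of the index; the paper itself invokes this last observation in Section \ref{sec:mass_thr_and_evidence_of_conj}, noting that $T^{(1)}_\lambda-T^{(1)}_{\lambda=0}$ is a bounded self-adjoint perturbation. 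The one substantive point you flag but do not close --- that \emph{every} $L^2$ null mode of the adjoint arises from an admissible Mellin zero, i.e.\ the upper bound $\dim\ker\leqslant 3$ --- is precisely the hard content of Minlos's analysis, and deferring it to those references places your argument on exactly the same footing as the paper's own citation.
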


Given now $\alpha\in\mathbb{R}$, and taking $\lambda$ large enough (depending on $\alpha$), the operator $T_{\lambda}^{(1)}+\alpha\mathbbm{1}$ is positive in $L^2_{\ell=1}(\mathbb{R}^3)$ with strictly positive bottom, as discussed in the proof of Proposition \ref{prop:Apos} and in \cite[Proposition 3.1]{CDFMT-2012}. Therefore, an immediate corollary of Proposition \ref{prop:critical_masses}(ii) is:

\begin{lemma}{Corollary}\label{cor:from_Minlos}
Given $\alpha\in\mathbb{R}$, and taking $\lambda$ large enough, one has
\begin{equation}
\dim\ker(T_{\lambda}^{(1)*}+\alpha\mathbbm{1})\;=\;\begin{cases}
 \;3 & m\in(m^*,m^{**}_{\textsc{m}}) \\
 \;0 & m\geqslant m^{**}_{\textsc{m}}\,.
\end{cases}
\end{equation}
\end{lemma}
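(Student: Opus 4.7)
The plan is to reduce the statement to Minlos' deficiency-index computation (Proposition \ref{prop:critical_masses}(ii)) via the standard Kre{\u\i}n-Vi\v{s}ik-Birman principle that, for a closed densely defined symmetric operator with strictly positive bottom, the dimension of the kernel of the adjoint equals the common deficiency index.

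First, I would verify that for $\lambda$ large enough (depending on $\alpha$) the operator $T_\lambda^{(1)} + \alpha\mathbbm{1}$ is densely defined, symmetric, and has strictly positive bottom on $L^2_{\ell=1}(\mathbb{R}^3)$. Density and symmetry are Proposition \ref{prop:T-W}(ii)--(iv). Strict positivity follows exactly as in the proof of Proposition \ref{prop:Apos}, now measured in the plain $L^2$-norm rather than in the twisted $H^{-1/2}_{W_\lambda}$-norm: the bound \eqref{eq:positivity_from_CDFMT2012} from \cite[Proposition 3.1]{CDFMT-2012} yields
\[
\langle \xi, (T_\lambda^{(1)} + \alpha\mathbbm{1})\xi\rangle_{L^2} \;\geq\; \bigl(2\pi^2\sqrt{\lambda}\,(1 - \Lambda(m)) + \alpha\bigr)\,\|\xi\|_{L^2}^2\,,
\]
which is strictly positive for $\lambda$ large enough since $\Lambda(m) < 1$ for $m > m^*$.

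Next, I would observe that the deficiency indices of $T_\lambda^{(1)} + \alpha\mathbbm{1}$ coincide with those of $T_\lambda^{(1)}$: indeed $(T_\lambda^{(1)} + \alpha\mathbbm{1})^* = T_\lambda^{(1)*} + \alpha\mathbbm{1}$, and for any non-real $z$ one has $\ker\bigl(T_\lambda^{(1)*} + \alpha\mathbbm{1} - z\mathbbm{1}\bigr) = \ker\bigl(T_\lambda^{(1)*} - (z-\alpha)\mathbbm{1}\bigr)$ with $z-\alpha$ still non-real. Hence Proposition \ref{prop:critical_masses}(ii) tells us that the common deficiency index equals $3$ when $m \in (m^*, m^{**}_{\textsc{m}})$ and $0$ when $m \geq m^{**}_{\textsc{m}}$.

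Finally, I would identify $\dim \ker(T_\lambda^{(1)*} + \alpha\mathbbm{1})$ with this deficiency index by appealing to the Kre{\u\i}n-Vi\v{s}ik-Birman scheme used throughout this paper (\cite[Theorem 3.4 and preliminaries]{M-KVB2015}): since (the closure of) $T_\lambda^{(1)} + \alpha\mathbbm{1}$ has strictly positive bottom, $z = 0$ lies in the resolvent set of its Friedrichs extension, hence in the field of regularity of the operator, so $\dim \ker(A^* - z\mathbbm{1})$ is constant on the connected component containing both the origin and the half-planes, and equals $n_\pm$ already at $z = 0$. Equivalently, the KVB direct-sum decomposition $\mathcal{D}(A^*) = \mathcal{D}(\overline{A}) \dotplus A_F^{-1}\ker A^* \dotplus \ker A^*$ gives $\dim \ker A^* = n_\pm$ at once. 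Applying this with $A$ the closure of $T_\lambda^{(1)} + \alpha\mathbbm{1}$ closes the proof. No substantive obstacle arises: all ingredients -- Minlos' theorem, the strict positivity secured by taking $\lambda$ large, and the KVB identification -- are already at hand; the only piece worth highlighting explicitly in the writeup is the invocation of constancy of $\dim \ker(A^* - z\mathbbm{1})$ on the field of regularity that allows evaluation at $z = 0$.
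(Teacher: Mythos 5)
Your argument is correct and follows the same route the paper takes: the paper presents this as an "immediate corollary" of Proposition \ref{prop:critical_masses}(ii), justified precisely by the strict positivity of $T_\lambda^{(1)}+\alpha\mathbbm{1}$ on $L^2_{\ell=1}(\mathbb{R}^3)$ for $\lambda$ large (via \cite[Proposition 3.1]{CDFMT-2012}, as in Proposition \ref{prop:Apos}) together with the Kre{\u\i}n-Vi\v{s}ik-Birman identification of $\dim\ker(A^*)$ with the deficiency index for a symmetric operator with strictly positive bottom. You merely spell out the standard intermediate facts (invariance of deficiency indices under the bounded self-adjoint shift $\alpha\mathbbm{1}$, constancy of $\dim\ker(A^*-z)$ on the field of regularity) that the paper leaves implicit.
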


Minlos's analysis was motivated by the (wrong) belief that the self-adjoint extensions of $T_{\lambda}^{(1)}$ in $L^2_{\ell=1}(\mathbb{R}^3)$ label the self-adjoint realisations of the Ter-Martirosyan--Skornyakov Hamiltonian (Remark \ref{rem:remark_history}). However, it has the virtue of showing that $T_{\lambda}^{(1)*}$ -- the adjoint  in $L^2_{\ell=1}(\mathbb{R}^3)$ -- has \emph{the same formal action} as $T_{\lambda}^{(1)}$ \cite[Eq.~(4.14)-(4.16)]{Minlos-2012-preprint_30sett2011}, and therefore Corollary \ref{cor:from_Minlos} can be reinterpreted by saying that the space of solutions to
\begin{equation}\label{eq:(T+a)Xi=0_L2-bis}
\qquad(T_\lambda^{(1)}+\alpha\mathbbm{1})\,f\;=\;0\,,\qquad\qquad f\in L^2_{\ell=1}(\mathbb{R}^3)
\end{equation}
(see \eqref{eq:(T+a)Xi=0_L2}) is trivial for $m\geqslant m^{**}_{\textsc{m}}$ and three-dimensional for $m\in(m^*,m^{**}_{\textsc{m}})$.

Observe also that the deficiency index of $T_{\lambda}^{(1)}$ and $T_{\lambda=0}^{(1)}$ clearly coincide (for, as operators in $L^2_{\ell=1}(\mathbb{R}^3)$, they only differ by a bounded and self-adjoint operator) and that the condition $m\in(m^*,m^{**}_{\textsc{m}})$ corresponds to $s(m)\in(0,\frac{1}{2})$, which is precisely the condition that makes the functions $\Xi'_{1,n}$ to belong to $L^2_{\ell=1}(\mathbb{R}^3)$. Of course one cannot expect that the $\Xi'_{1,n}$'s span also $\ker(T_{\lambda}^{(1)*}+\alpha\mathbbm{1})$, however, it is reasonable to expect that the functions in $\ker(T_{\lambda}^{(1)*}+\alpha\mathbbm{1})$ have the same singularity $|p|^{-2+s(m)}$ as $|p|\to +\infty$ (in Fourier transform).

In our discussion in Section \ref{sec:ext_scheme_for_A} we showed that what actually parametrises the self-adjoint realisations of the Ter-Martirosyan--Skornyakov Hamiltonian at given inverse scattering length $\alpha$ are the self-adjoint extensions of the operator $\mathcal{A}_{\lambda,\alpha}^{(1)}$ in the Hilbert space $H^{-1/2}_{W_\lambda,\ell=1}(\mathbb{R}^3)$, which are in turn parametrised by self-adjoint operators on $\ker(\mathcal{A}_{\lambda,\alpha}^{(\ell)})^\star\subset H^{-1/2}_{W_\lambda,\ell}(\mathbb{R}^3)$. We also showed (Proposition \ref{lem:kerAstar}) that $\ker(\mathcal{A}_{\lambda,\alpha}^{(\ell)})^\star$, when $\ell=1$, is the space of distributional solutions to
\begin{equation}\label{eq:(T+a)Xi=0_H-1/2-bis}	
\qquad(T_\lambda^{(1)}+\alpha\mathbbm{1})\,\Xi\;=\;0\,,\qquad\qquad \Xi\in H^{-1/2}_{\ell=1}(\mathbb{R}^3)
\end{equation}
(see \eqref{eq:(T+a)Xi=0_H-1/2}).

Thus, at least for $m<m^{**}_{\textsc{m}}$, \emph{and possibly up to a larger threshold than Minlos's $m^{**}_{\textsc{m}}$}, the equation \eqref{eq:(T+a)Xi=0_H-1/2-bis} must have a space of solutions of dimension three or larger. This is the initial motivation for our conjecture: the possibility of a wider (than Minlos's) range of masses for the multiplicity of self-adjoint TMS operators.

By analogy, it would then be plausible to expect that \eqref{eq:(T+a)Xi=0_H-1/2-bis} has non-trivial solutions, with the same singularity $|p|^{-2+s(m)}$ as $|p|\to +\infty$, on the whole range of mass $m$ for which the corresponding $s(m)$ makes the singularity $|p|^{-2+s(m)}$ a $H^{-\frac{1}{2}}$-singularity. The latter range of masses is exactly $(m^*,m^{**})$.

In fact, one is driven to such a conclusion not only by mere analogy. If one applies $T_\lambda^{(1)}+\alpha\mathbbm{1}$ to a charge $\Xi\in H^{-1/2}_{\ell=1}(\mathbb{R}^3)$ of the form
 \begin{equation}
 \widehat{\Xi}(p)\;=\;\frac{\:\mathbf{1}_{\{|p|\geqslant K\}}\,}{\,|p|^{2-s\,}}\,Y_{1,n}(\Omega_p)
 \end{equation}
for some $K>0$, $s\in(0,1)$, and $n\in\{-1,0,1\}$, and one splits
\[
\begin{split}
(T_\lambda^{(1)}+\alpha\mathbbm{1})\,\Xi\;&=\;T_{\lambda=0}^{(1)}\,\Xi+(T_\lambda^{(1)}-T_{\lambda=0}^{(1)}+\alpha\mathbbm{1})\,\Xi\,,
\end{split}
\]
then it is straightforward to see that 
\[
(T_\lambda^{(1)}-T_{\lambda=0}^{(1)}+\alpha\mathbbm{1})\,\Xi\;\in\;H^{-1/2}_{\ell=1}(\mathbb{R}^3)
\]
and that
\[
\begin{split}
 \widehat{(T_{\lambda=0}^{(1)}\,\Xi)}(p)\;=\;\frac{\,2\pi\,Y_{1,n}(\Omega_p)}{\:|p|^{1-s}}\Big(\pi\sqrt{\nu}\,\mathbf{1}_{\{|p|\geqslant K\}}+\int_{-1}^1\!\ud y\,y\int_{K/\rho}^{+\infty}\!\ud r\,\frac{r^s}{\,r^2+1+\mu ry}\Big)\,.
\end{split}
\]
Now, $T_{\lambda=0}^{(1)}\,\Xi\notin H^{-1/2}_{\ell=1}(\mathbb{R}^3)$ -- whereas surely $T_{\lambda=0}^{(1)}\,\Xi\in H^{-3/2}_{\ell=1}(\mathbb{R}^3)$, consistently with the bound \eqref{eq:T-ell-3/2-1/2_included}. Therefore, for $\Xi$ to be a solution to $(T_\lambda^{(1)}+\alpha\mathbbm{1})\,\Xi=0$, one must have $T_{\lambda=0}^{(1)}\,\Xi=0$, that is, the highest singularity must be cancelled.
Observe that the latter must be a \emph{large momentum} cancellation, independent of the cut-off $|p|\geqslant K$ that we imposed on $\Xi$ at \emph{small} $|p|$: indeed, similarly to the shift $\lambda>0$ needed for the direct application of the Kre{\u\i}n-Vi\v{s}ik-Birman extension formulas, we only cut $|p|\geqslant K$ so as to guarantee $\Xi$ not to be out of $H^{-\frac{1}{2}}(\mathbb{R}^3)$ at $p=0$, but we could have chosen as well to base our discussion on \emph{weak} Sobolev spaces, in analogy of what recently done in \cite{CDFMT-2015} where the quadratic form for the (2+1)-fermionic model was studied by means of homogeneous Sobolev spaces. Thus, we do expect to infer relevant information also when $K\to 0$. Explicitly, $T_{\lambda=0}^{(1)}\,\Xi=0$ in the limit $K\to 0$ reads
\[
\pi\sqrt{\nu}+\int_{-1}^1\!\ud y\,y\int_{0}^{+\infty}\!\ud r\,\frac{r^s}{\,r^2+1+\mu ry}\;=\;0\,,
\]
which is precisely the integral equation \eqref{eq:s-integral-equation} that fixes uniquely $s=s(m)\in(0,1)$ in terms of $m\in(m^*,m^{**})$.

This gives a strong evidence that for $m\in(m^*,m^{**})$ there exist non-zero $H^{-1/2}_{\ell=1}$-solutions to
$(T_\lambda^{(1)}+\alpha\mathbbm{1})\,\Xi=0$, whose singularity in momentum is $\sim|p|^{-2+s(m)}$ as $|p|\to +\infty$, and for $m\geqslant m^{**}$ there are instead none. Combining this with the three-fold dimensionality of the space of solutions on $L^2_{\ell=1}(\mathbb{R}^3)$ when $m\in (m^*,m^{**}_{\textsc{m}})$, we are lead to our main conjecture in Section \ref{sec:ext_scheme_for_A}

A further consistency argument is provided by the fact that in the physical literature it is well known that the Ter-Martirosyan--Skornyakov Hamiltonian, defined via non-rigorous reasoning and physical heuristics, is unambiguous only upon the specification of a further `three-body parameter', needed precisely in the range $(m^*,m^{**})$, and not only in the smaller range $(m^*,m^{**}_{\textsc{m}})$ found by Minlos. In fact, the explicit singularity $|p|^{-2+s(m)}$ of the charges results, via the argument presented in Section \ref{sec:3bodycond}, in a three-body asymptotics at the triple coincidence point that is precisely of the same form found in the physical literature.




\appendix

\section{Boundedness properties of $T^{(\ell)}_\lambda$ for $\ell\geqslant 1$}\label{appendix:bounds_ell_-1/2_3/2}

In this Appendix we prove the bound \eqref{eq:T-ell-3/2-1/2_included} of Proposition \ref{prop:T-W}.

Since the multiplicative part of $T_\lambda$ is obviously continuous from $H^s(\mathbb{R}^3)$ to $H^{s-1}(\mathbb{R}^3)$ for any $s\in\mathbb{R}$, it is clear that \eqref{eq:T-ell-3/2-1/2_included} is equivalent to 
\begin{equation}\label{eq:T-ell-3/2-1/2_included_off-diag}
\begin{split}
&\Big\|\int_{\mathbb{R}^3}\ud q\:\frac{\widehat{\xi}(q)}{p^2+q^2+\mu p\cdot q+\lambda}\Big\|_{H^{s-1}}\;\lesssim\;\|\xi\|_{H^{s}} \\
&\qquad\quad \forall \xi\in H_{\ell}^{s}(\mathbb{R}^3)\quad 
\begin{cases}
\;s\in\,\textstyle{[-\frac{1}{2},\frac{3}{2}]} \\
\quad \ell\geqslant 1\,.
\end{cases}
\end{split}
\end{equation}

In fact, the case $s\in(-\frac{1}{2},\frac{3}{2})$ for any $\ell\geqslant 0$, as well as the case $s=\frac{3}{2}$ for $\ell\geqslant 1$ are already covered in \cite{MO-2016}, as we quoted in the statement of Proposition \ref{prop:T-W}. Thus, we only need to include the case $s=-\frac{1}{2}$, $\ell=1$: we prove it here for completeness, because in \cite{MO-2016} precisely this case was not worked out explicitly.

To this aim, mimicking the proof of \cite[Proposition 4]{MO-2016}, we observe that the integral operator appearing in the l.h.s.~of \eqref{eq:T-ell-3/2-1/2_included_off-diag} acts non-trivially only on the radial part of $\widehat{\xi}$, and precisely as
\begin{equation}\label{eq:kernel qQ_l}
(\mathcal{Q}_{\lambda}^{(\ell)}f)(r)\;=\;2\pi\int_{-1}^{+1}\!\ud y P_\ell(y)\!\int_0^{+\infty}\!\!\frac{f(r')}{r^2+r'^2+\mu r r' y +\lambda}\,r'^2\ud r'\,,
\end{equation}
where
\begin{equation}\label{eq:P-Legendre}
P_\ell(y)\;=\;\frac{1}{2^\ell \ell!}\,\frac{\ud^\ell}{\ud y^\ell}(y^2-1)^\ell
\end{equation}
is the $\ell$-th Legendre polynomial. Thus, proving \eqref{eq:T-ell-3/2-1/2_included_off-diag} is equivalent to proving
\begin{equation}
\|(1+r^2)^{-\frac{3}{4}}(\mathcal{Q}_{\lambda}^{(\ell)}f)\|_{ L^2(\mathbb{R}^+,r^2\,\ud r)}\;\lesssim\;\|(1+r^2)^{-\frac{1}{4}}f\|_{ L^2(\mathbb{R}^+,r^2\,\ud r)}\,,
\end{equation}
which is in turn equivalent, setting $h(r):=rf(r)(1+r^2)^{-\frac{1}{4}}$, to the boundedness in $L^2(\mathbb{R}^+,\ud r)$ of the integral operator $h\mapsto \widetilde{\mathcal{Q}}_{\lambda}^{(\ell)} h$ defined by
\begin{equation}\label{eq:kernel qQtilde_l}
(\widetilde{\mathcal{Q}}_{\lambda}^{(\ell)}h)(r)\;:=\;\int_{-1}^{+1}\!\ud y P_\ell(y)\!\int_0^{+\infty}\!\!\frac{rr'\,(1+r'^2)^{\frac{1}{4}}\,h(r')}{(r^2+r'^2+\mu r r' y +\lambda)(1+r^2)^{\frac{3}{4}}}\,\ud r'\,.
\end{equation}
Using \eqref{eq:P-Legendre} and integrating by parts  $\ell\geqslant 1$ times in $y$ yields
\[
\begin{split}
(\widetilde{\mathcal{Q}}_{\lambda}^{(\ell)}h)(r)\;=\;\frac{(-1)^\ell}{2^\ell\ell!}\int_0^{+\infty}\!\!\!\ud r'\,\frac{rr'\,(1+r'^2)^{\frac{1}{4}}h(r')}{(1+r^2)^{\frac{3}{4}}}\int_{-1}^{+1}\!\ud y\,\frac{(y^2-1)^\ell(\mu r r')^\ell}{(r^2+r'^2+\mu r r' y +\lambda)^{\ell+1}}\,.
\end{split}
\]
Since $|y|\leqslant 1$, analogously to \eqref{eq:lambda-equiv-1}
\begin{equation}\label{eq:lambda-y-equiv-1}
(r^2+r'^2+\mu r r' y +\lambda)\;\sim\;(r_1^2+r_2^2+1)\;\geqslant 0\,.
\end{equation}
Then
\[
\begin{split}
|(\widetilde{\mathcal{Q}}_{\lambda}^{(\ell)}h)(r)|\;&\lesssim\;\int_0^{+\infty}\!\!\!\ud r'\,\frac{rr'\,(1+r'^2)^{\frac{1}{4}}\,|h(r')|}{(1+r^2)^{\frac{3}{4}}}\int_{-1}^{+1}\!\ud y\,\frac{(\mu r r')^\ell}{(r^2+r'^2+\mu r r' y +\lambda)^{\ell+1}} \\
&=\;\int_0^{+\infty}\!\!\!\ud r'\,\frac{rr'\,(1+r'^2)^{\frac{1}{4}}\,|h(r')|}{\ell\,(1+r^2)^{\frac{3}{4}}}\,(\mu r r')^{\ell-1}\,\times \\
&\qquad\qquad\qquad\times\Big(\frac{1}{(r^2+r'^2-\mu r r' y +\lambda)^{\ell}}-\frac{1}{(r^2+r'^2+\mu r r' y +\lambda)^{\ell}}\Big) \\
&\lesssim\;\int_0^{+\infty}\!\!\!\ud r'\,\frac{rr'\,(1+r'^2)^{\frac{1}{4}}\,|h(r')|}{(1+r^2)^{\frac{3}{4}}}\,(\mu r r')^{\ell}\,\times \\
&\qquad\qquad\qquad\times\,\frac{(r^2+r'^2+1)^{\ell-1}}{(r^2+r'^2-\mu r r' y +\lambda)^{\ell}(r^2+r'^2+\mu r r' y +\lambda)^{\ell}} \\
&\lesssim\;\int_0^{+\infty}\!\!\!\ud r'\,\frac{(rr')^{\ell+1}(1+r'^2)^{\frac{1}{4}}}{(1+r^2)^{\frac{3}{4}}(r^2+r'^2+1)^{\ell+1}}\,|h(r')|\;\equiv\;\int_0^{+\infty}\!\!\mathcal{K}_\lambda^{(\ell)}(r,r')\,|h(r')|\,\ud r'\,,
\end{split}
\]
where in the first and last step we used \eqref{eq:lambda-y-equiv-1}, and in the third step we used the formula $(a^\ell-b^\ell)=(a-b)\sum_{j=0}^{n-1}a^{n-j-1}b^{\,j}$ ($a,b\geqslant 0$) .
From
\[
\int_0^{+\infty}\!\!\frac{r'^{\ell+1}(1+r'^2)^{\frac{1}{4}}}{(r^2+r'^2+1)^{\ell+1}}\,\ud r'\;\lesssim\;(1+r'^2)^{-\frac{2\ell-1}{4}}
\]
we deduce
\[\tag{*}
\sup_{r>0}\int_0^{+\infty}\!\!\mathcal{K}_\lambda^{(\ell)}(r,r')\,\ud r'\;=\;\sup_{r>0}\frac{r^{\ell+1}}{(1+r^2)^{\frac{3}{4}}}\int_0^{+\infty}\!\!\frac{r'^{\ell+1}(1+r'^2)^{\frac{1}{4}}}{(r^2+r'^2+1)^{\ell+1}}\,\ud r'\;\lesssim\;1\,,
\]
and from
\[
\int_0^{+\infty}\!\!\frac{r^{\ell+1}}{(1+r^2)^{\frac{3}{4}}(r^2+r'^2+1)^{\ell+1}}\,\ud r\;\lesssim\;(1+r^2)^{-\frac{2\ell+3}{4}}
\]
we deduce
\[\tag{**}
\begin{split}
\sup_{r'>0}\int_0^{+\infty}\!\!&\mathcal{K}_\lambda^{(\ell)}(r,r')\,\ud r\;= \\
&=\;\sup_{r'>0}\:r'^{\ell+1}(1+r'^2)^{\frac{1}{4}}\!\int_0^{+\infty}\!\!\frac{r^{\ell+1}}{(1+r^2)^{\frac{3}{4}}(r^2+r'^2+1)^{\ell+1}}\,\ud r\;\lesssim\;1\,.
\end{split}
\]
A standard Schur test based on (*) and (**) implies $\|\widetilde{\mathcal{Q}}_{\lambda}^{(\ell)}h\|_2\lesssim\|h\|_2$, thus concluding the proof.

\section*{Acknowledgements}

\noindent We warmly thank S.~Albeverio, S.~Becker, G.~Dell'Antonio, and K.~Yajima for many stimulating and inspiring discussions.

\def\cprime{$'$}

\end{document}